\documentclass[11pt]{article}
\usepackage[margin=1in]{geometry}
\usepackage{amsmath,amssymb,amsthm,mathtools}
\usepackage{booktabs}
\usepackage{graphicx}
\usepackage{enumitem}
\usepackage{natbib}
\usepackage[colorlinks=true,linkcolor=blue,citecolor=blue,urlcolor=blue,hypertexnames=false]{hyperref}
\usepackage{microtype}
\usepackage{xurl}
\usepackage{float}
\usepackage{placeins}

\newtheorem{proposition}{Proposition}

\newtheorem{definition}{Definition}
\numberwithin{equation}{section}
\DeclareMathOperator*{\argmax}{arg\,max}
\newcommand{\dd}{\,d}
\newcommand{\E}{\mathbb E}

\graphicspath{{./}}

\title{Automation, Income Incidence, and Capital Accumulation in Incomplete Markets}
\author{Erhan Bayraktar\thanks{Department of Mathematics, University of Michigan. Email: \href{mailto:erhan@umich.edu}{erhan@umich.edu}. Supported in part by the National Science Foundation.}}
\date{July 2026}

\begin{document}
\maketitle

\begin{abstract}
This paper studies how automation changes the stationary distribution of income, consumption, and wealth in an incomplete-market economy. An automating sector trades off productivity gains and labor-cost savings against adoption costs. Households differ by skill and wealth, save in a capital/equity claim, and face uninsurable skill risk. Competitive factor prices and aggregate capital clear jointly with household Hamilton--Jacobi--Bellman equations and the stationary Kolmogorov forward equation. Automation affects consumption through labor-income incidence, precautionary saving, skill mobility, ownership of automation rents, and the stationary capital stock. In an adverse-incidence scenario with high exposure, adverse reskilling, capital obsolescence, and concentrated ownership, decentralized automation lowers stationary consumption and capital relative to the no-automation allocation. With stronger productivity and complementarity, lower obsolescence, and broader ownership, automation raises output, consumption, and capital. Reversing the assumed skill-mobility response also raises consumption, output, and capital substantially at a fixed automation intensity. A proxy diagnostic combining U.S. evidence on AI adoption, investment, labor-income pass-through, equity ownership, and marginal propensities to consume places the current economy near the boundary between the two scenarios. The model provides a quantitative framework for separating automation's aggregate gains from its distributional incidence.
\end{abstract}

\paragraph{Keywords.} Artificial intelligence; automation; heterogeneous agents; incomplete markets; stationary equilibrium; capital accumulation; income incidence; ownership.

\paragraph{AMS subject classifications (2020).} 91B50; 91B40; 91B70; 91A13; 91A15; 49L20; 35Q91; 65M06.

\paragraph{JEL classifications.} C63; D31; E21; E24; E27; E60; H21; J23; J24; O33.

\section{Introduction}

Automation affects more than production. It changes who receives labor income, who owns the resulting rents, and how much households save. In an incomplete-market economy those changes alter the stationary wealth distribution, the capital stock, and factor prices. This paper asks when productivity, task exposure, skill mobility, capital obsolescence, and rent ownership combine to raise or lower stationary consumption and capital.

The mechanism is an income-incidence channel in a real, flexible-price general equilibrium. Automation changes the allocation of output among consumption, replacement investment, adoption costs, and claims accruing outside the household sector. Labor-income risk and rent ownership then shape household consumption and saving:
\[
\begin{aligned}
\text{automation}
&\Longrightarrow \text{labor income, skill risk, and rent ownership}\\
&\Longrightarrow \text{household consumption and saving policies}\\
&\Longrightarrow \text{stationary wealth, capital, and factor prices}.
\end{aligned}
\]

The model combines an automating sector with a continuous-time incomplete-market household block. Households differ by wealth and by a two-state skill process. They choose consumption and saving, and their invariant distribution solves a KFE. A competitive final-good firm determines wages and the productive-capital return. Automation also generates rents; the share passed through to domestic asset holders determines the gap between the productive-capital return and the return received by households. The stationary equilibrium is computed by resolving the HJB--KFE and market-clearing system for each automation intensity.

The first contribution is to connect the task-based automation literature to heterogeneous-agent capital accumulation. Automation displaces some paid tasks and complements others, as in \citet{acemoglu2018,acemoglu2019,acemoglu2020wrong,acemoglu2020robots}. In Bewley--Huggett--Aiyagari economies, income risk and incomplete insurance determine saving and capital \citep{bewley1986,huggett1993,aiyagari1994}. Here the automation technology changes both the level and composition of labor income, so the stationary distribution becomes part of the incidence calculation. The ownership channel adds a second margin: households may lose labor income yet gain through claims on productive capital and automation rents.

In incomplete-market economies, pecuniary price effects can generate constrained inefficiency when they interact with missing markets or financial constraints. Establishing such inefficiency requires a welfare and feasibility benchmark of the kind developed by \citet{geanakoplospolemarchakis1986} and \citet{davilakorinek2018}. The present analysis instead characterizes the incidence of automation across stationary competitive equilibria.

The second contribution is a transparent decomposition of two scenario families. The \emph{adverse-incidence} scenario combines high low-skill exposure, automation-induced obsolescence, adverse reskilling, and concentrated ownership. The \emph{productivity-led capital-growth} scenario combines stronger productivity and high-skill complementarity with lower obsolescence and exposure. These are comparative-static scenarios, not two structural estimates. Their role is to show which assumptions account for the sign of the stationary consumption and capital responses.

The third contribution is computational. The numerical method follows the continuous-time HJB--KFE finite-difference approach of \citet{achdou2022}. The appendix states the finite-grid equilibrium actually solved, gives sufficient conditions for existence and conditional uniqueness on that grid, and reports market-clearing and residual checks. The contribution lies in the automation application, equilibrium decomposition, and reproducible scenario analysis. The empirical extension then maps the model's incidence terms into a transparent proxy diagnostic using current U.S. data.

The quantitative exercises use empirical guideposts for the skill wage ratio, AI exposure, and equity ownership. Automation elasticities governing reskilling and obsolescence remain scenario parameters, so the analysis reports neutral and reversed skill-mobility cases alongside the adverse-reskilling baseline. The reported allocations are stationary comparisons. A reduced-form policy index over consumption and exposed labor income organizes the instrument experiments; it is a political-economy criterion rather than a household-welfare ordering.

The paper proceeds as follows. Section 2 gives a diagnostic benchmark. Section 3 defines the stationary heterogeneous-agent equilibrium. Section 4 reports the numerical scenarios and robustness exercises. Section 5 studies tax, rebate, and ownership incidence. Section 6 develops the empirical regime diagnostic and policy experiment. Section 7 describes the replication code, and Section 8 concludes. The appendix documents the finite-difference method, finite-grid equilibrium result, and empirical validation exercises.

\section{Diagnostic Benchmark}

There is a unit mass of households split into skill groups \(s\in\{U,H\}\), where \(U\) denotes low-skilled or unskilled workers and \(H\) denotes high-skilled workers. The automation intensity is \(a\in[0,\bar a]\). Automation has three primitive effects. First, it raises Hicks-neutral productivity in the sense of a multiplicative technology shifter \citep{hicks1932,acemoglu2008},
\[
Z(a)=Z_0\exp(\psi_Za),\qquad \psi_Z>0.
\]
Second, it reduces paid low-skill human tasks and complements high-skill paid tasks,
\[
h_U(a)=\exp(-\chi_Ua),\qquad h_H(a)=\exp(\beta_Ha),
\]
with \(\chi_U>0\) and \(\beta_H>0\). Third, it changes production-task services,
\[
\ell_U(a)=\exp(-\xi_Ua),\qquad \ell_H(a)=\exp(\eta_Ha).
\]
The distinction between \(h_s\) and \(\ell_s\) is important. The firm pays wages on paid human tasks \(H\), but production uses total task services \(L\), which include automated task services. Define
\[
L(a)=\sum_s m_se_s\ell_s(a),\qquad H(a)=\sum_s m_se_sh_s(a).
\]
Output is
\[
Y(a,K)=Z(a)K^\alpha L(a)^{1-\alpha}.
\]
The efficiency-unit wage is
\[
w(a,K)=(1-\alpha)Z(a)K^\alpha L(a)^{-\alpha},
\]
and total collective wage bill is
\[
B(a,K)=w(a,K)H(a).
\]
The exposed low-skill collective wage bill is
\[
B_U(a,K)=w(a,K)m_Ue_Uh_U(a).
\]

A representative automating firm chooses \(a\) from
\[
\Pi(a;K,w)=Z(a)K^\alpha L(a)^{1-\alpha}-wH(a)-\phi a-\frac{\kappa}{2}a^2-P(a),
\]
where \(P(a)\) is a policy payment. Holding \((K,w)\) fixed, the private marginal benefit of automation is
\[
M(a;K,w)=
\underbrace{\left[\psi_Z+(1-\alpha)\frac{L_a(a)}{L(a)}\right]Y(a,K)}_{\text{productivity and production-task effect}}
+\underbrace{w[-H_a(a)]}_{\text{labor-cost effect}}.
\]
The private first-order condition is
\[
M(a;K,w)-\phi-\kappa a-P_a(a)=0.
\]
Let
\[
\iota=P_a(a)
\]
denote the marginal implementation wedge. A positive \(\iota\) discourages automation. A negative \(\iota\) subsidizes automation.

For comparison, define a reduced-form policy-maker index over stationary household-side outcomes:
\[
a^P\in\argmax_{a\in[0,1]}G_\mu(a),
\qquad
G_\mu(a)=\lambda C(a)+\mu B_U(a),
\]
where \(C(a)\) is household consumption and \(B_U(a)\) is labor income received by exposed low-skilled workers. The parameters \(\lambda,\mu\geq0\) are political-economy weights, not household preference parameters. Accordingly, \(a^P\) is only the maximizer of this index. It is not a planner allocation, and the ordering of \(a^P\) and the private choice has no efficiency content.

For any interior index target \(a^P\), the marginal payment that implements it in the firm's static first-order condition is
\[
\iota^*=M(a^P;K,w)-\phi-\kappa a^P.
\]
At a boundary target \(a^P=0\), any tax above the minimal deterrent value \(\bar\tau=M(0;K,w)-\phi\) implements the boundary.

The diagnostic comparison is summarized by the derivative
\[
G_\mu'(a)=\lambda C_a(a)+\mu B_{U,a}(a).
\]
When the private residual \(F(a)=M(a;K,w)-\phi-\kappa a\) has a unique root, the sign of this derivative at that root indicates whether the index continues to rise or has begun to fall. Appendix \ref{app:diagnostic_benchmark} decomposes the derivative into low- and high-skill exposure, mobility, productivity, and depreciation terms. The decomposition organizes the channels before prices, saving, and the wealth distribution are allowed to adjust.

\section{Stationary Heterogeneous-Agent General Equilibrium}

\subsection{Households}

Time is continuous. A unit mass of households has wealth \(k\geq \underline k\) and skill state \(s\in\mathcal S=\{U,H\}\). The skill state follows
\[
Q(a)=
\begin{pmatrix}
-q_{UH}(a)&q_{UH}(a)\\
q_{HU}(a)&-q_{HU}(a)
\end{pmatrix},
\qquad
q_{UH}(a)=q_0e^{-\zeta a},
\qquad
q_{HU}(a)=q_0e^{\zeta a}.
\]
The baseline \(\zeta>0\) is an adverse-reskilling scenario: automation lowers the transition rate from \(U\) to \(H\) and raises the reverse rate. This direction is not imposed as an empirical fact. Section \ref{subsec:mobility_robustness} reports \(\zeta=0\) and \(\zeta<0\), the latter representing AI-assisted upward mobility.

Let \(g_s(k)\) be the stationary density. Aggregate capital is
\[
K=\sum_s\int kg_s(k)\dd k.
\]
Lowercase \(k\) is an individual household's asset state; uppercase \(K\) is the integral of those assets across the stationary distribution and equals productive capital in equilibrium.
Production-task labor and paid human-task labor are
\[
L(a;g)=\sum_s\int e_s\ell_s(a)g_s(k)\dd k,
\qquad
H(a;g)=\sum_s\int e_sh_s(a)g_s(k)\dd k.
\]
The wage-cost exposure index is
\begin{equation}\label{eq:lambda_h}
\Lambda_H(a;g)=-H_a(a;g)=-\sum_s\int e_sh_s'(a)g_s(k)\dd k.
\end{equation}
A household in state \(s\) receives labor income
\[
y_s^L(a)=w e_sh_s(a).
\]
The variable \(w\) is a wage per efficiency unit, not a literal common wage paid to every worker. High-skilled households can earn more because \(e_H>e_U\) and because AI raises \(h_H(a)\).

\subsection{Prices and market clearing}

The final-good firm is competitive and has Cobb--Douglas technology
\[
Y=Z(a)K^\alpha L^{1-\alpha}.
\]
Given \((K,L,a)\), it solves
\[
\max_{K,L}\left\{Z(a)K^\alpha L^{1-\alpha}-(r+\delta(a))K-wL\right\},
\]
where
\[
\delta(a)=\delta_0+\delta_Aa.
\]
The term \(\delta_Aa\) captures obsolescence and reorganization of legacy physical capital when AI is adopted. This is why the model can generate lower stationary \(K\) in the high-automation equilibrium even though automation raises \(Z(a)\). The first-order conditions are
\[
r=\alpha Z(a)K^{\alpha-1}L^{1-\alpha}-\delta(a),
\qquad
w=(1-\alpha)Z(a)K^\alpha L^{-\alpha}.
\]
Constant returns imply zero pure profits for the final-good firm:
\[
Y-(r+\delta(a))K-wL=0.
\]
Thus \(r\) is the real return on productive capital, or the investment opportunity. The price of installed capital is normalized to one, so household wealth and aggregate capital are measured in units of the final good. Consequently, the model has a real capital return \(r\), but no separate Tobin's \(q\) or stock-market price for capital \citep{tobin1969}.

Automation creates a rent by reducing paid human tasks below total production-task services. Let
\[
\Phi^A(a)=\phi a+\frac{\kappa}{2}a^2
\]
be the real resource cost of automation. Net automation rents are
\[
\Pi^A(a;g)=w[L(a;g)-H(a;g)]-\Phi^A(a)-\tau a.
\]
A share \(\theta_E\in[0,1]\) is owned by domestic households, so the household asset return is
\[
R=r+\theta_E\frac{\Pi^A(a;g)}{K}.
\]
A low-skilled but wealthy household can therefore be partly insulated from automation through capital and automation-rent income.

If the policy is a literal tax, fiscal closure is required. The firm pays \(\tau a\). A fraction \(\omega_T\) is lost to administrative or political-economy frictions, and the remaining part is rebated through a schedule \(T_s(k)\):
\begin{equation}
T_s(k)=(1-\omega_T)\tau a\, b_s(k),
\qquad
b_s(k)\geq0,
\qquad
\sum_s\int b_s(k)g_s(k)\dd k=1.
\label{eq:rebate_schedule}
\end{equation}
The function \(b_s(k)\) is the rebate kernel: \(b_s(k)=1\) gives a lump-sum rebate, while kernels increasing or decreasing in income generate proportional or progressive rebates. Here \(T_s(k)\) is the household rebate generated by the tax revenue.

Given a stationary environment \((a,w,R,T,Q)\), a household starts from \((k,s)\) and solves
\[
V_s(k)=
\sup_{\{c_t\}_{t\geq0}}
\E_{k,s}\left[\int_0^\infty e^{-\rho t}u(c_t)\dd t\right]
\]
subject to
\[
\dot k_t=Rk_t+w e_{s_t}h_{s_t}(a)+T_{s_t}(k_t)-c_t,
\qquad k_t\geq\underline k,
\]
and the Markov transition matrix \(Q(a)\). The corresponding stationary HJB is
\[
\rho V_s(k)=\max_{c\geq0}\left\{u(c)+V_s'(k)[Rk+w e_sh_s(a)+T_s(k)-c]+\sum_{s'}q_{ss'}(a)[V_{s'}(k)-V_s(k)]\right\},
\]
where
\[
u(c)=\frac{c^{1-\gamma}}{1-\gamma}.
\]
The optimality condition and savings drift are
\[
u'(c_s(k))=V_s'(k),
\qquad
\dot k_s(k)=Rk+w e_sh_s(a)+T_s(k)-c_s(k).
\]
Thus the law of motion for \(k\) is the household budget constraint evaluated at the optimal consumption policy. The productive-capital return \(r\) is the marginal product net of depreciation. The household return \(R\) additionally includes the domestically owned share of automation rents; hence \(R\) and \(r\) coincide only when that rent yield is zero.
The stationary Kolmogorov forward equation is
\[
0=-\partial_k[\dot k_s(k)g_s(k)]+
\sum_{s'}q_{s's}(a)g_{s'}(k)-\sum_{s'}q_{ss'}(a)g_s(k).
\]

The stationary goods-market-clearing condition is
\begin{equation}
\boxed{
\mathcal E^{goods}\equiv Y-\left[C+\delta(a)K+\Phi^A(a)+\omega_T\tau a+(1-\theta_E)\Pi^A(a;g)\right]=0.
}
\label{eq:goods_residual}
\end{equation}
The residual in \eqref{eq:goods_residual} is the Walrasian accounting identity for the one-good economy \citep{arrowdebreu1954,mascolell1995}. Once household budgets, firm zero profit, the government budget, and capital-market clearing hold, the goods market clears by Walras' law; because the model has automation costs, fiscal leakage, and foreign ownership, the residual is reported explicitly in the numerical results.

\subsection{Firm automation and the stationary fixed point}

Taking prices and the household distribution as given, the automating sector chooses \(a\) from
\[
\Pi(a)=Z(a)K^\alpha L(a;g)^{1-\alpha}-wH(a;g)-\Phi^A(a)-P(a).
\]
The private marginal benefit of automation, before policy, is
\[
\mathcal M(a;g,K,w)
=
\left[\psi_Z+(1-\alpha)\frac{L_a(a;g)}{L(a;g)}\right]Y+w\Lambda_H(a;g).
\]
Thus the firm's automation first-order condition is
\begin{equation}
\mathcal M(a;g,K,w)-\phi-\kappa a-\iota=0,
\qquad \iota=P_a(a).
\label{eq:firm_automation_foc}
\end{equation}
The derivative in \eqref{eq:firm_automation_foc} is the sector's direct profit derivative at fixed \((g,K,w)\), not the total derivative of aggregate consumption or prices. The decentralized economy has \(\iota=0\). Numerically, for each candidate \(a\) the household and market-clearing block is resolved, the direct profit derivative is evaluated at those stationary objects, and \(a^{D}\) is a root of \eqref{eq:firm_automation_foc}.

\begin{definition}[Stationary competitive equilibrium]\label{def:stationary_equilibrium}
For a fixed policy rule, a stationary equilibrium is a collection
\[
\{a,r,w,R,K,V_s,c_s,g_s,T_s\}_{s\in\{U,H\}}
\]
such that: (i) \(V_s\) and \(c_s\) solve each household's infinite-horizon problem; (ii) \(g_s\) is invariant under the induced savings drift and skill transitions; (iii) the final-good firm satisfies its static factor-demand conditions; (iv) household asset supply equals productive capital \(K\); (v) fiscal and goods-market accounts clear; and (vi) \(a\) satisfies the automating sector's private optimality condition, including any policy payment.
\end{definition}

Households continue to move across wealth and skill states in a stationary equilibrium, while the cross-sectional distribution remains invariant. Appendix \ref{app:finite_grid_equilibrium} establishes existence and conditional uniqueness for the finite-grid system used in the computation. Transition dynamics and dynamic policy are natural extensions of the stationary analysis.

\subsection{Policy-maker index and equilibrium derivative}
\label{subsec:stationary_marginal_derivative}

This subsection defines the reduced-form index used in the quantitative comparisons. The full stationary economy determines prices, capital, consumption policies, skill masses, and the wealth distribution jointly. For each automation value \(a\), write
\[
g^a,K(a),L(a),H(a),C(a),w(a),r(a),R(a)
\]
for the corresponding market-clearing objects. The policy-maker index is
\[
G_\mu^{\mathrm{stat}}(a)=\lambda C(a)+\mu B_U(a),
\qquad
B_U(a)=w(a)\int e_Uh_U(a)g_U^a(k)\,dk,
\]
where \(B_U(a)\) is the collective wage bill paid to exposed low-skilled households. The marginal object is therefore
\[
\frac{dG_\mu^{\mathrm{stat}}(a)}{da}=\lambda C_a(a)+\mu B_{U,a}(a).
\]
This derivative is the bridge between the diagnostic benchmark and the computation. It ranks stationary allocations under the chosen political-economy weights. If the index target \(a^P\) is interior, the wedge that implements it is obtained by evaluating the private condition \eqref{eq:firm_automation_foc} at the target:
\[
\iota^*=\mathcal M(a^P;g^{a^P},K(a^P),w(a^P))-\phi-\kappa a^P.
\]
If \(a^P=0\), the minimal boundary tax is
\[
\bar\tau=\mathcal M(0;g^0,K(0),w(0))-\phi.
\]

The derivative \(C_a(a)\) is a total equilibrium derivative, not a partial derivative holding prices or the distribution fixed:
\[
C_a(a)=
\sum_s\int \frac{\partial c_s(k,a)}{\partial a}g_s^a(k)\,dk
+
\sum_s\int c_s(k,a)\frac{\partial g_s^a(k)}{\partial a}\,dk.
\]
The first term is the change in consumption policies at a fixed distribution. The second term is the change in the stationary distribution. The numerical section evaluates this total derivative by resolving the stationary market-clearing HJB--KFE system across automation values.

Appendix \ref{app:diagnostic_benchmark} gives a stripped-down no-wealth-heterogeneity benchmark in which \(K(a)\), \(w(a)\), and the skill masses can be eliminated analytically. That benchmark provides closed-form sign conditions and predicted automation levels. The finite-grid heterogeneous-agent equilibrium used in the numerical section is treated separately in Appendix \ref{app:finite_grid_equilibrium}.

\section{Stationary Numerical Results}

This section organizes the quantitative evidence into four related exercises. The baseline describes the technology and market-clearing mechanics. The regime comparison contrasts adverse-incidence with productivity-led capital growth. The incidence results report wealth distributions, consumption functions, and stationary consumption-equivalent comparisons. Section 5 then studies fiscal instruments and ownership. Figures \ref{fig:z_scenarios}--\ref{fig:prices} show the technology, labor-income, and market-clearing mechanics. Figure \ref{fig:goods} and Tables \ref{tab:resource_grid}--\ref{tab:computed_regimes} show goods-market accounting and the two AI regimes. Figures \ref{fig:density}--\ref{fig:consumption_gain} show distributional incidence and consumption smoothing in the adverse-incidence baseline. The productivity-led figures in Subsection \ref{subsec:capital_growth_case} repeat the same analysis for the capital-growth case.

\subsection{Calibration and data discipline}

The baseline calibration is
\[
\alpha=0.36,\quad \delta_0=0.06,\quad \delta_A=0.25,\quad Z_0=1,
\quad \psi_Z=0.18,
\]
\[
\rho=0.15,\quad \gamma=2,
\quad \phi=0.01,
\quad \kappa=0.52.
\]
Skill efficiencies are
\[
e_U=0.75,
\qquad e_H=1.25,
\]
so \(e_H/e_U=1.67\), close to the BLS and Pew high-/low-wage ratios discussed in the introduction. Task functions are
\[
h_U(a)=e^{-3.20a},\qquad h_H(a)=e^{0.35a},
\]
\[
\ell_U(a)=e^{-2.50a},\qquad \ell_H(a)=e^{0.55a}.
\]
Thus automation both displaces low-skilled paid tasks and complements high-skilled labor. This sign combination creates a private benefit for high-skill users and a strong adoption incentive; it is one scenario rather than a necessary feature of automation. High-skilled workers nevertheless remain a minority in the computed high-automation economy. The automation-dependent transition rates make the stationary mass of high-skilled households about 0.313 at the decentralized allocation, while the low-skilled mass is about 0.687. Hence high-skilled workers benefit per capita, but there are fewer of them and they do not form the main consumption base. The cost parameters are selected so that the adverse-incidence scenario produces a decentralized automation index near one half. The index is a task-technology intensity. It should not be read as a literal employment share.

The exponential specifications are reduced-form constant-semi-elasticity functions. They preserve positivity and make the exposure parameters transparent; they are not derived from an organizational model of AI adoption. The signs \(h_U',\ell_U'<0\) and \(h_H',\ell_H'>0\) define a displacement/complementarity scenario, while their magnitudes are selected scenario inputs. The same is true of \(\delta_A\) and \(\zeta\). The baseline jointly selects strong low-skill exposure, adverse reskilling, and legacy-capital obsolescence, so it should be interpreted as an adverse-incidence stress test rather than a central estimate.

Table \ref{tab:all_parameters} lists the full baseline calibration. The symbol \(\theta_E\) denotes domestic ownership/pass-through of automation rents. The baseline policy has \(\tau=0\). Positive taxes are introduced only in the fiscal experiments in Section~5.

\begin{table}[!htbp]
\centering
\caption{Full baseline parameter list}
\label{tab:all_parameters}
\small
\begin{tabular}{p{0.16\textwidth}p{0.56\textwidth}p{0.20\textwidth}}
\toprule
Symbol & Meaning & Baseline value \\
\midrule
\(\alpha\) & capital share in final-good production & 0.36 \\
\(Z_0\) & baseline productivity & 1.00 \\
\(\psi_Z\) & productivity gain from automation & 0.18 \\
\(\delta_0\) & baseline depreciation & 0.06 \\
\(\delta_A\) & automation-induced obsolescence of legacy capital & 0.25 \\
\(\rho\) & household discount rate & 0.15 \\
\(\gamma\) & coefficient of relative risk aversion & 2.00 \\
\(\phi\) & linear resource cost of automation & 0.01 \\
\(\kappa\) & convex resource cost of automation & 0.52 \\
\(e_U,e_H\) & skill efficiency levels & (0.75, 1.25) \\
\(\chi_U,\beta_H\) & paid-task exposure/complementarity in \(h_s(a)\) & (3.20, 0.35) \\
\(\xi_U,\eta_H\) & production-task exposure/complementarity in \(\ell_s(a)\) & (2.50, 0.55) \\
\(q_0,\zeta\) & transition-rate level and automation tilt & (0.50, 0.75) \\
\(\lambda\) & policy-index weight on aggregate consumption & 0.60 \\
\(\mu\) & policy-index weight on exposed low-skill labor income & 1.00 \\
\(\theta_E\) & domestic ownership/pass-through share of automation rents & 0.45 \\
\(\omega_T\) & fiscal friction/loss share from tax collection & 0.15 \\
\(\tau\) & automation tax in baseline / fiscal experiments & 0 / varied \\
\bottomrule
\end{tabular}
\end{table}

The parameter \(\theta_E\) deserves special interpretation. It is not the fraction of households who own stocks. It is the share of net automation rents that is passed through to domestic households through capital and equity claims. The value \(\theta_E=0.45\) is the central pass-through benchmark. The value \(\theta_E=0.15\) is retained as a low-pass-through robustness case, while \(\theta_E=0.75\) is a high domestic-ownership case. This distinction matters because U.S. equity ownership is domestically large but highly concentrated; moreover, automation rents may be retained inside firms, capitalized into prices, accrue to managers and suppliers, or be owned by foreign investors. The calibration therefore separates the aggregate ownership share from the distribution of ownership across wealth groups.

\paragraph{Interpreting the productivity shifter.}
The choice of \(Z(a)\) should be read as a calibrated technology shifter, not as a direct estimate of the mechanical productivity effect of every AI application. With
\[
Z(a)=Z_0e^{\psi_Za},
\]
a full-index productivity gain of \(g_Z\) corresponds to
\[
\psi_Z=\log(1+g_Z).
\]
For example, a 15 percent full-adoption productivity gain corresponds to \(\psi_Z\simeq0.14\). The baseline value \(\psi_Z=0.18\) therefore implies a roughly 20 percent full-index gain. That is deliberately moderate relative to highly optimistic AI forecasts. Goldman Sachs Research has estimated that generative AI could raise global GDP by about 7 percent and, in later labor-market work, that full adoption could raise labor productivity in developed markets by around 15 percent. McKinsey estimates that generative AI could add \$2.6--\$4.4 trillion annually and contribute additional productivity growth depending on adoption and redeployment of labor \citep{goldman2023gdp,goldman2026labor,mckinsey2023genai}. Figure \ref{fig:z_scenarios} therefore reports \(Z(a)\) under three interpretations: an empirical guidepost, the baseline calibration, and a productivity-led counterfactual.

\begin{figure}[H]
\centering
\includegraphics[width=0.82\textwidth]{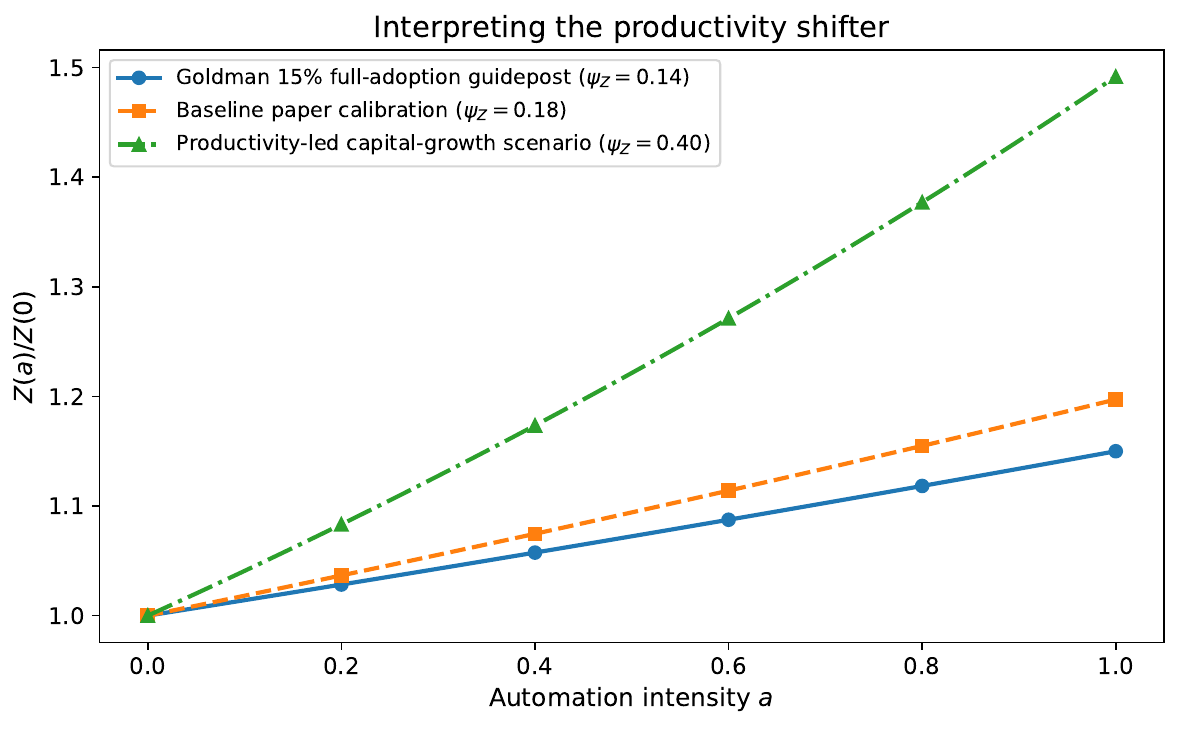}
\caption{Interpreting \(Z(a)=Z_0e^{\psi_Za}\). The baseline productivity gain is moderate; the productivity-led scenario deliberately allows stronger output gains.}
\label{fig:z_scenarios}
\end{figure}

The baseline political-economy weights are \(\lambda=0.6\) and \(\mu=1\), and the calculations also report \(\mu=0\). These weights affect only the reduced-form policy index; they do not enter household utility or the decentralized equilibrium.

Before Table \ref{tab:data_targets}, the marginal propensity to consume (MPC) terminology is defined mathematically. The reported MPC is the local wealth derivative of the consumption policy,
\begin{equation}
\mathrm{MPC}_s^k(k,a)=\frac{\partial c_s(k,a)}{\partial k}.
\label{eq:wealth_mpc}
\end{equation}
Thus a ``high-MPC exposed group'' means that households losing labor income have steep consumption functions near the borrowing constraint. Labor-income and policy changes are evaluated by resolving the full stationary equilibrium.

\begin{table}[!htbp]
\centering
\caption{Empirical targets and model discipline}
\label{tab:data_targets}
\small
\begin{tabular}{p{0.16\textwidth}p{0.50\textwidth}p{0.24\textwidth}}
\toprule
Object & Empirical discipline & Model use \\
\midrule
Skill wage ratio & BLS reports 2024 median weekly earnings of \$1,543 for bachelor's-degree workers and \$930 for high-school workers. Pew reports \$33/hour in high-AI-exposure jobs versus \$20/hour in least-exposed jobs. & Set \(e_H/e_U=1.67\). \\
AI complementarity & Pew reports that AI exposure is concentrated among higher-paid, college-educated workers. & Motivates, but does not identify, the scenario \(h_H'(a)>0\) and \(\ell_H'(a)>0\). \\
High-MPC exposure & Fiscal evidence emphasizes large MPCs among liquidity-constrained and hand-to-mouth households. & Low-wealth households have high local MPCs from the HJB solution. \\
Equity ownership & The Federal Reserve Distributional Financial Accounts track concentrated corporate-equity ownership. & Vary domestic ownership \(\theta_E\). \\
Skill mobility & No causal target is imposed for the effect of AI on reskilling. & Report adverse, neutral, and reversed mobility tilts. \\
Capital obsolescence & No direct estimate disciplines \(\delta_A\). & Vary \(\delta_A\) across stress-test and productivity-led scenarios. \\
\bottomrule
\end{tabular}
\end{table}

\subsection{Skill-mobility robustness}\label{subsec:mobility_robustness}

The sign of the skill-transition response is consequential. Table \ref{tab:mobility_robustness} holds automation fixed at the baseline decentralized intensity \(a=0.526\) and resolves the stationary equilibrium under three transition tilts. The baseline \(\zeta=0.75\) shifts mass toward the low-skill state. The neutral case sets \(\zeta=0\). The reversed case \(\zeta=-0.75\) represents AI-assisted upward mobility. All other parameters are unchanged.

\begin{table}[!htbp]
\centering
\caption{Skill-mobility robustness at fixed automation \(a=0.526\)}
\label{tab:mobility_robustness}
\begin{tabular}{lccccccc}
\toprule
Mobility tilt & \(\zeta\) & \(m_U\) & \(m_H\) & \(K\) & \(C\) & \(Y\) & \(r\) \\
\midrule
Reversed / upward & -0.75 & 0.312 & 0.688 & 2.896 & 1.162 & 1.822 & 0.0350 \\
Neutral & 0.00 & 0.500 & 0.500 & 2.687 & 0.889 & 1.503 & 0.0099 \\
Adverse baseline & 0.75 & 0.688 & 0.312 & 2.036 & 0.609 & 1.088 & 0.0009 \\
\bottomrule
\end{tabular}
\begin{minipage}{0.94\textwidth}
\vspace{0.35em}
\footnotesize
Each row resolves the household HJB--KFE and market-clearing system at the same \(a\). Absolute capital-market residuals are below \(5\times10^{-5}\), and absolute goods-market residuals are below \(3\times10^{-7}\).
\end{minipage}
\end{table}

Reversing the mobility tilt substantially raises the high-skill mass, consumption, output, and capital at the same automation intensity. The adverse transition matrix is therefore one driver of the baseline outcome, not an innocuous normalization. This sensitivity is why the quantitative results are presented as scenario comparisons rather than as a general prediction about AI.

\subsection{Baseline equilibrium and why firms automate}

For any policy \(\tau\), the automation number reported in the decentralized row is endogenous conditional on the scenario parameter vector. It is the firm's profit-maximizing stationary choice after the household distribution and interest rate have cleared. I denote it by \(a^{D}(\tau)\). In the no-tax baseline, \(a^{D}=a^{D}(0)\) solves the firm condition \eqref{eq:firm_automation_foc}:
\[
\mathcal E_a(a,0)
=
\left[\psi_Z+(1-\alpha)\frac{L_a(a;g^a)}{L(a;g^a)}\right]Y(a)
+w(a)\Lambda_H(a;g^a)-\phi-\kappa a=0.
\]
Here \(g^a\), \(K(a)\), \(w(a)\), and \(r(a)\) are the stationary market-clearing objects associated with that value of automation. The superscript \(D\) denotes the decentralized stationary Aiyagari/Huggett allocation, not a planner choice. Table \ref{tab:baseline_ge} compares it with the maximizer of the reduced-form policy index.

\begin{table}[!htbp]
\centering
\caption{Stationary allocations in the adverse-incidence scenario}
\label{tab:baseline_ge}
\resizebox{\textwidth}{!}{%
\begin{tabular}{lcccccccccc}
\toprule
Allocation & $a$ & $K$ & $L$ & $H$ & $r$ & $R$ & $w$ & $Y$ & $C$ & $B$ \\
\midrule
Decentralized & 0.526 & 2.036 & 0.674 & 0.565 & 0.001 & 0.006 & 1.055 & 1.088 & 0.609 & 0.597 \\
Policy-index target & 0.000 & 2.540 & 1.000 & 1.000 & 0.138 & 0.138 & 0.895 & 1.399 & 1.246 & 0.895 \\
\bottomrule
\end{tabular}}
\begin{minipage}{0.93\textwidth}
\vspace{0.35em}
\footnotesize
The policy index uses \((\lambda,\mu)=(0.6,1)\). The minimal boundary tax that implements its \(a=0\) target is \(\bar\tau=0.589\), and it collects zero revenue in that stationary allocation. The absolute goods-market residual is below \(10^{-7}\) in both rows.
\end{minipage}
\end{table}

The decentralized firm chooses \(a^{D}=0.526\), while the policy index is maximized at \(a^P=0\) under the reported weights. Figure \ref{fig:skill_income} shows that high-skilled labor income rises sharply with automation. At the decentralized allocation, low-skilled labor income per worker is about \(0.147\), while high-skilled labor income is about \(1.586\). At \(a=0\), the corresponding values are \(0.671\) and \(1.119\). The direct productivity, complementarity, and labor-cost terms explain the private choice; the stationary comparison records the accompanying incidence across skill and wealth groups.

In this adverse-incidence scenario, moving from \(a=0\) to the decentralized intensity lowers low-skilled income, stationary consumption, and the capital stock. Capital falls from \(2.540\) to \(2.036\). This comparison reflects several simultaneous assumptions: automation raises \(Z(a)\), but it also reduces production-task labor, increases effective depreciation through \(\delta(a)\), and shifts the skill distribution toward \(U\). It should not be attributed to a standalone aggregate-demand force.

\begin{figure}[H]
\centering
\includegraphics[width=0.82\textwidth]{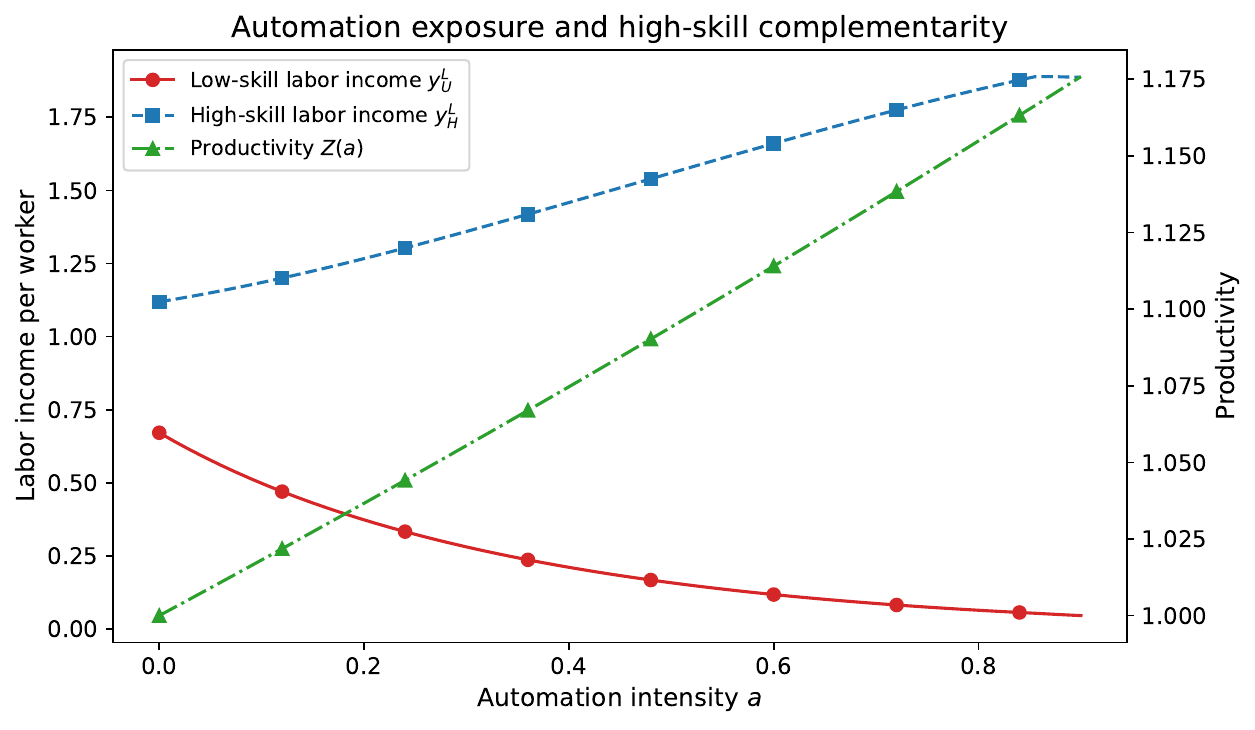}
\caption{Automation exposure and high-skill complementarity. Low-skilled labor income falls with automation, while high-skilled labor income rises. This is the private benefit that makes automation attractive to firms and high-skilled workers.}
\label{fig:skill_income}
\end{figure}

\begin{figure}[H]
\centering
\includegraphics[width=0.82\textwidth]{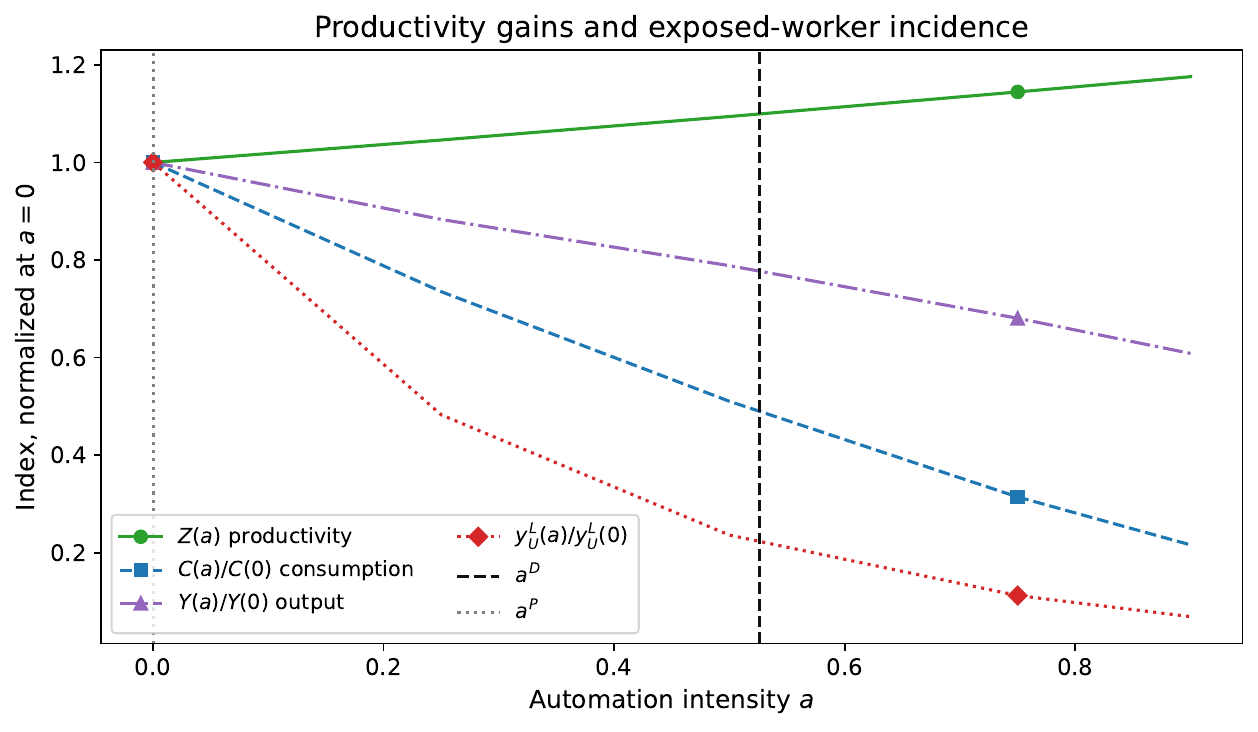}
\caption{Productivity and stationary-allocation responses. The figure reports the market-clearing stationary equilibrium associated with each automation value up to \(a=0.90\).}
\label{fig:tradeoff}
\end{figure}

\begin{figure}[H]
\centering
\includegraphics[width=0.82\textwidth]{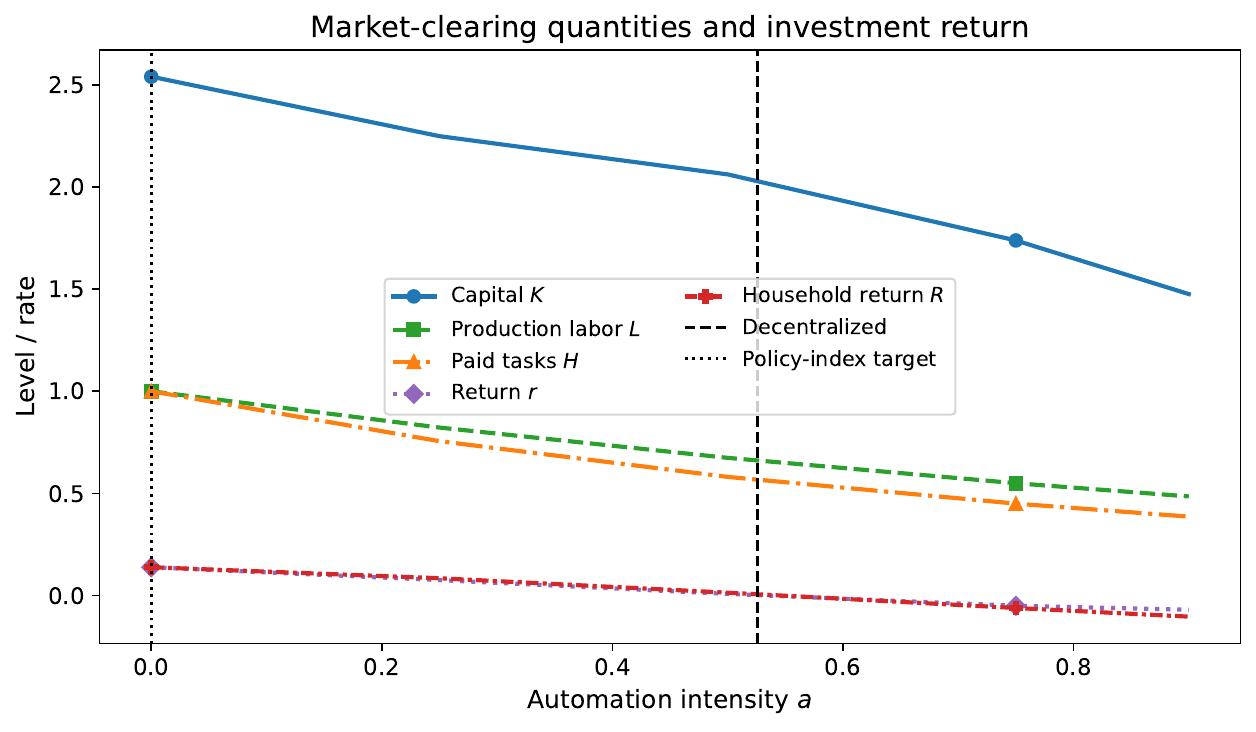}
\caption{Market-clearing quantities and investment return. The interest rate depends on production labor relative to capital and on automation-induced depreciation, not on capital scarcity alone.}
\label{fig:prices}
\end{figure}

Figures \ref{fig:tradeoff} and \ref{fig:prices} should be read together. Figure \ref{fig:tradeoff} shows rising productivity together with falling consumption and the collective wage bill in the adverse-incidence scenario. Figure \ref{fig:prices} shows the market-clearing adjustment. Capital falls at high \(a\), while production labor \(L\) also falls and depreciation \(\delta(a)\) rises. The investment return is therefore determined by the full marginal-product and resource-accounting system.

\subsection{Goods-market accounting}

A useful way to understand the numbers is to ask where output goes. The one-good accounting condition is
\[
Y=C+\delta(a)K+\Phi^A(a)+\omega_T\tau a+(1-\theta_E)\Pi^A.
\]
In the baseline no-tax economy, \(\tau=0\), so the fiscal term is zero. Table \ref{tab:resource_grid} and Figure \ref{fig:goods} report the accounting decomposition. As automation rises, consumption falls, replacement investment becomes a larger share of output because \(\delta(a)\) rises, and foreign rent leakage can become relevant when automation rents are not fully domestically owned. The final good is the numeraire, so the model does not literally say that consumption goods become cheaper. The real statement is that automation changes the feasible allocation of final-good output between consumption, replacement investment, automation costs, and rents.

\begin{table}[!htbp]
\centering
\caption{Resource accounting along the automation grid}
\label{tab:resource_grid}
\begin{tabular}{ccccccccc}
\toprule
$a$ & $Z$ & $K$ & $L$ & $Y$ & $C$ & $\delta(a)K$ & $y_U^L$ & $y_H^L$ \\
\midrule
0.00 & 1.000 & 2.540 & 1.000 & 1.399 & 1.246 & 0.152 & 0.671 & 1.119 \\
0.25 & 1.046 & 2.249 & 0.822 & 1.235 & 0.916 & 0.276 & 0.324 & 1.312 \\
0.50 & 1.094 & 2.062 & 0.674 & 1.103 & 0.636 & 0.381 & 0.159 & 1.560 \\
0.75 & 1.145 & 1.738 & 0.550 & 0.952 & 0.392 & 0.430 & 0.075 & 1.802 \\
0.90 & 1.176 & 1.476 & 0.485 & 0.851 & 0.269 & 0.421 & 0.046 & 1.887 \\
\bottomrule
\end{tabular}
\end{table}

\begin{figure}[H]
\centering
\includegraphics[width=0.86\textwidth]{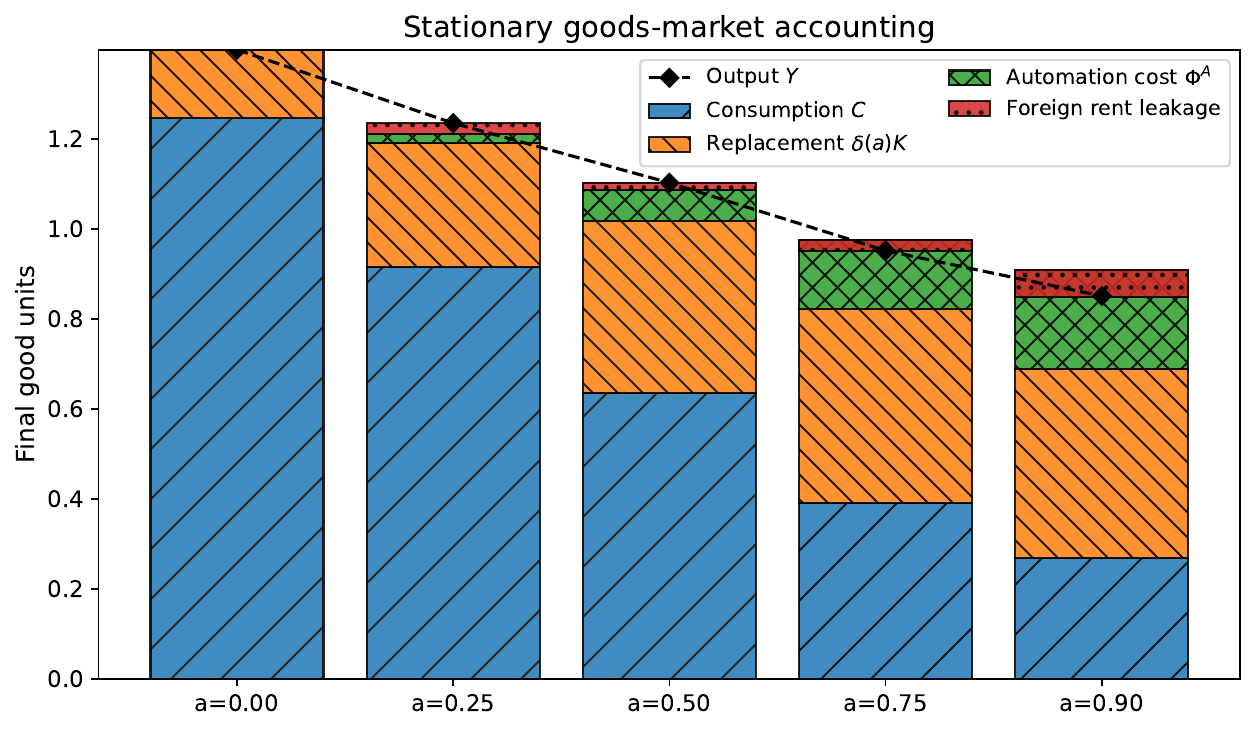}
\caption{Stationary goods-market accounting. Output is split into consumption, replacement investment, real automation costs, and foreign-owned automation rents. The plotted bars sum to output up to the numerical residual.}
\label{fig:goods}
\end{figure}

\subsection{Productivity-led capital growth}\label{subsec:capital_growth_case}

The baseline adverse-incidence case is not the only possible outcome. This subsection repeats the Section 4 analysis under a stronger productivity shifter, stronger high-skill complementarity, weaker low-skill exposure, and much lower automation-induced obsolescence. Table \ref{tab:capital_growth_params} lists the parameter changes. All other parameters, including the ownership benchmark \(\theta_E=0.45\), are held at their baseline values unless reported. This is the disciplined optimistic case: artificial intelligence raises productivity enough that households save more, capital grows, and aggregate consumption rises.

\begin{table}[!htbp]
\centering
\caption{Parameter changes for the productivity-led capital-growth case}
\label{tab:capital_growth_params}
\small
\setlength{\tabcolsep}{4pt}
\begin{tabular}{lp{0.38\textwidth}cc}
\toprule
Parameter & Meaning & Adverse-incidence baseline & Productivity-led case \\
\midrule
\(\psi_Z\) & Hicks-neutral productivity gain & 0.18 & 0.40 \\
\(\chi_U\) & low-skill paid-task exposure & 3.20 & 1.50 \\
\(\xi_U\) & low-skill production-task exposure & 2.50 & 0.70 \\
\(\beta_H\) & high-skill paid-task complementarity & 0.35 & 0.70 \\
\(\eta_H^L\) & high-skill production-task complementarity & 0.55 & 1.00 \\
\(\delta_A\) & automation-induced capital obsolescence & 0.25 & 0.02 \\
\(\kappa\) & convex automation cost & 0.52 & 3.00 \\
\bottomrule
\end{tabular}
\begin{minipage}{0.92\textwidth}
\vspace{0.35em}
\footnotesize
The productivity-led case is a counterfactual parameter regime that combines stronger productivity and complementarity with lower exposure and obsolescence.
\end{minipage}
\end{table}

Table \ref{tab:primitive_diagnostic_use} states how the diagnostic benchmark in Appendix \ref{app:diagnostic_benchmark} guides the two full-equilibrium numerical regimes.

\begin{table}[!htbp]
\centering
\caption{How the primitive diagnostic is used in the numerical regimes}
\label{tab:primitive_diagnostic_use}
\small
\begin{tabular}{p{0.24\textwidth}p{0.34\textwidth}p{0.34\textwidth}}
\toprule
Regime & Primitive forces in the diagnostic decomposition & Full stationary-GE check \\
\midrule
Adverse-incidence & Large low-skill exposure \(\chi_U,\xi_U\), large obsolescence \(\delta_A\), and the adverse mobility tilt make the negative low-skill term \(\Gamma_U\) dominate. & Solve the HJB--KFE fixed point; record the ordering of \(a^{D}\) and \(a^P\), changes in aggregate \(C\), and low-skill low-wealth consumption. \\
Productivity-led capital growth & Larger productivity gain \(\psi_Z\), high-skill complementarity \((\beta_H,\eta_H^L)\), and lower obsolescence weaken the negative exposed-household term and strengthen \(\Gamma_H\). & Solve the same fixed point; check whether \(K,C,Y\) rise and then separately test whether low-skill low-wealth households are protected. \\
\bottomrule
\end{tabular}
\end{table}

Table \ref{tab:no_wealth_diagnostic_values} reports the automation levels implied by the no-wealth-heterogeneity diagnostic benchmark and compares them with the stationary heterogeneous-agent computation. The diagnostic calculation sets \(\bar r=0.138\), equal to the no-automation market-clearing return in the full model, and uses \(C(a)=m_U(a)y_U^L(a)+m_H(a)y_H^L(a)\). Thus there is no wealth distribution or wealth-dependent MPC in this calculation. In the adverse-incidence case, the policy-index maximizer lies below the private root because the low-skill incidence term dominates. The full stationary GE computation yields the same ordering, \(a^{D}=0.526\) and \(a^P=0\). In the productivity-led case both diagnostic choices are positive, and the full model moves both values upward once capital accumulation, ownership, and market clearing are reintroduced.

\begin{table}[!htbp]
\centering
\caption{No-wealth diagnostic benchmark versus full stationary GE}
\label{tab:no_wealth_diagnostic_values}
\small
\begin{tabular}{lcccc}
\toprule
Regime & diagnostic $a^{D}$ & diagnostic $a^P$ & full GE $a^{D}$ & full GE $a^P$ \\
\midrule
Adverse-incidence baseline & 0.419 & 0.000 & 0.526 & 0.000 \\
Productivity-led capital growth & 0.376 & 0.201 & 0.427 & 0.375 \\
\bottomrule
\end{tabular}
\begin{minipage}{0.92\textwidth}
\vspace{0.35em}
\footnotesize
The diagnostic benchmark ignores asset heterogeneity, fiscal rebates, and automation-rent ownership. It solves the closed-form private residual \(F(a)=\mathcal M(a)-\phi-\kappa a\) and maximizes the reduced-form index \(\lambda C(a)+\mu B_U(a)\). The full GE columns are the computational results from the stationary market-clearing HJB--KFE economy.
\end{minipage}
\end{table}

Table \ref{tab:capital_growth_case} reports the stationary allocation at \(a=0\) and the decentralized automation allocation under the productivity-led parameter vector. In this case, the decentralized firm chooses positive automation, output rises, aggregate consumption rises, and the stationary capital stock rises. Capital deepening lowers the investment return \(r\). The household asset return \(R\) includes domestic automation-rent ownership and may differ from \(r\).

\begin{table}[!htbp]
\centering
\caption{Productivity-led capital-growth case: stationary aggregates}
\label{tab:capital_growth_case}
\scriptsize
\setlength{\tabcolsep}{3pt}
\begin{tabular}{lcccccccccc}
\toprule
Allocation & $a$ & $K$ & $L$ & $H$ & $Y$ & $C$ & $B$ & $r$ & $R$ & $w$ \\
\midrule
No automation & 0.000 & 2.540 & 1.000 & 1.000 & 1.399 & 1.246 & 0.895 & 0.138 & 0.138 & 0.895 \\
Decentralized automation & 0.427 & 4.144 & 1.025 & 0.841 & 2.011 & 1.474 & 1.055 & 0.106 & 0.101 & 1.255 \\
\bottomrule
\end{tabular}
\begin{minipage}{0.92\textwidth}
\vspace{0.35em}
\footnotesize
Here \(B=wH\) is the collective wage bill. The fall in \(r\) reflects capital deepening. The lower value of \(R\) relative to \(r\) occurs because net automation rents \(\Pi^A\) are negative in this particular capital-growth counterfactual after real automation costs are netted out.
\end{minipage}
\end{table}

The household budget equation explains how low-capital households fund consumption in any of the stationary experiments:
\[
\dot k_s(k)=Rk+w e_sh_s(a)+T_s(k)-c_s(k).
\]
For a low-capital household, \(Rk\) is small, so consumption is financed mainly by labor income and transfers, with the borrowing constraint preventing persistent negative drift at the lower boundary. Group-level income tables should not be read as closed budgets for each skill group, because households move between skill states carrying their assets with them. The aggregate household budget and the goods-market constraint close in the whole economy, not separately within each skill group.

Figures \ref{fig:capital_growth_agg_returns}--\ref{fig:capital_growth_consumption} repeat the same analysis used for the baseline case. Figure \ref{fig:capital_growth_agg_returns} groups the aggregate and asset-market information together: panel (a) plots
\[
K^{D}/K^0,\qquad C^{D}/C^0,\qquad Y^{D}/Y^0,\qquad B^{D}/B^0,
\]
where superscript \(0\) denotes the no-automation allocation under the same parameter vector; panel (b) plots the no-automation and decentralized values of
\[
r=\alpha Z(a)K^{\alpha-1}L^{1-\alpha}-\delta(a),\qquad
R=r+\theta_E\Pi^A/K,
\]
and the dividend yield \(R-r\). Figure \ref{fig:capital_growth_group} decomposes group income sources and consumption. The first two bars in Figure \ref{fig:capital_growth_group} are the no-automation allocation and the last two bars are the decentralized automation allocation. Figures \ref{fig:capital_growth_density} and \ref{fig:capital_growth_consumption} report wealth densities and consumption functions.

\begin{figure}[H]
\centering
\includegraphics[width=0.96\textwidth]{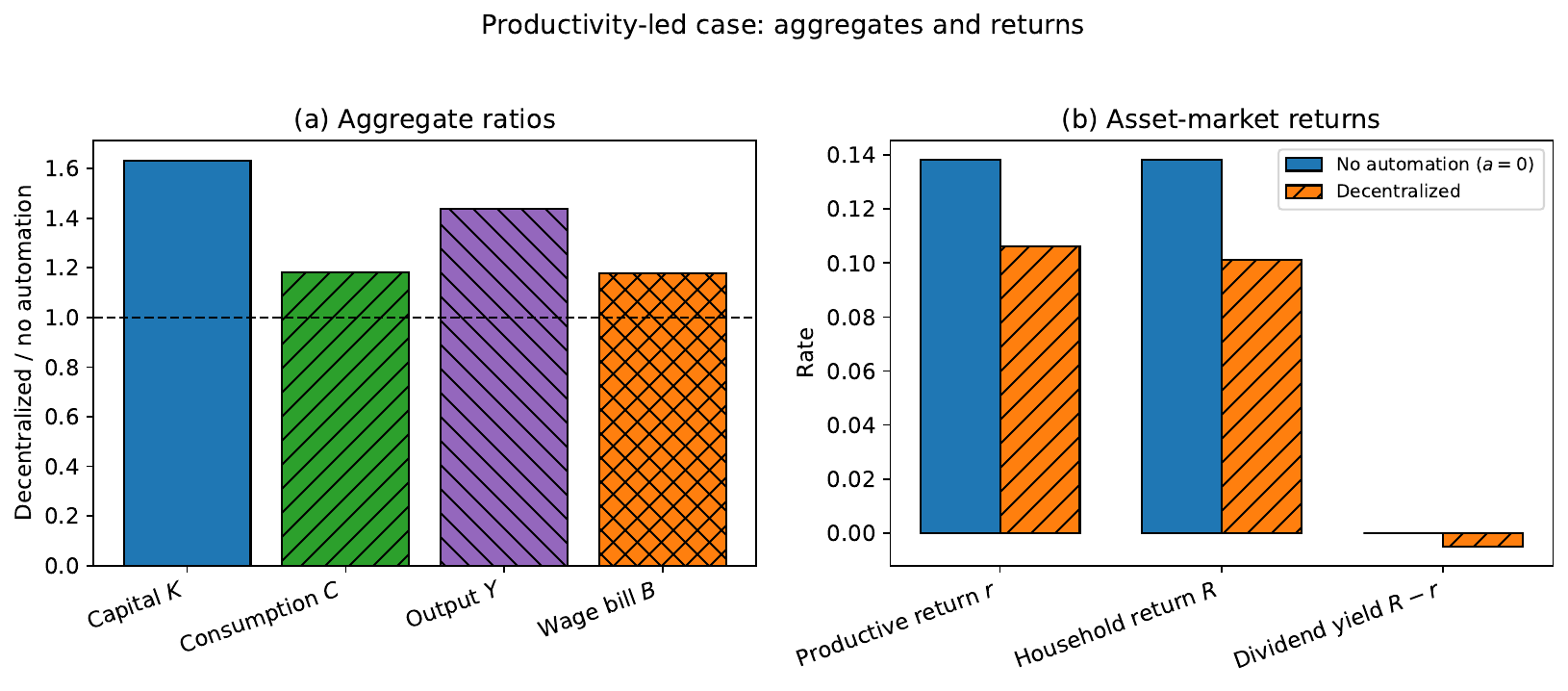}
\caption{Productivity-led capital-growth case: aggregate ratios and returns. Panel (a) shows that decentralized automation raises capital, consumption, output, and the collective wage bill relative to no automation. Panel (b) shows that the productive return \(r\) and household return \(R\) can fall when capital becomes abundant, even though aggregate output and consumption rise. }
\label{fig:capital_growth_agg_returns}
\end{figure}

\begin{figure}[H]
\centering
\includegraphics[width=0.90\textwidth]{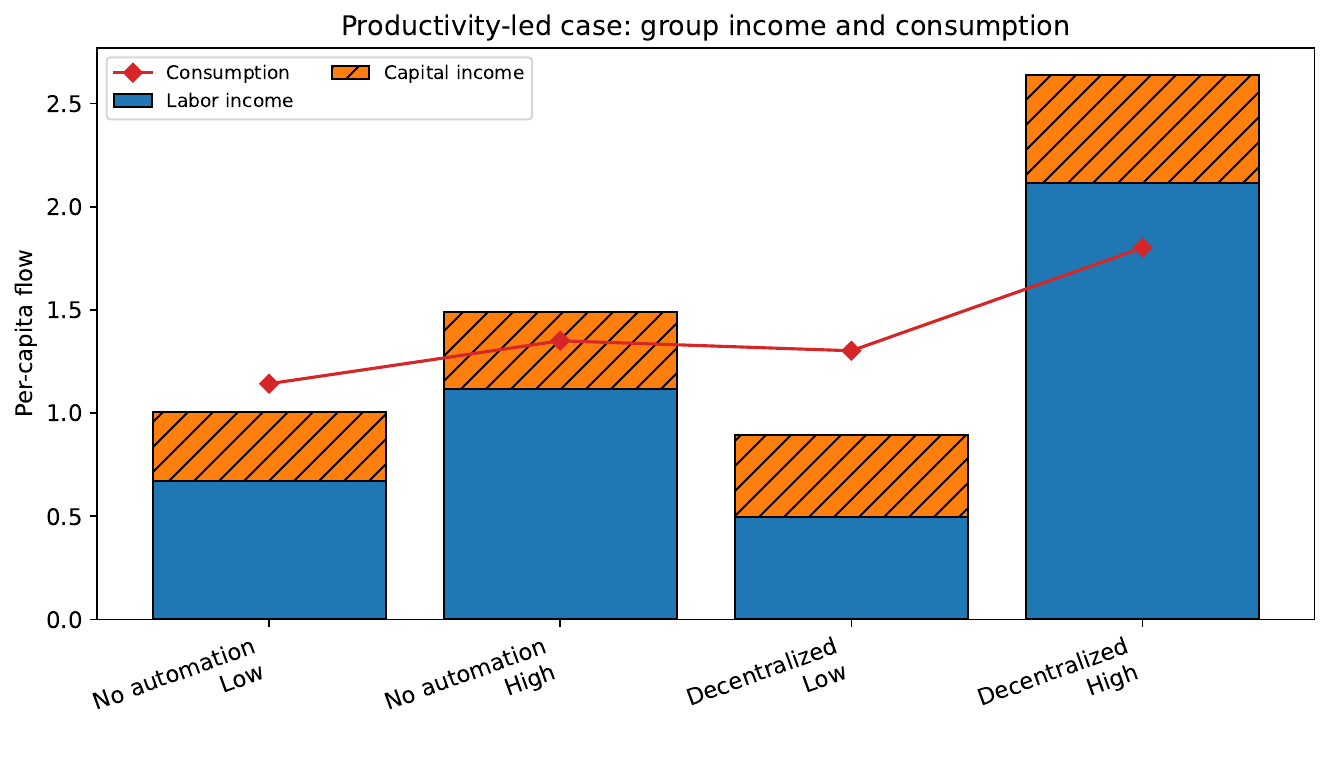}
\caption{Productivity-led capital-growth case: group income and consumption. The first two bars are the no-automation allocation and the last two bars are the decentralized automation allocation. Bars show per-capita labor income \(we_sh_s(a)\) and capital income \(Rk\); markers show mean consumption. The figure explains why the optimistic regime can raise aggregate consumption even though low-skilled task income is still reduced. Tiny automation-dividend components are omitted from the plot when they are visually negligible.}
\label{fig:capital_growth_group}
\end{figure}

Aggregate abundance does not automatically imply distributional protection of exposed poor households. The relevant object is not only aggregate consumption \(C\), but consumption conditional on skill and low wealth. Define the low-wealth cutoff \(\bar k_L=1.20\), the bottom part of the no-automation wealth distribution in this calibration, and compute
\[
\bar c_{s,\leq \bar k_L}^P
=\frac{\int_{\underline k}^{\bar k_L}c_s^P(k)g_s^P(k)\,dk}
{\int_{\underline k}^{\bar k_L}g_s^P(k)\,dk},
\qquad P\in\{0,D\}.
\]
Table \ref{tab:capital_growth_low_wealth} reports this conditional consumption statistic. The productivity-led case raises aggregate consumption, but low-skill low-wealth consumption falls relative to the no-automation allocation. This is the sense in which the optimistic case establishes an aggregate-abundance possibility, while distributional protection of poor exposed workers still requires broad pass-through, ownership, or tax-and-rebate transfers. Table \ref{tab:capital_growth_low_wealth} reports the conditional means, and Figure \ref{fig:capital_growth_low_wealth} plots the same comparison.

\begin{table}[H]
\centering
\caption{Productivity-led case: low-wealth conditional consumption}
\label{tab:capital_growth_low_wealth}
\begin{tabular}{lccccc}
\toprule
Allocation & Group & Cell mass & Mean \(k\) & Labor income & Mean consumption \\
\midrule
No automation & Low skill & 0.180 & 0.649 & 0.671 & 0.844 \\
No automation & High skill & 0.148 & 0.729 & 1.119 & 1.051 \\
Decentralized & Low skill & 0.161 & 0.590 & 0.496 & 0.702 \\
Decentralized & High skill & 0.040 & 0.757 & 2.115 & 1.129 \\
\bottomrule
\end{tabular}
\begin{minipage}{0.90\textwidth}
\vspace{0.35em}
\footnotesize
The table conditions on households with \(k\leq1.20\). The key row is low skill under decentralized automation: aggregate \(C\) rises in the productivity-led regime, but consumption of low-skill low-wealth households falls because they have little asset income and their labor income remains exposed.
\end{minipage}
\end{table}

\begin{figure}[H]
\centering
\includegraphics[width=0.82\textwidth]{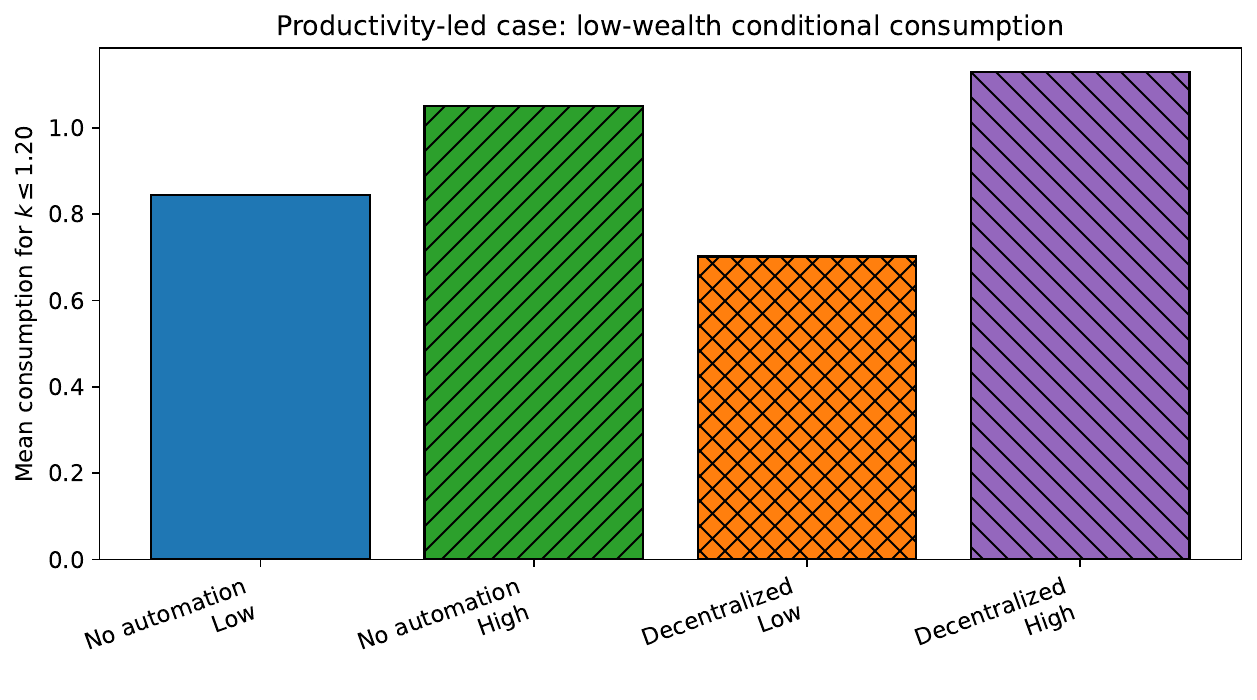}
\caption{Productivity-led case: conditional consumption among low-wealth households. The plotted statistic is \(\bar c_{s,\leq \bar k_L}^P\) with \(\bar k_L=1.20\). The figure makes explicit that aggregate abundance need not protect low-skill low-wealth households absent broad ownership or transfers.}
\label{fig:capital_growth_low_wealth}
\end{figure}

\begin{figure}[H]
\centering
\includegraphics[width=0.82\textwidth]{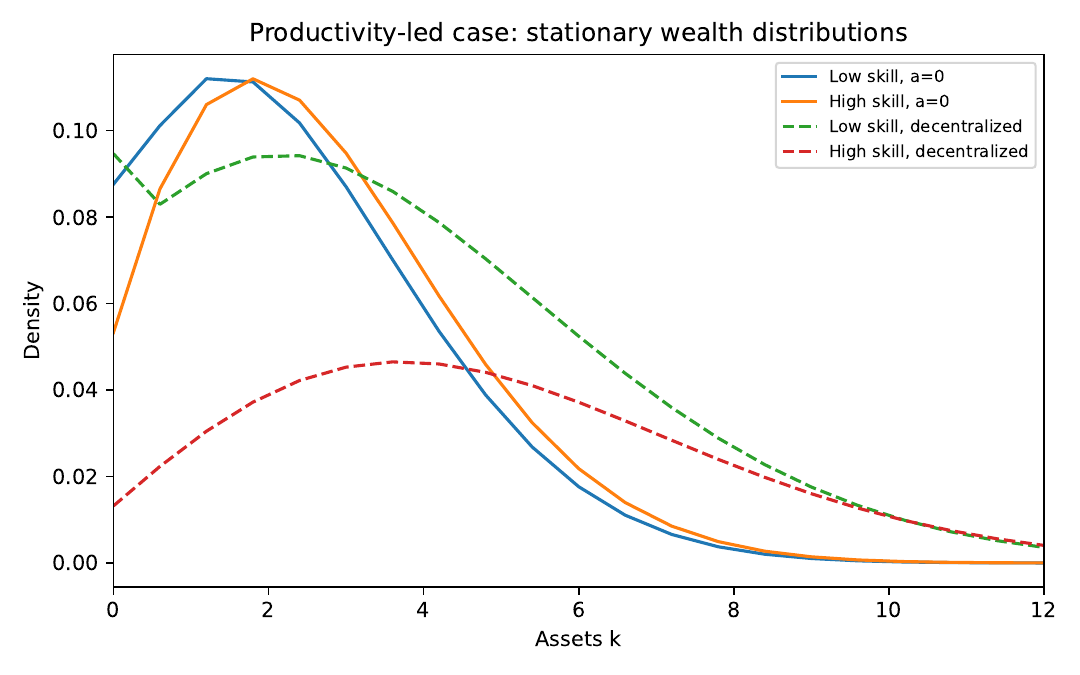}
\caption{Productivity-led capital-growth case: stationary wealth distributions \(g_s(k)\). Both the no-automation and decentralized distributions are plotted. Higher aggregate saving shifts the distribution toward larger asset holdings, but the low-wealth region near the borrowing constraint remains relevant for distributional incidence.}
\label{fig:capital_growth_density}
\end{figure}

\begin{figure}[H]
\centering
\includegraphics[width=0.82\textwidth]{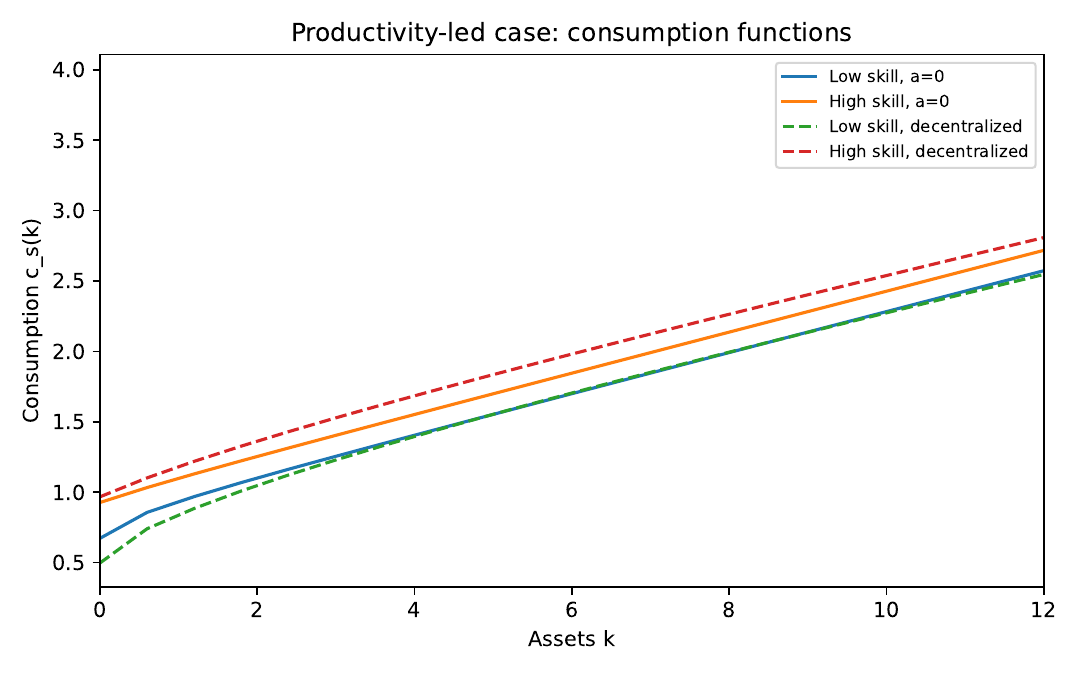}
\caption{Productivity-led capital-growth case: consumption functions \(c_s(k)\). Solid lines are the no-automation allocation and dashed lines are the decentralized automation allocation. Read this figure together with Figure \ref{fig:capital_growth_density}: consumption at a given \(k\) is only half of the distributional question; the density \(g_s(k)\) tells where households actually are.}
\label{fig:capital_growth_consumption}
\end{figure}
\FloatBarrier

A tax has a different stationary incidence in the productivity-led regime than in the adverse-incidence regime. In the capital-growth case, moving from decentralized automation to the no-automation allocation reduces \(C\) from 1.474 to 1.246 and reduces \(K\) from 4.144 to 2.540. Thus a tax that eliminates automation supports exposed labor income but sacrifices productivity growth and capital accumulation in this scenario. Positive-revenue taxes with targeted transfers illustrate the distributional tradeoff without being labeled optimal policy.

\subsection{Distributional incidence and consumption smoothing}

The key distributional state is \((s,k)\), not skill alone. A low-skilled household with high wealth owns capital and may be partly protected. A high-skilled household with little wealth still depends on labor income. The local marginal propensity to consume out of wealth is defined in \eqref{eq:wealth_mpc}. This wealth-MPC is computed from the finite-difference consumption policy and is used mainly in the fiscal-incidence exercise, especially Table \ref{tab:progressive_rebate} and Figure \ref{fig:progressive_targeted_diagnostics}. It is an incidence statistic that summarizes the immediate consumption support from transfers. The consumption-equivalent gain from policy \(P\) relative to the decentralized allocation is the constant proportional increase in consumption in the decentralized allocation that would make the household indifferent between the two allocations. It is defined by
\[
V_s^P(k)=(1+CE_s(k))^{1-\gamma}V_s^{D}(k),
\]
so that, for \(\gamma\neq1\),
\[
CE_s(k)=\left(\frac{V_s^P(k)}{V_s^{D}(k)}\right)^{1/(1-\gamma)}-1.
\]
Group-level CE uses the decentralized distribution as weights. Table \ref{tab:incidence} reports these distributional effects for the adverse-incidence baseline.

\begin{table}[!htbp]
\centering
\caption{Distributional incidence of the implementing policy}
\label{tab:incidence}
\begin{tabular}{lrrrrrrrr}
\toprule
Group & Mass & Mean $k$ & $y^{L,D}$ & $y^{L,P}$ & $c^{D}$ & $c^{P}$ & $c^P/c^{D}-1$ & CE \\
\midrule
Low bottom & 0.353 & 0.56 & 0.147 & 0.671 & 0.287 & 0.822 & 1.86 & 1.98 \\
Low middle & 0.169 & 2.07 & 0.147 & 0.671 & 0.568 & 1.111 & 0.96 & 1.49 \\
Low top & 0.165 & 3.89 & 0.147 & 0.671 & 0.825 & 1.387 & 0.68 & 1.36 \\
High bottom & 0.083 & 0.75 & 1.586 & 1.119 & 0.595 & 1.055 & 0.77 & 1.32 \\
High middle & 0.084 & 2.10 & 1.586 & 1.119 & 0.811 & 1.268 & 0.56 & 1.18 \\
High top & 0.145 & 4.17 & 1.586 & 1.119 & 1.084 & 1.576 & 0.45 & 1.15 \\
\bottomrule
\end{tabular}
\begin{minipage}{0.93\textwidth}
\vspace{0.35em}
\footnotesize
The table reports comparisons between stationary environments, weighted by the decentralized distribution. The largest CE differences occur for low-skill bottom-wealth households. These are not transition-path welfare effects.
\end{minipage}
\end{table}

Figure \ref{fig:density} plots the stationary wealth density \(g_s(k)\) separately for each skill group and policy regime. The solid curves are the decentralized distribution; the dashed curves correspond to implementation of the policy-index target. Figure \ref{fig:consumption} then reports consumption conditional on wealth, and Figure \ref{fig:consumption_gain} plots the difference. Reading the three figures together shows both incidence and smoothing: low-wealth households have the largest local consumption response, while high-wealth households smooth with asset income.

\begin{figure}[!htbp]
\centering
\includegraphics[width=0.82\textwidth]{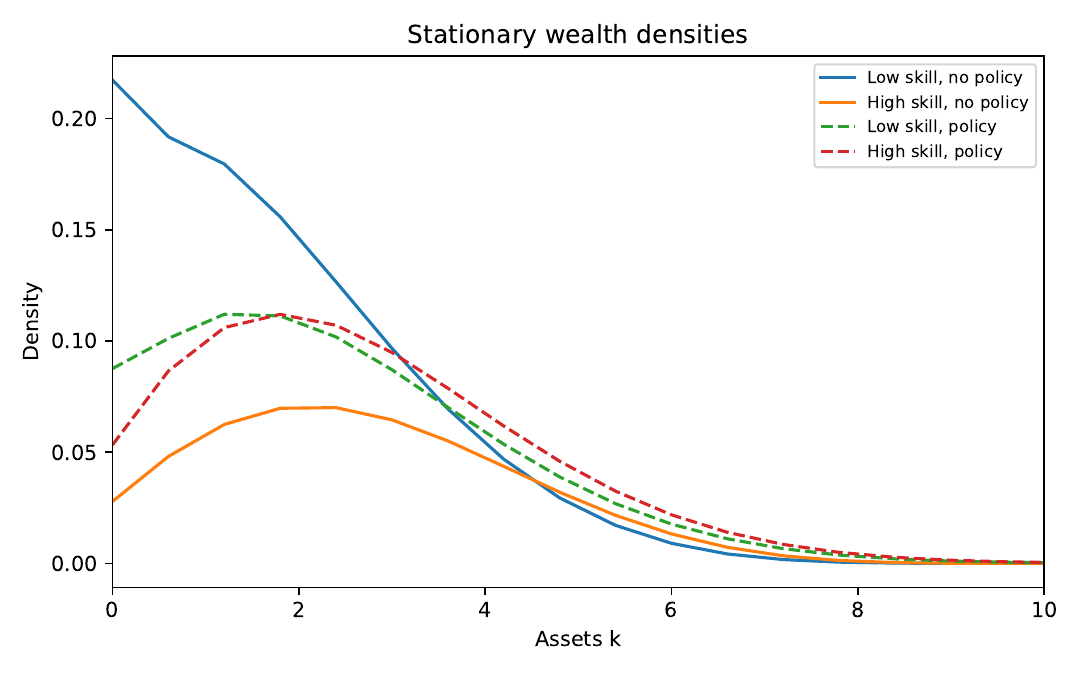}
\caption{Stationary wealth densities with and without the implementing policy. The policy changes not only current income but also the long-run distribution of wealth.}
\label{fig:density}
\end{figure}

\begin{figure}[!htbp]
\centering
\includegraphics[width=0.82\textwidth]{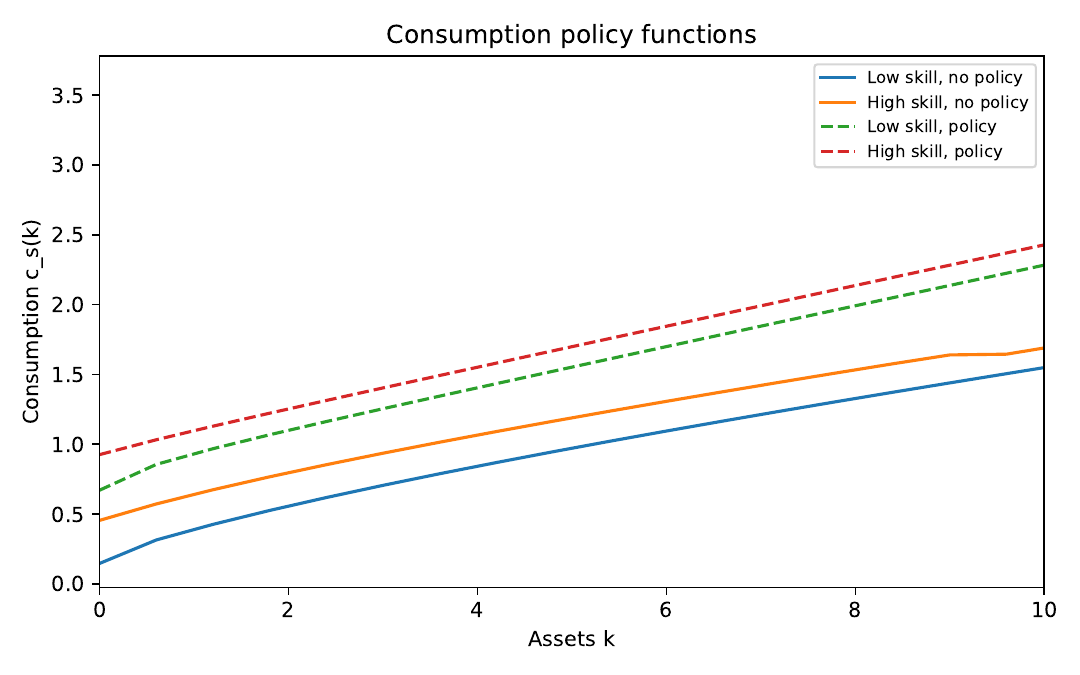}
\caption{Consumption policy functions with and without the policy. This is the numerical evidence on consumption smoothing: consumption is increasing in wealth, and low-wealth households have higher local MPCs.}
\label{fig:consumption}
\end{figure}

\begin{figure}[!htbp]
\centering
\includegraphics[width=0.82\textwidth]{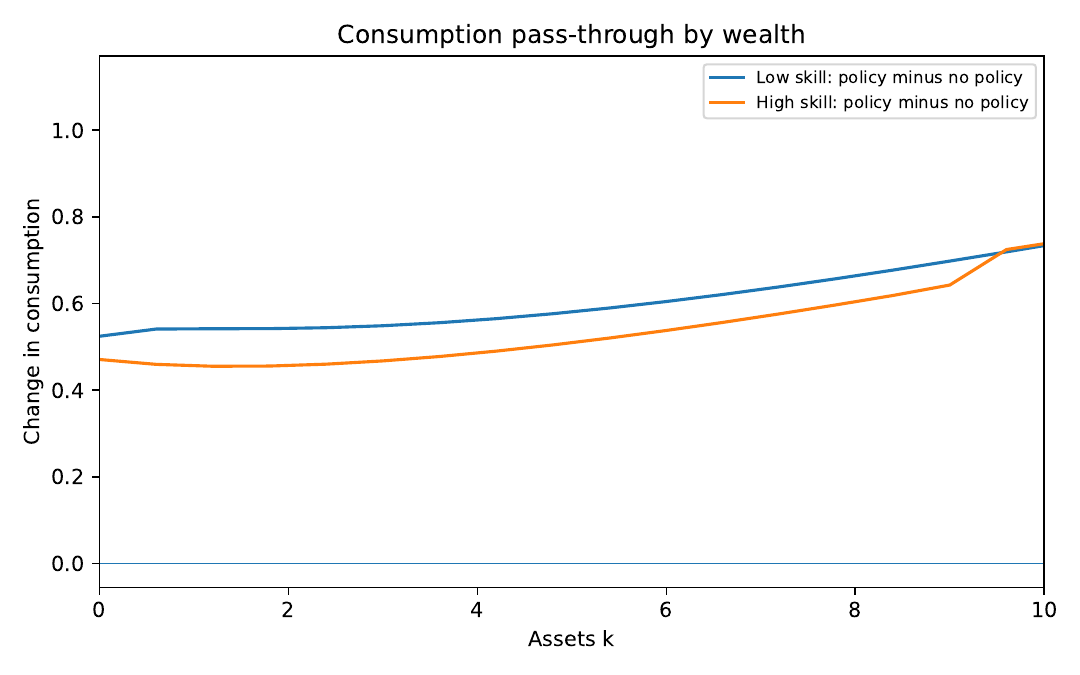}
\caption{Policy-induced consumption change as a function of wealth. The plotted object is \(c_s^{P}(k)-c_s^{D}(k)\); the horizontal zero line denotes no consumption change. Low-wealth households have the largest local consumption response.}
\label{fig:consumption_gain}
\end{figure}

\subsection{Static benchmark versus stationary GE}

Section 2 separates the firm's direct profit derivative from the derivative of the policy index. The no-wealth benchmark in Appendix \ref{app:diagnostic_benchmark} expresses their ordering in primitive exposure and productivity terms, while the stationary GE block recomputes prices and the wealth distribution. In the diagnostic benchmark, the private firm pushes automation to the upper grid value. In the stationary economy, capital-market clearing, depreciation, and the wealth distribution moderate the private choice to \(a^{D}=0.526\). The reduced-form index is maximized at \(a^P=0\) in both calculations. The heterogeneous-agent model additionally reports exposure by wealth, capital adjustment, and consumption smoothing. Table \ref{tab:static_vs_stationary} summarizes the comparison.

\begin{table}[!htbp]
\centering
\caption{Static benchmark versus stationary heterogeneous-agent GE, \(\lambda=0.6\)}
\label{tab:static_vs_stationary}
\small
\setlength{\tabcolsep}{3pt}
\begin{tabular}{lcccccccc}
\toprule
Model & private $a$ & index $a^P$ & $K^{D}$ & $K^{P}$ & $C^{D}$ & $C^{P}$ & $B^{D}$ & $B^{P}$ \\
\midrule
Static benchmark & 0.900 & 0.000 & -- & -- & 0.903 & 0.895 & 0.903 & 0.895 \\
Stationary Aiyagari/Huggett GE & 0.526 & 0.000 & 2.036 & 2.540 & 0.609 & 1.246 & 0.597 & 0.895 \\
\bottomrule
\end{tabular}
\end{table}

\subsection{Synthesis: two stationary scenario families}

Table \ref{tab:regime_map} summarizes the parameter forces behind the two scenario families, and Table \ref{tab:computed_regimes} reports the corresponding computed examples. The productivity-led case arises when the productivity gain \(\psi_Z\), high-skill complementarity \((h_H',\ell_H')\), and domestic ownership \(\theta_E\) are large. The adverse-incidence case combines strong exposure, adverse mobility, obsolescence, and concentrated ownership. These are stationary comparative statics; the model does not map either case into a short-run forecast.

\begin{table}[!htbp]
\centering
\caption{Parameter regimes that separate the two views of AI}
\label{tab:regime_map}
\small
\begin{tabular}{p{0.24\textwidth}p{0.33\textwidth}p{0.33\textwidth}}
\toprule
Object & Productivity-led AI & Adverse-incidence \\
\midrule
Productivity and high-skill channel & High \(\psi_Z\), high \(h_H'(a)\), high \(\ell_H'(a)\). High-skilled workers gain and productivity gains are broad. & Moderate \(\psi_Z\). High-skilled workers gain, but low-skilled income losses are larger. \\
Exposure and mobility & Low or moderate \(\chi_U\); neutral or upward mobility; exposed households have assets or transfers. & High \(\chi_U\); adverse mobility; exposed households are low wealth and high-MPC. \\
Ownership & High domestic ownership \(\theta_E\), broad participation in capital and automation rents. & Low or concentrated ownership; equity income mainly insulates wealthy households. \\
Stationary outcome & Automation raises output, consumption, and capital. & Automation raises high-skill income but lowers consumption and capital in the reported stress test. \\
\bottomrule
\end{tabular}
\end{table}

\begin{table}[!htbp]
\centering
\caption{Computed examples of the two AI regimes}
\label{tab:computed_regimes}
\small
\begin{tabular}{lccccccc}
\toprule
Regime & \(\psi_Z\) & \(\delta_A\) & \(a^{D}\) & \(K^{D}/K^0\) & \(C^{D}/C^0\) & \(Y^{D}/Y^0\) & \(y_H^{D}/y_H^0\) \\
\midrule
Adverse-incidence baseline & 0.18 & 0.25 & 0.526 & 0.801 & 0.488 & 0.778 & 1.417 \\
Productivity-led capital-growth case & 0.40 & 0.02 & 0.427 & 1.631 & 1.183 & 1.438 & 1.890 \\
\bottomrule
\end{tabular}
\end{table}

Figure \ref{fig:two_regimes} compares the two computed regimes by plotting the ratios \(K^{D}/K^0\), \(C^{D}/C^0\), \(Y^{D}/Y^0\), and \(B^{D}/B^0\). Figure \ref{fig:scenario_map} then reports the parameter values behind the two scenario calculations.

\begin{figure}[!htbp]
\centering
\includegraphics[width=0.88\textwidth]{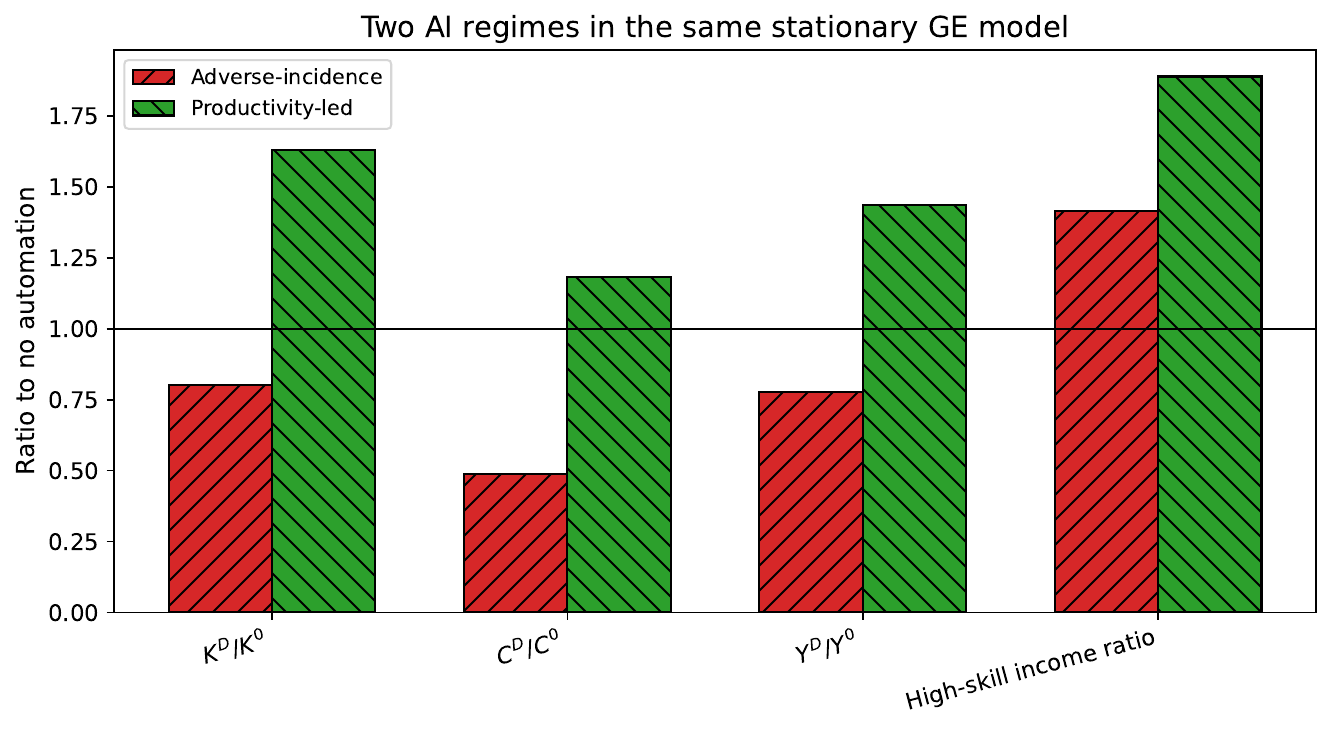}
\caption{Two AI regimes in the same stationary GE model. Bars above one mean that decentralized AI adoption raises the object relative to the no-automation allocation under the same parameter vector.}
\label{fig:two_regimes}
\end{figure}

\begin{figure}[!htbp]
\centering
\includegraphics[width=0.86\textwidth]{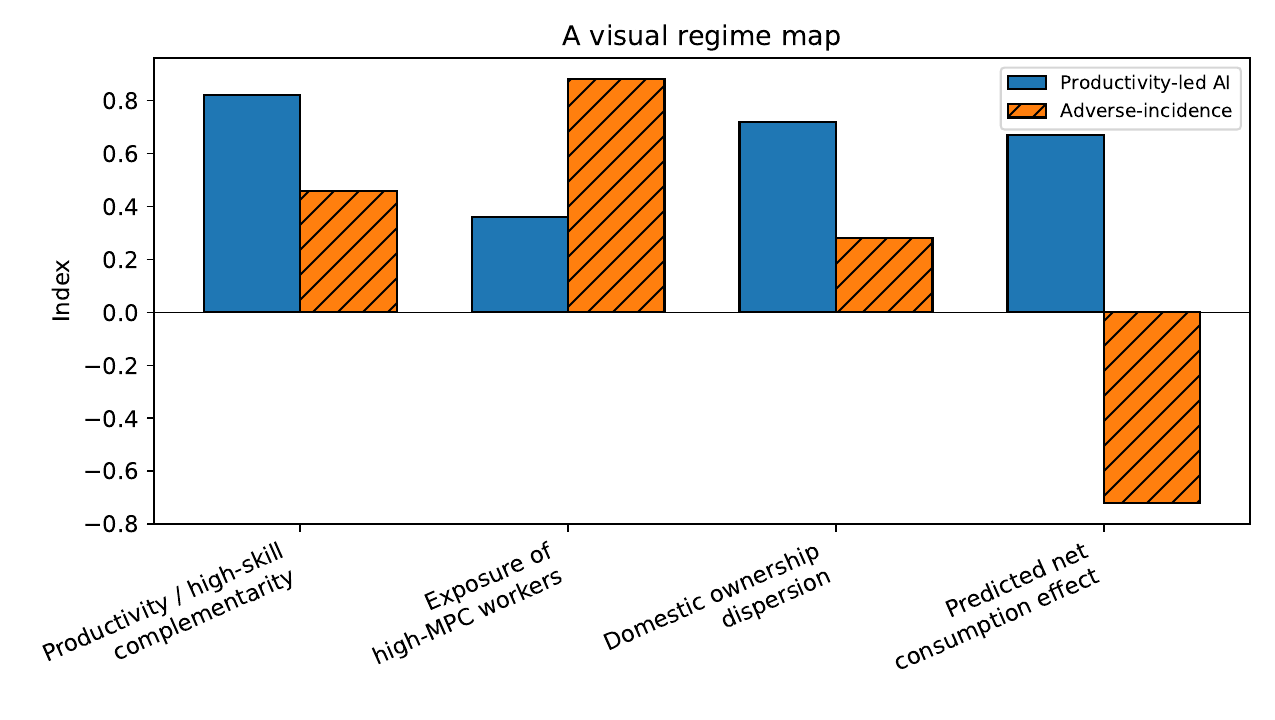}
\caption{Parameter values behind the two AI regimes. The horizontal axis gives the primitive parameter symbols and the corresponding values for the productivity-led and adverse-incidence cases. Bars report the numerical values used in the scenario calculations; colors, markers, and hatches distinguish regimes in both color and black-and-white print.}
\label{fig:scenario_map}
\end{figure}

\section{Taxes, Rebate Rules, and Ownership Incidence}

The implementing object is the marginal derivative that enters the firm automation condition \eqref{eq:firm_automation_foc}. For a literal automation tax \(P(a)=\tau a\), this derivative is
\[
P_a(a)=\tau,
\]
so the tax shifts the firm's first-order condition by \(\tau\). But a tax is also a fiscal object. If automation remains positive, the government collects \(\tau a\); if tax collection is costly, \(\omega_T\tau a\) is dissipated; and the remaining amount is rebated through \(T_s(k)\) as in \eqref{eq:rebate_schedule}. Unless otherwise stated, the fiscal experiments use \(\theta_E=0.45\), \(\lambda=0.6\), \(\mu=1\), and \(\omega_T=0.15\). Table \ref{tab:tax_friction} reports interior tax experiments. These are not the boundary implementation; they are designed to show fiscal incidence when revenue is actually collected.

The reported consumption-equivalent (CE) statistics compare household value functions across stationary environments and aggregate them using the decentralized stationary distribution. They are not welfare measures along a transition path from one invariant distribution to another. A dynamic policy evaluation would require the transition of prices and the wealth distribution, which is outside the model solved here.

The rebate looks small in the figures because it is a flow tax on the remaining automation base. A large marginal tax can change firm behavior a lot by changing the first-order condition, but the revenue it collects is \(\tau a(\tau)\). If the tax pushes automation close to zero, the tax base disappears. In the boundary implementation \(a=0\), revenue and rebates are exactly zero. In the interior example \(\tau=0.10\), revenue is only \(0.0393\), and the friction-adjusted lump-sum rebate is \(0.0334\) per person. That is visible when plotted separately, but it is small relative to high-skilled labor income and total consumption.

\begin{table}[!htbp]
\centering
\caption{Tax and fiscal-friction sensitivity, lump-sum rebates}
\label{tab:tax_friction}
\begin{tabular}{cccccccc}
\toprule
$\tau$ & $\omega_T$ & $a$ & $C$ & revenue & rebate & lost & avg. CE \\
\midrule
0.10 & 0.15 & 0.393 & 0.763 & 0.0393 & 0.0334 & 0.0059 & 0.406 \\
0.20 & 0.15 & 0.284 & 0.891 & 0.0567 & 0.0482 & 0.0085 & 0.735 \\
0.589 & 0.15 & 0.000 & 1.246 & 0.0000 & 0.0000 & 0.0000 & 1.562 \\
\bottomrule
\end{tabular}
\end{table}

Figures \ref{fig:tax_auto}--\ref{fig:tax_loss} separate the three fiscal objects. Figure \ref{fig:tax_auto} shows the behavioral response \(a^{D}(\tau)\). Figure \ref{fig:tax_ce} reports the stationary CE comparison. Figure \ref{fig:tax_loss} reports tax revenue and dissipated resources. This separation is important because a tax can have a large marginal effect on firm behavior while raising little revenue when the automation base shrinks.

\begin{figure}[!htbp]
\centering
\includegraphics[width=0.82\textwidth]{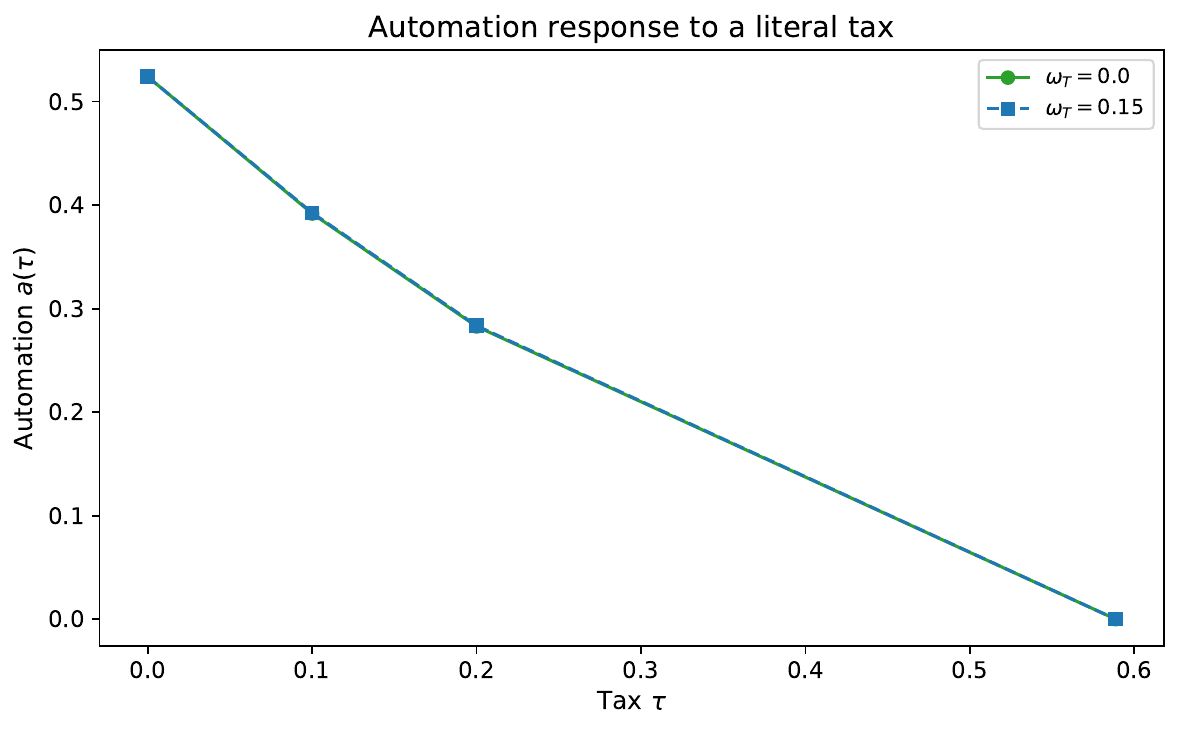}
\caption{Automation response to a literal automation tax. The tax reduces the automation base \(a(\tau)\), so the revenue base also shrinks.}
\label{fig:tax_auto}
\end{figure}

\begin{figure}[!htbp]
\centering
\includegraphics[width=0.82\textwidth]{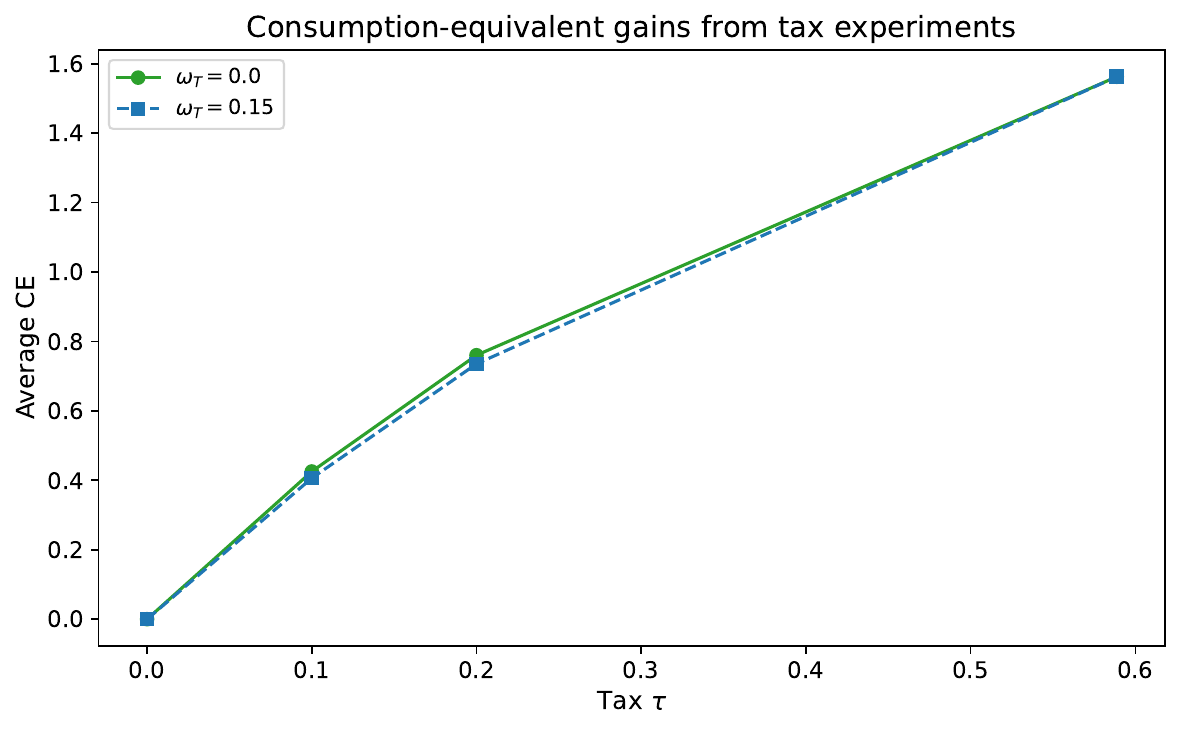}
\caption{Stationary consumption-equivalent comparisons for tax experiments. These compare value functions under invariant environments; they do not include transition-path welfare.}
\label{fig:tax_ce}
\end{figure}

\begin{figure}[!htbp]
\centering
\includegraphics[width=0.86\textwidth]{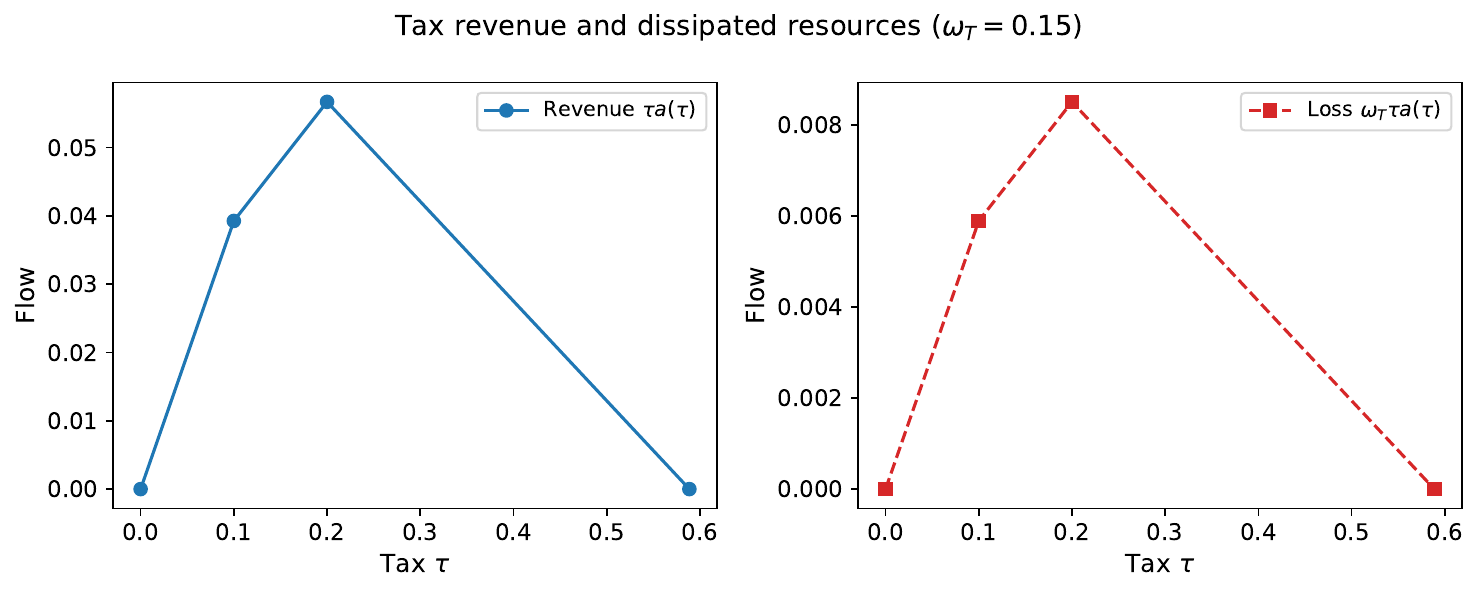}
\caption{Tax revenue and dissipated resources. A stationary flow is a per-period flow measured in units of the final good. The boundary tax deters automation and therefore collects zero revenue in the stationary allocation.}
\label{fig:tax_loss}
\end{figure}

Rebate rules change incidence. A lump-sum rebate returns the same flow to every household. A labor-income-proportional rebate gives more to high labor-income households. A total-income-proportional rebate also favors high-wealth households. Table \ref{tab:rebate_rules} compares rebate rules. The same marginal tax has a similar automation effect but different distributional effects. The final row reports the pro-consumption progressive case with \(\tau=0.20\); it is included in the same table to compare the aggregate effect of targeting rebates to low-wealth and low-earnings households.

\begin{table}[!htbp]
\centering
\caption{Rebate-rule comparison and pro-consumption rebate case}
\label{tab:rebate_rules}
\begin{tabular}{lcccccc}
\toprule
Rule & $a$ & $K$ & $C$ & rebate & lost & avg. CE \\
\midrule
Lump-sum & 0.393 & 1.991 & 0.763 & 0.0334 & 0.0059 & 0.406 \\
Labor-income proportional & 0.400 & 2.102 & 0.759 & 0.0340 & 0.0060 & 0.343 \\
Total-income proportional & 0.401 & 2.115 & 0.759 & 0.0341 & 0.0060 & 0.340 \\
Progressive pro-consumption & 0.284 & 2.025 & 0.899 & 0.0482 & 0.0085 & 0.735 \\
\bottomrule
\end{tabular}
\end{table}

Figure \ref{fig:rebate_income} shows the same positive-tax experiment by group. It is included because the rebate is numerically small in aggregate plots. The figure makes clear that the rebate is a separate household flow, but also that it is small relative to labor and capital income.

\begin{figure}[!htbp]
\centering
\includegraphics[width=0.88\textwidth]{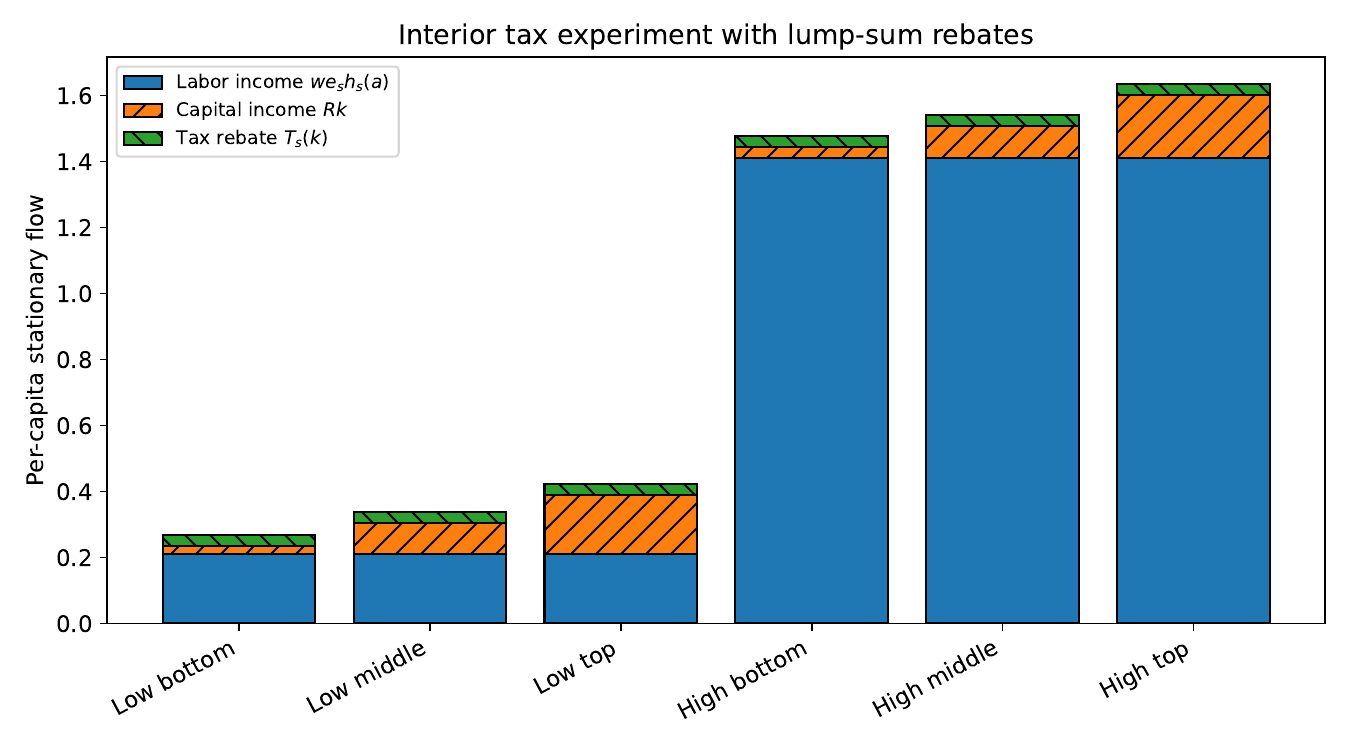}
\caption{Interior tax experiment with lump-sum rebates. The figure reports per-capita labor income \(we_sh_s(a)\), capital income \(Rk\), and the tax rebate \(T_s(k)\) for \(\tau=0.10\) and \(\omega_T=0.15\). Automation dividends are omitted because their per-capita flow is below \(2\times10^{-4}\) in this experiment and would not be visually informative.}
\label{fig:rebate_income}
\end{figure}

\begin{figure}[!htbp]
\centering
\includegraphics[width=0.88\textwidth]{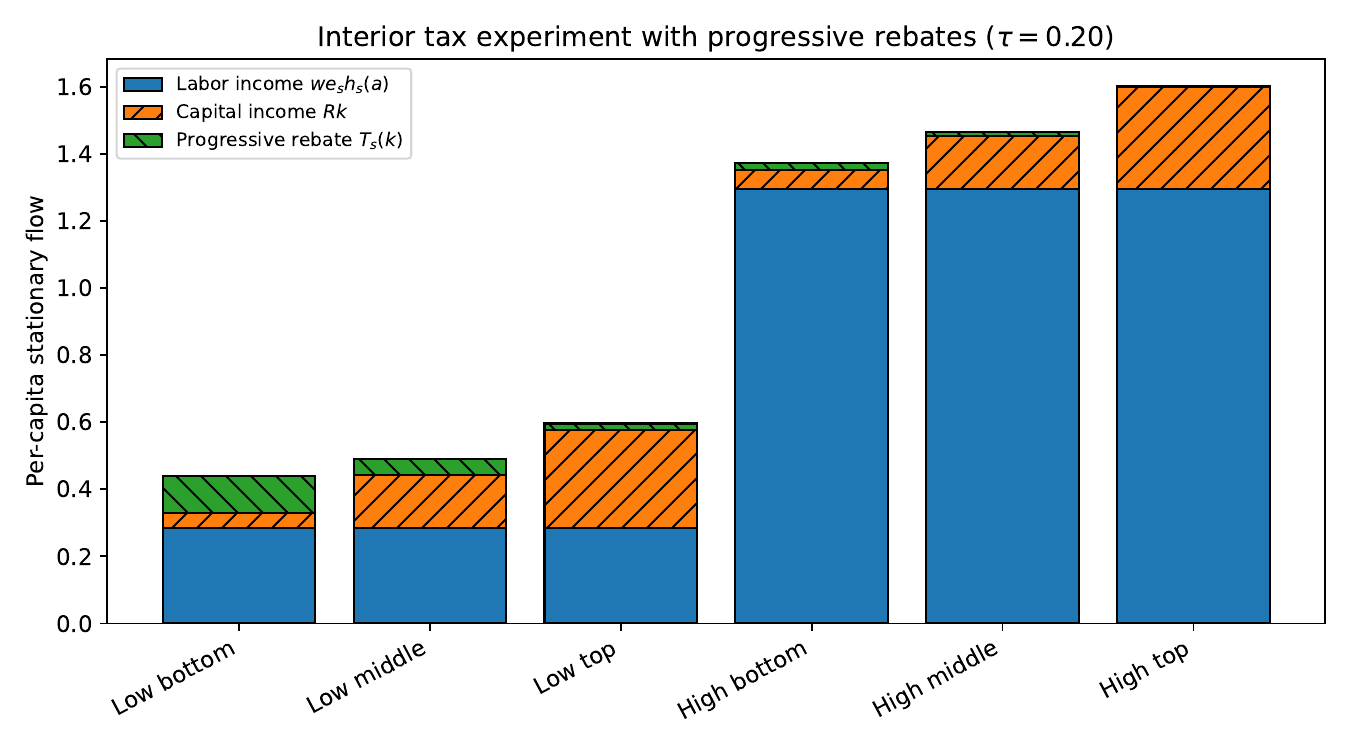}
\caption{Interior tax experiment with progressive rebates. The figure reports the same income components as Figure \ref{fig:rebate_income}, but the rebated revenue is targeted toward low-wealth and low-labor-income households. The tax is \(\tau=0.20\), the fiscal friction is \(\omega_T=0.15\), and the rebate kernel is proportional to \(\exp(-\varrho_k k-\varrho_y y_s^L)\) with \(\varrho_k=0.55\) and \(\varrho_y=2\).}
\label{fig:progressive_rebate_income}
\end{figure}

Figure \ref{fig:progressive_rebate_income} reports the corresponding progressive-rebate income components. The progressive rebate is visible in the low-skill bottom-wealth cell because the kernel targets low wealth and low labor income.

\paragraph{A pro-consumption progressive rebate.}
The government need not rebate revenue equally or in proportion to income. To illustrate a policy aimed directly at households with low savings and low earnings, define a progressive rebate kernel
\[
b_s^{P}(k)=
\frac{\exp[-\varrho_k k-\varrho_y y_s^L(a)]}
{\sum_{s'}\int \exp[-\varrho_k k'-\varrho_y y_{s'}^L(a)]g_{s'}(k')\,dk'}.
\]
This kernel gives more of the rebated revenue to low-wealth and low-labor-income households. For a skill-wealth cell \((s,b)\), where \(s\in\{U,H\}\) indexes skill and \(b\) indexes the bottom, middle, or top wealth bin, define the average transfer
\[
\overline T_{sb}=\frac{\int_{\mathcal B_b}T_s(k)g_s(k)\,dk}{\int_{\mathcal B_b}g_s(k)\,dk}.
\]
Table \ref{tab:progressive_rebate} reports this \(\overline T_{sb}\), the cell average wealth MPC \(\overline{\mathrm{MPC}}_{sb}\), and the first-round product \(\overline{\mathrm{MPC}}_{sb}\overline T_{sb}\). The example below uses \(\tau=0.20\), \(\omega_T=0.15\), \(\varrho_k=0.55\), and \(\varrho_y=2\). The equilibrium automation response is pinned down by the marginal tax, while the rebate rule reallocates the after-tax revenue. Table \ref{tab:progressive_rebate} and Figure \ref{fig:progressive_rebate} show that a targeted rebate delivers much larger direct consumption support to the low-skill bottom-wealth group because that group receives a larger transfer and has the highest local MPC.

\begin{table}[!htbp]
\centering
\caption{Progressive rebate case study, \(\tau=0.20\), \(\omega_T=0.15\)}
\label{tab:progressive_rebate}
\small
\begin{tabular}{lcccc}
\toprule
Rule and group & Transfer & MPC & MPC\(\times\)transfer & Interpretation \\
\midrule
Lump-sum, low bottom & 0.048 & 0.283 & 0.0137 & equal rebate \\
Progressive, low bottom & 0.112 & 0.283 & 0.0316 & targeted support \\
Lump-sum, high top & 0.048 & 0.142 & 0.0068 & equal rebate \\
Progressive, high top & 0.004 & 0.142 & 0.0006 & little support \\
\bottomrule
\end{tabular}
\end{table}

\begin{figure}[!htbp]
\centering
\includegraphics[width=0.95\textwidth]{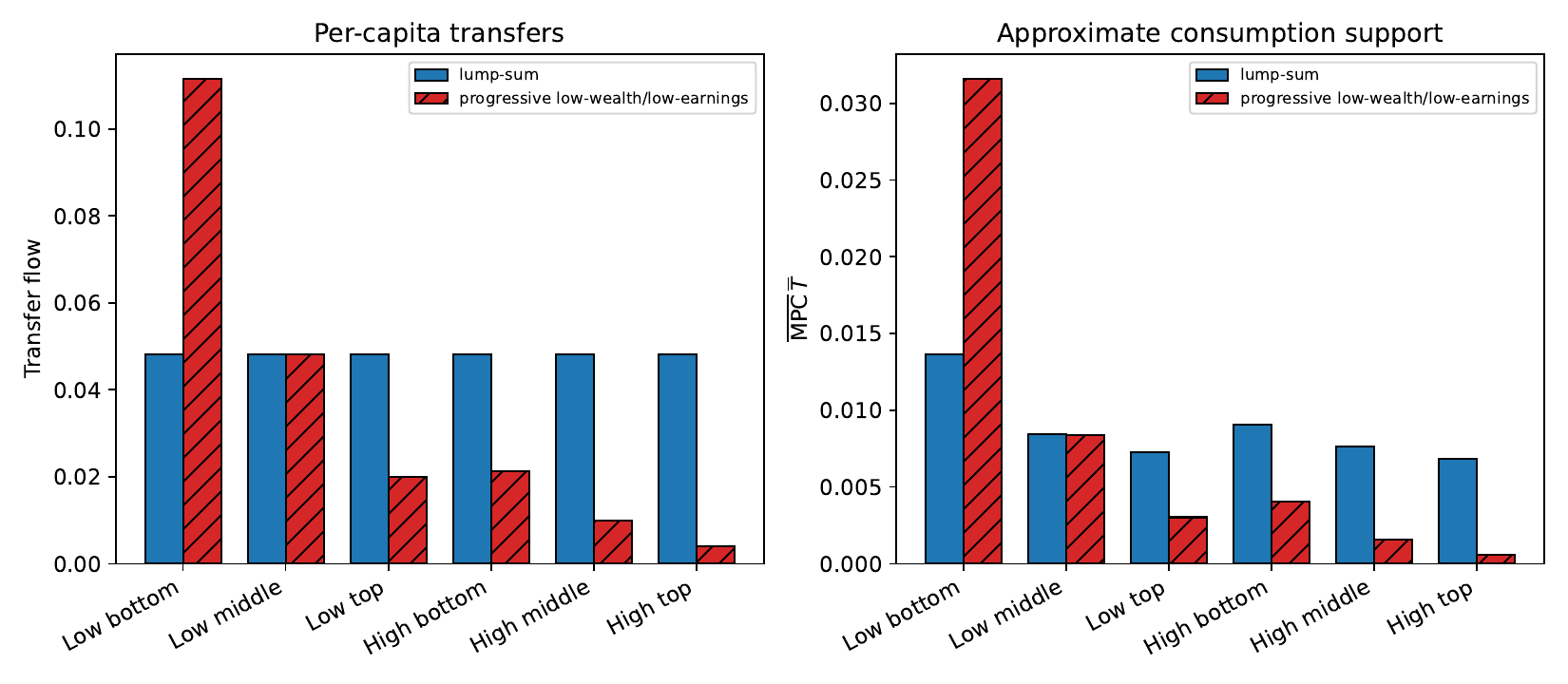}
\caption{Progressive rebate case study. The left panel reports per-capita transfers. The right panel multiplies the transfer by the local MPC to show approximate consumption support. The progressive rule supports low-wealth, low-earnings households much more strongly than a lump-sum rebate.}
\label{fig:progressive_rebate}
\end{figure}

Figure \ref{fig:progressive_targeted_diagnostics} reports a first-round consumption-support diagnostic for each skill-wealth cell \((s,b)\), where \(s\) is skill and \(b\) is the wealth bin:
\[
\text{support}_{sb}=\frac{\overline{\mathrm{MPC}}_{sb}\,\overline T_{sb}}{\bar c_{sb}},
\]
where \(\overline T_{sb}\) is the average tax rebate received by the cell, \(\overline{\mathrm{MPC}}_{sb}\) is the average local marginal propensity to consume computed from the HJB consumption policy, and \(\bar c_{sb}\) is average consumption in the same cell. The numerator approximates the immediate consumption response to the transfer, and the denominator scales that response by the cell's average consumption. The left bracket groups the low-skill cells and the right bracket groups the high-skill cells. Long-run effects continue to operate through saving, prices, and the stationary distribution.

\begin{figure}[!htbp]
\centering
\includegraphics[width=0.92\textwidth]{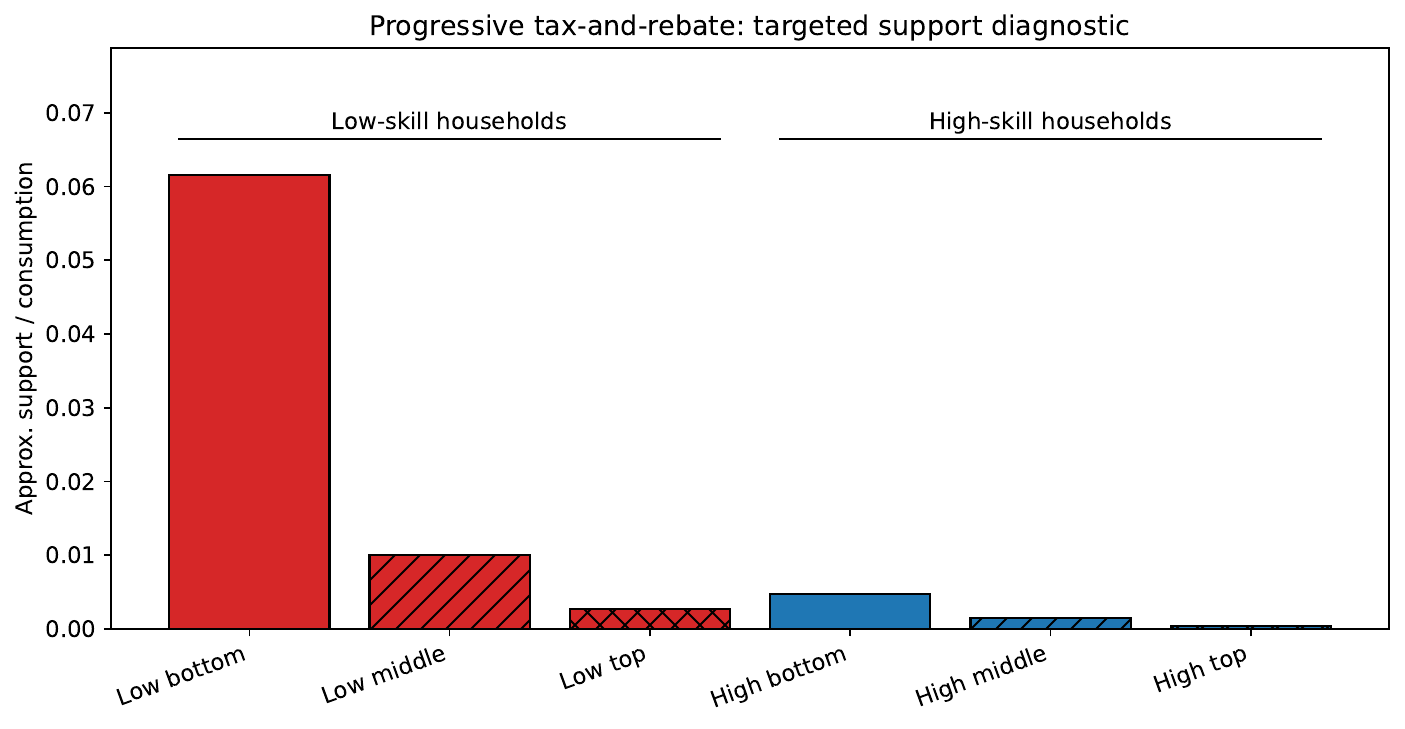}
\caption{Targeted-support diagnostics for the progressive rebate. Bars report \(\overline{\mathrm{MPC}}_{sb}\overline T_{sb}/\bar c_{sb}\). The left bracket groups low-skill households and the right bracket groups high-skill households.}
\label{fig:progressive_targeted_diagnostics}
\end{figure}

\subsection{Domestic ownership and the asset-return channel}

The model does not contain a separate stochastic stock-price index. The household asset is a claim to productive capital and domestic automation rents, with installed capital priced at one. Thus the asset-market objects are the aggregate capital stock \(K\), the productive-capital return \(r\), and the household asset return
\[
R=r+\theta_E\frac{\Pi^A}{K}.
\]
The dividend yield from domestic automation rents is
\[
\text{dividend yield}=R-r=\theta_E\frac{\Pi^A}{K}.
\]
I report both \(R\) and the dividend yield because \(R\) is the total return received by households, while \(R-r\) isolates the ownership channel. This is the channel through which wealthy households, including wealthy low-skilled households, may benefit from automation even when exposed labor income falls.

Domestic ownership matters because wealthy households own capital and automation rents. The main benchmark uses \(\theta_E=0.45\), while the table below reports the low-pass-through case \(0.15\) and higher pass-through cases. Table \ref{tab:ownership} varies \(\theta_E\), the domestic ownership share of net automation rents. Higher domestic ownership raises the household asset return \(R\) relative to the productive capital return \(r\), slightly increases capital accumulation in this calibration, and partially insulates owners. It does not eliminate adverse incidence for low-wealth exposed households, because they own little capital.

\begin{table}[!htbp]
\centering
\caption{Domestic ownership sensitivity, decentralized no-tax equilibrium}
\label{tab:ownership}
\begin{tabular}{ccccccc}
\toprule
$\theta_E$ & $a^{D}$ & $K^{D}$ & $r^{D}$ & $R^{D}$ & $C^{D}$ & div. yield \\
\midrule
0.00 & 0.523 & 1.993 & 0.0045 & 0.0045 & 0.602 & 0.0000 \\
0.15 & 0.524 & 2.007 & 0.0033 & 0.0050 & 0.604 & 0.0017 \\
0.30 & 0.525 & 2.022 & 0.0022 & 0.0055 & 0.607 & 0.0034 \\
0.45 & 0.526 & 2.036 & 0.0010 & 0.0060 & 0.609 & 0.0050 \\
0.60 & 0.527 & 2.050 & -0.0002 & 0.0065 & 0.611 & 0.0067 \\
\bottomrule
\end{tabular}
\end{table}

The aggregate effects of \(\theta_E\) in Table \ref{tab:ownership} are intentionally small in this calibration: changing ownership reallocates automation rents, but it does not directly change the firm's technology or the tax. For that reason I do not plot the aggregate series separately. The economically relevant object is the ownership wedge in household returns.

Figure \ref{fig:ownership_wedge} isolates the domestic-ownership channel. It does not say that ownership changes the technology much; instead it shows how ownership changes the return received by households. This is the channel through which high-wealth households can be protected even when exposed labor income falls.

\begin{figure}[!htbp]
\centering
\includegraphics[width=0.82\textwidth]{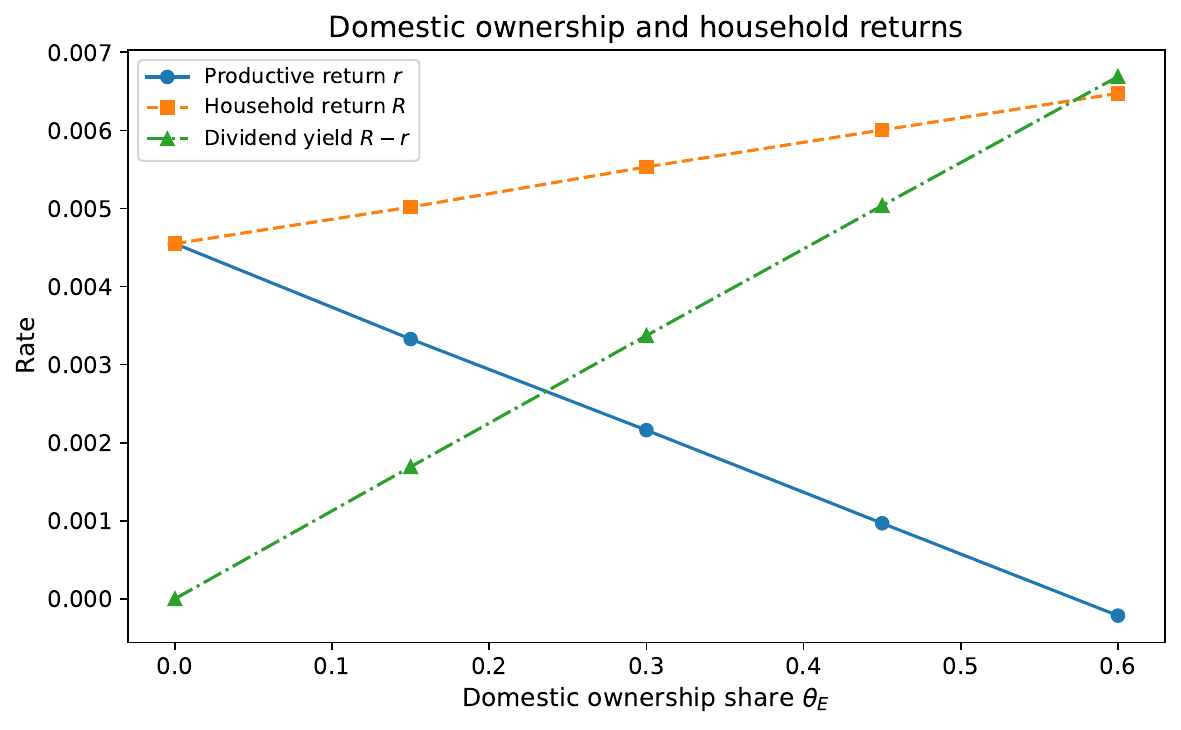}
\caption{Domestic ownership and household returns. The productive capital return \(r\) is the investment opportunity from the competitive firm. The household asset return is \(R=r+\theta_E\Pi^A/K\). The plotted wedge \(R-r\) is the channel through which domestic ownership insulates wealthy households.}
\label{fig:ownership_wedge}
\end{figure}

\subsection{Instrument comparison at the policy-index target}

The derivative of the firm's policy payment, \(P_a(a;X)\), enters its automation condition. If one takes the reduced-form index target \(a^P\) as a political-economy objective, an interior target is implemented only if
\[
P_a(a^P;X^P)=M(a^P;X^P)-\phi-\kappa a^P,
\]
where \(M(a;X)\) is the private marginal benefit from automation before policy. A literal automation tax has
\[
P(a;X)=\tau a,
\qquad P_a(a;X)=\tau.
\]
A retained-labor subsidy \(\sigma\) per unit of paid human task input has
\[
P(a;X)=\tau a-\sigma H(a;g),
\]
so, using the exposure index \(\Lambda_H\) defined in \eqref{eq:lambda_h},
\[
P_a(a;X)=\tau+\sigma\Lambda_H(a;g).
\]
Two policy schedules \(P\) and \(\widetilde P\) implement the same interior index target only if they generate the same marginal payment at the target state,
\[
P_a(a^P;X^P)=\widetilde P_a(a^P;X^P)=\iota^P.
\]
For example, a pure automation tax implements the target with \(\tau=\iota^P\). A pure retained-labor subsidy, with \(P^S(a;X)=-\sigma H(a;g)\), implements the same target if
\[
\sigma=\frac{\iota^P}{\Lambda_H(a^P;g^P)},
\qquad \Lambda_H(a^P;g^P)>0.
\]
The two instruments are equivalent only at the target margin; they need not have the same revenue, distributional incidence, or off-equilibrium effects. The calculation implements the reduced-form index target and allows the incidence of different instruments to be compared.

Fiscal closure determines who ultimately bears or receives the tax revenue. A tax can be rebated to households, dissipated through administrative frictions, spent on public goods, or transferred to another claimant. In the boundary implementation, revenue is zero because \(a=0\). In the interior experiments, revenue is positive and the rebate rule materially affects stationary incidence and household value comparisons.


\section{A Proxy Regime Diagnostic for the Current AI Economy}
\label{sec:empirical_regime_classification}

The model implies that the relevant empirical question is not only whether automation raises measured productivity, but whether the decentralized automation choice leaves the demand-weighted objective increasing or decreasing. Let \(a^{D}\) denote the private automation choice. Define
\[
D_{\mu}(a)
\equiv
\frac{\partial}{\partial a}\left[\lambda C(a)+\mu B_U(a)\right],
\]
where \(C(a)\) is aggregate household consumption after prices, the stationary distribution, and the interest rate have adjusted, and \(B_U(a)\) is the exposed wage bill defined in Section \ref{subsec:stationary_marginal_derivative}. In the empirical proxy below, changes in exposed labor income are the data-side analogue of changes in \(B_U(a)\). Under a local single-crossing condition, \(D_{\mu}(a^{D})>0\) places the policy-index target locally above the decentralized choice, while \(D_{\mu}(a^{D})<0\) places it locally below.

For the empirical implementation, let bins \(b\in\mathcal B\) index households by occupation, education, wealth, age, and industry. Let \(\eta_b\) denote the marginal propensity to consume of bin \(b\), let \(\omega^K_b\) denote the share of automation rents or equity gains accruing to that bin, and let \(\Delta_a y^L_{b,t}\), \(\Delta_a B_{U,t}\), and \(\Delta_a\Pi_t\) denote local changes in bin labor income, the exposed wage-bill analogue, and automation rents around the current automation margin. The empirical analogue is
\[
\widehat D_{\mu,t}
=
\lambda
\sum_{b\in\mathcal B}
\widehat\eta_b
\left[
\widehat{\Delta_a y^L_{b,t}}
+
\widehat\omega^K_{b,t}\widehat{\Delta_a\Pi_t}
\right]
+
\mu \widehat{\Delta_a B_{U,t}} .
\]

Table \ref{tab:empirical_regime_proxies} reports the proxy calibration for the contemporary United States. Current AI diffusion is no longer negligible, but it remains concentrated. The Census BTOS AI supplement reports AI use by 18\% of firms, or 32\% when employment-weighted, during November 2025--January 2026 \citep{BonneyEtAl2026AIDiffusion}. The Real-Time Population Survey reports work-related GenAI use by about 41\% of workers in November 2025, and the Survey of Business Uncertainty reports that 78\% of the labor force works at firms using AI \citep{Allen2026AIAdoption}. The private investment signal is strong, but broad pass-through remains mixed: BLS reports nonfarm business productivity growth of 0.8\% annualized in 2026Q1 and 2.9\% over the year, while real hourly compensation fell 0.5\% in the quarter and the labor share was 54.1\% \citep{BLSProductivity2026Q1}.

\begin{table}[!htbp]
\centering
\caption{Proxy calibration for the current AI regime}
\label{tab:empirical_regime_proxies}
\small
\begin{tabular}{p{0.25\textwidth}p{0.36\textwidth}p{0.27\textwidth}}
\toprule
Model object & Empirical proxy & Current calibration signal \\
\midrule
Automation intensity \(a_t\) & Firm and worker use of AI & AI use is incomplete but material: 18\% firm adoption, 32\% employment-weighted firm adoption, and 41\% employment-weighted worker-task use in the Census AI supplement \citep{BonneyEtAl2026AIDiffusion}. \\
\addlinespace
Private marginal benefit \(\mathcal M(a)\) & Equipment, software, and data-center investment & The private return signal is strong: 2026Q1 equipment investment rose 17.2\%, intellectual-property-products investment rose 13.0\%, and data-center investment rose more than 22\% annualized \citep{USTreasury2026TBAC}. \\
\addlinespace
Broad productivity pass-through & Labor productivity, compensation, and labor share & Productivity growth is positive but broad labor pass-through is incomplete: productivity rose 0.8\% annualized, real hourly compensation fell 0.5\%, and the labor share was 54.1\% \citep{BLSProductivity2026Q1}. \\
\addlinespace
Ownership of automation rents \(\omega^K_b\) & Corporate equity and mutual-fund ownership by wealth group & AI-rent claims are likely to be concentrated: in 2025Q3, the top 1\% held 50.2\% of corporate equities and mutual fund shares, the 90th--99th percentiles held 37.2\%, and the bottom 50\% held 1.1\% \citep{FREDTop1Equity2026,FRED9099Equity2026,FREDBottom50Equity2026}. \\
\addlinespace
MPC weights \(\eta_b\) & MPC heterogeneity by wealth and income & The Fed estimates an MPC out of wealth of 0.8 cents per dollar for the top income quintile and 7.5 cents per dollar for the other quintiles \citep{Beach2025WealthMPC}. \\
\bottomrule
\end{tabular}
\end{table}

Combining the ownership and MPC statistics gives the equity-rent demand pass-through
\[
\widehat\eta^K
=
0.874(0.008)+0.126(0.075)
=
0.016442 .
\]
Thus, one additional dollar of capitalized AI rent generates about 1.6 cents of consumption demand under this two-group mapping. Because direct microdata linking current AI use to identified labor-income changes are not yet sufficient to estimate \(\Delta_a y^L_b\) structurally, I classify the current economy with a normalized sensitivity calibration. Normalizing the AI capital-rent gain to one dollar, the baseline uses \(\widehat{\Delta_a Y^L_{\mathrm{broad}}}=0.005\), \(\widehat{\Delta_a B_U}=-0.025\), \(\lambda=1\), and \(\mu=1\). These values are transparent sensitivity inputs, not estimates of realized aggregate effects on GDP, consumption, employment, or total labor income. This gives
\[
\widehat D_{\mu,2026}^{\mathrm{base}}
= [0.016442+0.005]-0.025
= -0.003558 .
\]
The current economy is therefore close to the boundary but slightly on the adverse-incidence side. The classification is moderate-confidence: current evidence points to substantial task augmentation, while capitalized AI rents remain concentrated and broad labor-income pass-through has not yet clearly dominated the incidence channel.

Table \ref{tab:empirical_current_severity} gives the scale of the number. In the normalization used here, the positive demand channels add \(0.021442\): \(0.016442\) from equity-rent demand pass-through and \(0.005\) from broad labor-income pass-through. The exposed-worker loss subtracts \(0.025\). Thus the positive channels offset 85.8\% of the exposed loss, leaving a residual negative gap of \(0.003558\), or 14.2\% of the exposed loss. Because the calculation is normalized to one dollar of AI rents, the residual gap is 0.36 cents per dollar of normalized AI rents. If AI rents were 1\%, 5\%, or 10\% of GDP, the corresponding GDP-equivalent gaps would be 0.0036\%, 0.0178\%, and 0.0356\% of GDP. These are scale translations of the local diagnostic, rather than estimates of realized aggregate effects.

In economic terms, gains accruing to firms, investors, and complementary workers almost offset the calibrated loss of exposed workers, but do not fully do so. The economy is close to the boundary between the productivity-led and adverse-incidence scenarios.

\begin{table}[!htbp]
\centering
\caption{Scale of the current AI-regime statistic}
\label{tab:empirical_current_severity}
\small
\begin{tabular}{l r}
\toprule
Quantity & Value \\
\midrule
Positive demand channels, $\widehat\eta^K+\widehat{\Delta_a Y^L_{\mathrm{broad}}}$ & 0.0214 \\
Exposed labor-income loss, $|\widehat{\Delta_a Y^L_U}|$ & 0.0250 \\
Residual negative gap, $|\widehat D_\mu|$ & 0.0036 \\
Residual gap per dollar of normalized AI rent & 0.36 cents \\
Positive-channel offset rate & 85.8\% \\
Residual gap as share of exposed loss & 14.2\% \\
GDP-equivalent gap if AI rents equal 1\% of GDP & 0.0036\% of GDP \\
GDP-equivalent gap if AI rents equal 5\% of GDP & 0.0178\% of GDP \\
GDP-equivalent gap if AI rents equal 10\% of GDP & 0.0356\% of GDP \\
\bottomrule
\end{tabular}
\end{table}

Because \(\widehat D_{\mu,2026}^{\mathrm{base}}\) is a calibrated sufficient-statistic value rather than a regression coefficient, Table \ref{tab:empirical_uncertainty_bands} reports calibration uncertainty bands, not frequentist confidence intervals. Holding the exposed-group loss fixed at \(-0.025\), the statistic crosses zero if broad labor-income pass-through rises from the baseline \(0.005\) to \(0.008558\). Holding broad pass-through fixed at \(0.005\), the statistic crosses zero if the exposed wage-bill loss is milder than \(-0.021442\). Thus the baseline is only \(0.003558\) normalized units from the boundary, and the sign is informative but not yet statistically settled.

\begin{table}[!htbp]
\centering
\caption{Calibration uncertainty bands for the current AI regime}
\label{tab:empirical_uncertainty_bands}
\scriptsize
\setlength{\tabcolsep}{3pt}
\begin{tabular}{p{0.25\textwidth}ccc p{0.16\textwidth}}
\toprule
Assumption set & Broad labor term & Exposed labor term & $\widehat D_\mu$ & Sign result \\
\midrule
Baseline point & $0.005$ & $-0.025$ & $-0.0036$ & negative \\
Exposed-loss band & $0.005$ & $[-0.035, -0.015]$ & $[-0.0136, 0.0064]$ & crosses zero \\
Broad-pass-through band & $[0.000, 0.015]$ & $-0.025$ & $[-0.0086, 0.0064]$ & crosses zero \\
Joint labor-pass-through band & $[0.000, 0.015]$ & $[-0.035, -0.015]$ & $[-0.0186, 0.0164]$ & crosses zero \\
\bottomrule
\end{tabular}
\end{table}

\subsection{Interpretation of the current diagnostic}

The baseline value \(\widehat D_{\mu,2026}^{\mathrm{base}}=-0.003558\) is close to zero. It therefore motivates a boundary-regime experiment rather than a large intervention: monitor labor-income pass-through and rent concentration, and vary the strength of rent recycling with the sign and magnitude of the updated statistic. The next subsection specifies that experiment.


\subsection{Policy experiment: state-contingent AI-rent recycling}
\label{subsec:policy_ai_rent_recycling}

The proxy diagnostic suggests a state-contingent instrument experiment. In a productivity-led regime the surcharge is zero. In an adverse-incidence regime a surcharge on concentrated automation rents is paired with transfers to high-MPC households. The experiment changes the firm's first-order condition and the household incidence of automation rents at the same time.

A practical policy is an \emph{AI-rent rebate}: a surcharge on concentrated AI rents, paired with a refundable worker rebate or payroll-tax credit targeted to high-MPC workers and households in high-exposure labor-market cells. In the notation of the model, the surcharge is the tax wedge \(\tau\) in the firm's automation condition,
\[
\mathcal M(a,\tau)-\phi-\kappa a-\tau=0,
\]
and the rebate is a transfer rule that raises \(C(a)\) by directing revenue to bins with high \(\eta_b\). The policy does not tax ordinary capital accumulation or block productivity-improving AI. It taxes the part of the return that is most likely to be capitalized into concentrated equity claims and recycles the revenue toward the households whose spending has the largest aggregate-demand effect.

The implementable version has four components.
\begin{enumerate}[label=(\roman*)]
\item \textbf{Tax base.} Apply the surcharge only to large firms or consolidated groups with substantial AI-related intangible, software, data-center, or automation-capital income. The base should be incremental rents rather than gross AI investment: profits or markups above a pre-adoption benchmark, above-normal returns on AI-intensive intangible capital, or a formula apportionment based on AI-related capital expenditure and excess operating margin. This avoids penalizing ordinary productivity investment and avoids requiring the government to classify every task as automated.

\item \textbf{Revenue recycling.} Return the revenue through a refundable worker credit, an expanded earned-income-tax-credit-style payment, or a payroll-tax rebate. The target is not a universal transfer to all households, but a transfer weighted toward low-wealth, high-MPC workers and households in occupations or industries with high AI exposure. In the model this raises \(\sum_b\eta_bT_b\), where \(T_b\) is the transfer received by bin \(b\). The United States already uses refundable credits, including the earned income tax credit, to deliver income support to low- and moderate-income workers \citep{IRSEITC2026}.

\item \textbf{Exposure adjustment.} Let the transfer weight depend on observable exposure indicators: occupation-level AI exposure, industry AI adoption, local labor-market displacement, and earnings losses relative to a pre-AI trend. This makes the policy closer to wage insurance than to a static redistribution program. Workers who experience wage losses from AI adoption receive larger rebates, while workers in complementary occupations receive smaller or no additional payments.

\item \textbf{Automatic trigger.} Activate or strengthen the surcharge-rebate rule when \(\widehat D_{\mu,t}\) becomes negative, or when observable sufficient statistics move in the same direction: rapid AI investment, concentrated equity gains, weak labor-income pass-through, and declining labor share. If \(\widehat D_{\mu,t}>0\), the surcharge can be reduced or suspended. Thus the policy is state-contingent rather than permanently anti-automation.
\end{enumerate}

The rent-recycling experiment differs from a tax on AI inputs. An input tax can reduce productivity-enhancing adoption even when \(D_{\mu}(a^{D})>0\). The rent surcharge instead targets the distribution of income claims and is activated only when the proxy statistic is negative. In that case, a positive \(\tau\) moves the private automation choice toward \(a^P\), while the high-MPC rebate raises the consumption pass-through from the remaining automation surplus. The calculation is a model-based instrument comparison; implementation would require a measurable rent base and an empirically validated exposure rule.

\section{Replication code}

The replication code is available at
\[
\texttt{https://github.com/ebayr/automation-demand-externality}.
\]
The repository contains the Python scripts used to solve the stationary HJB--KFE systems, compute the tax and ownership experiments, generate the tables, and reproduce the figures. The skill-mobility robustness table is generated by resolving the same public solver after setting \(\zeta\in\{-0.75,0,0.75\}\) and clearing its solution cache between cases.

\section{Conclusion}

Automation changes stationary allocations through productivity, task substitution and complementarity, skill transitions, capital obsolescence, precautionary saving, and ownership of automation rents. In the adverse-incidence scenario, decentralized automation lowers low-skill income, consumption, and capital relative to the no-automation allocation. In the productivity-led scenario, output, consumption, and capital rise.

The mobility robustness exercise shows why these results should be stated conditionally. Holding automation fixed at the baseline intensity, reversing the skill-transition tilt raises the high-skill population share, stationary consumption, output, and capital substantially. The adverse mobility specification and the obsolescence parameter are therefore important scenario assumptions rather than secondary details.

The proxy empirical diagnostic places the current U.S. economy close to the boundary between the productivity-led and adverse-incidence scenarios. Positive demand channels offset most of the normalized exposed-worker loss, while concentrated equity ownership limits the consumption pass-through from AI rents. The diagnostic is deliberately reported with sensitivity bands because modest changes in broad labor-income pass-through reverse its sign.

The main quantitative conclusion is that the long-run effect of automation cannot be inferred from its productivity effect alone. Labor-income exposure, reskilling, obsolescence, and rent ownership jointly determine consumption, saving, and capital accumulation. The reduced-form policy index and the rent-recycling experiment show how alternative fiscal closures distribute the remaining gains. A full welfare analysis of transition paths is a distinct next step.

\appendix
\section{Computational Appendix}

\subsection{Stationary system}

For given \((a,\tau)\), the numerical equilibrium solves for
\[
\{V_s(k),c_s(k),\dot k_s(k),g_s(k),T_s(k)\}_{s\in\{U,H\}},
\qquad
K,L,H,C,Y,r,w,R.
\]
The HJB, KFE, and factor-pricing equations are those stated in the text. Capital-market clearing is
\[
K^{hh}(r,a,\tau)=\sum_s\int kg_s^{r,a,\tau}(k)\dd k=K^{firm}(r,a),
\]
where firm capital demand is obtained from
\[
r+\delta(a)=\alpha Z(a)K^{\alpha-1}L^{1-\alpha},
\]
so
\[
K^{firm}(r,a)=L(a;g)\left(\frac{\alpha Z(a)}{r+\delta(a)}\right)^{1/(1-\alpha)}.
\]
The goods-market residual reported by the code is
\[
\mathcal E^{goods}=Y-[C+\delta(a)K+\Phi^A(a)+\omega_T\tau a+(1-\theta_E)\Pi^A].
\]

\subsection{Numerical implementation parameters and upwind scheme}
\label{app:numerical_parameters}

The numerical parameters are kept out of Table \ref{tab:all_parameters} because they are not economic primitives. Table \ref{tab:numerical_parameters} reports the baseline grid and solver settings used for the stationary computations.

\begin{table}[!htbp]
\centering
\caption{Numerical implementation parameters}
\label{tab:numerical_parameters}
\small
\begin{tabular}{lll}
\toprule
Object & Symbol or code name & Value \\
\midrule
Asset grid size & \(J\) & 31 \\
Asset lower bound & \(\underline k\) & 0 \\
Asset upper bound & \(\bar k\) & 18 \\
Grid spacing & \(\Delta k\) & \(18/(31-1)=0.60\) \\
Automation grid for reported searches & \(\mathcal A\) & 61 points on \([0,0.90]\) \\
HJB implicit pseudo-time step & \(\Delta\) & 1000 \\
HJB maximum iterations & -- & 20 \\
HJB tolerance & -- & \(10^{-5}\) \\
Minimum consumption floor & -- & \(10^{-10}\) \\
Interest-rate root tolerance & -- & approximately \(3\times10^{-5}\) \\
Automation-root tolerance & -- & approximately \(5\times10^{-4}\) \\
\bottomrule
\end{tabular}
\end{table}

For a candidate \((r,a,\tau)\), the HJB is discretized by an implicit upwind finite-difference scheme. Let
\[
I_{s,j}=Rk_j+we_sh_s(a)+T_s(k_j)
\]
denote non-consumption income at grid point \((k_j,s)\). Given a value-function iterate \(V_{s,j}\), define forward and backward derivatives
\[
D^+V_{s,j}=\frac{V_{s,j+1}-V_{s,j}}{\Delta k},
\qquad
D^-V_{s,j}=\frac{V_{s,j}-V_{s,j-1}}{\Delta k}.
\]
The candidate consumption policies are
\[
c^+_{s,j}=(D^+V_{s,j})^{-1/\gamma},
\qquad
c^-_{s,j}=(D^-V_{s,j})^{-1/\gamma},
\]
and the corresponding drifts are
\[
s^+_{s,j}=I_{s,j}-c^+_{s,j},
\qquad
s^-_{s,j}=I_{s,j}-c^-_{s,j}.
\]
The upwind derivative uses \(D^+V\) when the drift points to the right, \(D^-V\) when it points to the left, and the state-constraint derivative \(u'(I_{s,j})\) when the drift is locally zero. Boundary derivatives are chosen so that the savings drift does not leave the grid:
\[
\dot k_s(k_1)\ge0,
\qquad
\dot k_s(k_J)\le0.
\]

After the upwind direction is chosen, write the selected drift as \(s_{s,j}=I_{s,j}-c_{s,j}\), with
\[
s_{s,j}^+=\max\{s_{s,j},0\},\qquad s_{s,j}^-=\min\{s_{s,j},0\}.
\]
The drift contribution to the generator is
\[
(\mathcal A_k v)_{s,j}
=\frac{s_{s,j}^+}{\Delta k}(v_{s,j+1}-v_{s,j})
+\frac{-s_{s,j}^-}{\Delta k}(v_{s,j-1}-v_{s,j}),
\]
with the boundary terms suppressed when \(j=1\) or \(j=J\) and the no-outflow condition imposed. The skill-switching contribution is
\[
(\mathcal A_q v)_{s,j}=\sum_{s'\ne s}q_{ss'}(a)(v_{s',j}-v_{s,j}),
\]
so the full generator is \(\mathcal A=\mathcal A_k+\mathcal A_q\). The implicit HJB step solves
\[
\left[(\rho+1/\Delta)I-\mathcal A\right]V^{new}
=u(c)+\frac{1}{\Delta}V^{old},
\]
iterating until the sup-norm change in \(V\) is below the tolerance in Table \ref{tab:numerical_parameters}.

The resulting drift matrix and the skill-transition matrix \(Q(a)\) define the finite-state generator \(\mathcal A(r,a,\tau)\). The stationary distribution solves
\[
\mathcal A(r,a,\tau)^\top g=0,
\qquad
\mathbf 1^\top g=1,
\]
where one row of the linear system is replaced by the normalization condition.

\subsection{Calibration mechanics and scenario construction}

The calibration is not a structural estimation exercise. It is a disciplined quantitative illustration and a set of scenario exercises. Each parameter is tied either to a standard macro value, an empirical guidepost, or a scenario dimension. The mechanics are as follows. First, the skill-efficiency ratio \(e_H/e_U\) is chosen to match the high-/low-education wage ratio discussed in the introduction. Second, the exposure and complementarity parameters \(\chi_U,\beta_H,\xi_U,\eta_H^L\) are chosen so that automation sharply reduces low-skilled paid tasks while increasing high-skilled task value. This creates the two-sided AI debate in the model: firms and high-skilled workers have a genuine reason to adopt, while exposed low-wealth households lose income. Third, \(\psi_Z\) is mapped into a full-index productivity gain through \(e^{\psi_Z}-1\). Fourth, \(\delta_A\) governs whether automation makes legacy capital obsolete. Fifth, \(\theta_E\) is varied because domestic ownership of AI rents is a distributional object. Sixth, the tax and rebate exercises vary \(\tau\), \(\omega_T\), and the rebate kernel \(b_s(k)\) while holding technology primitives fixed.

For each proposed parameter vector, the code solves the full stationary fixed point. The calibration is rejected if capital-market clearing fails, if the goods-market residual is not near zero, or if the decentralized automation residual does not have a well-defined root. Prices are functions of the distribution through market clearing, and the stationary distribution is obtained from the HJB--KFE system. Gu et al. (2024) emphasize precisely this structure: heterogeneous-agent equilibria require agent optimization, KFE consistency, and market-clearing prices as functions of aggregate states and distributions \citep{gu2024}.

Table \ref{tab:calibration_urls} lists the public URLs for the empirical guideposts used in the calibration discussion.

\begin{table}[!htbp]
\centering
\caption{Empirical guideposts and report URLs used in the calibration discussion}
\label{tab:calibration_urls}
\scriptsize
\begin{tabular}{p{0.21\textwidth}p{0.30\textwidth}p{0.39\textwidth}}
\toprule
Source & Object used in the paper & URL \\
\midrule
BLS Education Pays & Education wage premia used to discipline \(e_H/e_U\). & \url{https://www.bls.gov/careeroutlook/2025/data-on-display/education-pays.htm} \\
BLS CPS Table 37b & Employment and earnings by education. & \url{https://www.bls.gov/cps/cpsaat37b.htm} \\
Pew Research Center & AI exposure and earnings by occupation. & \url{https://www.pewresearch.org/social-trends/2023/07/26/earnings-of-workers-with-more-or-less-exposure-to-ai/} \\
Federal Reserve DFA & Distribution of corporate equity and mutual-fund ownership. & \url{https://www.federalreserve.gov/releases/z1/dataviz/dfa/distribute/chart/} \\
Treasury TIC & Foreign holdings of U.S. securities. & \url{https://home.treasury.gov/data/treasury-international-capital-tic-system} \\
Stanford AI Index & AI business-adoption and investment evidence. & \url{https://hai.stanford.edu/ai-index/2025-ai-index-report} \\
Goldman Sachs Research & Productivity-upside guidepost for generative AI. & \url{https://www.goldmansachs.com/insights/articles/generative-ai-could-raise-global-gdp-by-7-percent} \\
McKinsey Global Institute & Generative-AI productivity and adoption scenarios. & \url{https://www.mckinsey.com/mgi/our-research/the-economic-potential-of-generative-ai-the-next-productivity-frontier} \\
Gu--Lauriere--Merkel--Payne & Continuous-time heterogeneous-agent master-equation methods. & \url{https://arxiv.org/abs/2406.13726} \\
\bottomrule
\end{tabular}
\end{table}

\begin{enumerate}[label=(\roman*)]
\item \textbf{Production parameters.} The capital share is \(\alpha=0.36\), a standard Cobb--Douglas value. Baseline depreciation is \(\delta_0=0.06\). Automation-induced obsolescence \(\delta_A\) is varied because AI may make legacy capital less useful in the adverse-incidence regime but need not do so in the productivity-led regime.
\item \textbf{Preferences.} The coefficient of relative risk aversion is \(\gamma=2\). The discount rate \(\rho=0.15\) is the household discount rate in the stationary finite-grid economy. This value is not meant to be an annual time-preference estimate. It is a numerical calibration for the model period used here: with the coarse asset grid, strong automation-induced labor-income risk, and a single capital/equity asset, much lower values of \(\rho\) push precautionary saving and aggregate capital toward the grid boundary. The parameter is therefore chosen so that the computed economy has a nondegenerate wealth distribution, an interior market-clearing interest rate, and a stable HJB--KFE solution. The admissible choice is not arbitrary: \(\rho>0\), the market-clearing interest rate must lie inside the searched interval, the no-outflow boundary conditions must hold, and the goods-market residual must be close to zero.
\item \textbf{Skill efficiencies.} The ratio \(e_H/e_U=1.67\) is chosen to match the approximate earnings ratio between bachelor's-degree workers and high-school workers in Bureau of Labor Statistics education-earnings data. It is also close to the Pew high-AI-exposure versus low-exposure wage comparison.
\item \textbf{Automation technology.} The productivity function is \(Z(a)=Z_0e^{\psi_Za}\). A full-index productivity gain \(g_Z\) maps into \(\psi_Z=\log(1+g_Z)\). The baseline uses a moderate gain, while the productivity-led case uses a larger \(\psi_Z\). The functions \(h_U,h_H,\ell_U,\ell_H\) encode low-skill exposure and high-skill complementarity; they are varied across the two AI regimes.
\item \textbf{Automation costs.} The parameters \(\phi\) and \(\kappa\) govern linear and convex real costs of automation. They are set so that the decentralized baseline has an automation index near one half. In the productivity-led counterfactual, \(\kappa\) is higher to avoid pushing the model mechanically to the upper automation boundary when productivity gains are large.
\item \textbf{Skill mobility.} The parameters \(q_0\) and \(\zeta\) govern automation-dependent skill transitions. The baseline \(\zeta>0\) is an adverse-reskilling scenario; Table \ref{tab:mobility_robustness} also reports neutral and reversed tilts.
\item \textbf{Policy index.} The parameters \(\lambda\) and \(\mu\) are not preference primitives of households. They define a reduced-form political-economy index over aggregate consumption \(C\) and the exposed collective wage bill \(B_U\); that index is not a welfare criterion.
\item \textbf{Ownership and fiscal closure.} The central pass-through value is \(\theta_E=0.45\). The robustness cases \(0.15\) and \(0.75\) represent low and high domestic pass-through of automation rents. The fiscal friction \(\omega_T=0.15\) is a scenario parameter measuring administrative or political-economy leakage from tax revenue.
\end{enumerate}

\subsection{Pseudo-code}

\begin{enumerate}[label={\arabic*.},leftmargin=0.9cm]
\item Choose an asset grid and an automation grid.
\item For each \((a,\tau)\), guess \(r\).
\item Compute \(K^{firm}(r,a)\), \(w(r,a)\), labor income \(we_sh_s(a)\), net automation rent \(\Pi^A\), dividend yield \(\theta_E\Pi^A/K\), tax revenue \(\tau a\), and rebate schedule \(T_s(k)\).
\item Solve the HJB by implicit upwind finite differences.
\item Construct the generator for \((k,s)\) and solve the stationary KFE.
\item Compute household capital supply \(K^{hh}\).
\item Update \(r\) until \(K^{hh}=K^{firm}\).
\item Store \(K,L,H,C,Y,B,B_U,r,w,R,g,V,c\) and the goods-market residual.
\item Solve the decentralized automation residual.
\item Compute the reduced-form policy-index target by maximizing \(G_\mu(a)\).
\item Compute incidence by skill and wealth bins using the decentralized distribution as weights.
\end{enumerate}

An aggregate-shock extension would add a law of motion for the distribution and a master-equation block to this stationary HJB--KFE computation.

\subsection{Diagnostic no-wealth-heterogeneity benchmark}
\label{app:diagnostic_benchmark}

This appendix gives a no-wealth-heterogeneity benchmark that serves as a diagnostic comparison for the quantitative model. The benchmark suppresses precautionary saving, wealth-dependent marginal propensities to consume, ownership of automation rents, fiscal rebates, and the stationary wealth distribution. Those objects are present in the finite-grid heterogeneous-agent economy of Appendix \ref{app:finite_grid_equilibrium}. The purpose of the simplified environment is transparency: capital, wages, and skill masses can be eliminated analytically, so the derivative of the reduced-form policy index can be read from primitive exposure and productivity parameters.

Skill transitions are taken as primitive,
\[
q_{UH}(a)=\bar q_{UH}e^{-\zeta_U a},
\qquad
q_{HU}(a)=\bar q_{HU}e^{\zeta_H a},
\]
with stationary masses
\[
m_U(a)=\frac{q_{HU}(a)}{q_{UH}(a)+q_{HU}(a)},
\qquad
m_H(a)=1-m_U(a).
\]
Production-task labor and paid-task labor are
\[
L(a)=m_U(a)e_Ue^{-\xi_Ua}+m_H(a)e_He^{\eta_H^La},
\]
\[
H(a)=m_U(a)e_Ue^{-\chi_Ua}+m_H(a)e_He^{\beta_Ha}.
\]
Technology and depreciation are
\[
Z(a)=Z_0e^{\psi_Za},
\qquad
\delta(a)=\delta_0+\delta_Aa.
\]
In this benchmark the representative Euler condition pins down the productive-capital return at a required return \(\bar r\). Cobb--Douglas factor pricing then gives
\[
K(a)=L(a)\left(\frac{\alpha Z(a)}{\bar r+\delta(a)}\right)^{1/(1-\alpha)},
\]
\[
w(a)=(1-\alpha)Z(a)^{1/(1-\alpha)}
\left(\frac{\alpha}{\bar r+\delta(a)}\right)^{\alpha/(1-\alpha)}.
\]
Thus wages and capital are equilibrium objects in the diagnostic benchmark; they are not held fixed. Group labor incomes are
\[
y_U^L(a)=w(a)e_Ue^{-\chi_Ua},
\qquad
y_H^L(a)=w(a)e_He^{\beta_Ha}.
\]
The local consumption approximation is
\[
C(a)=C_0+\eta_Um_U(a)y_U^L(a)+\eta_H^Cm_H(a)y_H^L(a),
\qquad \eta_U,\eta_H^C\in[0,1],
\]
and the reduced-form policy index is
\[
G_\mu(a)=\lambda C(a)+\mu m_U(a)y_U^L(a).
\]
Define
\[
\omega_w(a)=\frac{d\log w(a)}{da}
=\frac{\psi_Z}{1-\alpha}
-\frac{\alpha}{1-\alpha}\frac{\delta_A}{\bar r+\delta(a)},
\]
so that
\[
\frac{d\log y_U^L(a)}{da}=\omega_w(a)-\chi_U,
\qquad
\frac{d\log y_H^L(a)}{da}=\omega_w(a)+\beta_H.
\]
Finally set
\[
\Gamma_U(a)=(\lambda\eta_U+\mu)y_U^L(a)
\left[m_U'(a)+m_U(a)(\omega_w(a)-\chi_U)\right],
\]
\[
\Gamma_H(a)=\lambda\eta_H^Cy_H^L(a)
\left[m_H'(a)+m_H(a)(\omega_w(a)+\beta_H)\right].
\]
The private automation residual in the diagnostic benchmark is
\[
F(a)=\mathcal M(a)-\phi-\kappa a,
\qquad
\mathcal M(a)=
\left[\psi_Z+(1-\alpha)\frac{L'(a)}{L(a)}\right]Y(a)+w(a)[-H'(a)].
\]

\paragraph{Derivative decomposition.}
In the no-wealth-heterogeneity benchmark,
\[
G_\mu'(a)=\Gamma_U(a)+\Gamma_H(a).
\]
If the private residual \(F\) has a unique root \(a^{D}\) on an interval \(I\), the policy index \(G_\mu\) is single-peaked, and
\[
\Gamma_U(a)^- > \Gamma_H(a)^+
\qquad\text{for all }a\in I,
\]
then \(a^P<a^{D}\). Conversely, if
\[
\Gamma_H(a)^+ > \Gamma_U(a)^-
\qquad\text{for all }a\in I,
\]
and the positive productivity/high-skill term dominates on the relevant interval, then \(a^P>a^{D}\) whenever the private root lies in that interval.

The stationary masses follow from the invariant distribution of the two-state Markov chain. Since the skill masses are functions of \(a\), both the level and the derivative of \(L(a)\) and \(H(a)\) are determined by primitive transition and exposure parameters. The Euler condition \(r(a)=\bar r\), together with the Cobb--Douglas capital first-order condition
\[
\bar r+\delta(a)=\alpha Z(a)K(a)^{\alpha-1}L(a)^{1-\alpha},
\]
gives the displayed expression for \(K(a)\). Substituting this expression into the wage first-order condition
\[
w(a)=(1-\alpha)Z(a)K(a)^\alpha L(a)^{-\alpha}
\]
gives the displayed expression for \(w(a)\), so wages and capital adjust within the diagnostic benchmark.

Differentiating \(\log w(a)\) gives \(\omega_w(a)\). Differentiating group labor incomes gives the two labor-income elasticities. Differentiating
\[
G_\mu(a)=\lambda\left[C_0+\eta_Um_U(a)y_U^L(a)+\eta_H^Cm_H(a)y_H^L(a)\right]
+\mu m_U(a)y_U^L(a)
\]
yields \(G_\mu'(a)=\Gamma_U(a)+\Gamma_H(a)\). A negative derivative on the interval containing the private root places the index maximizer to its left; a positive derivative places the maximizer to its right.

\subsection{Finite-grid existence and conditional uniqueness}\label{app:finite_grid_equilibrium}

The result below is deliberately stated for the finite-grid stationary economy that is actually computed. It is not a global uniqueness theorem for the infinite-dimensional heterogeneous-agent economy.

For fixed \((r,a,\tau)\), let \(\mathcal A(r,a,\tau)\) denote the finite-state generator over grid points \((k_j,s)\), and let \(g^{r,a,\tau}\) denote its invariant distribution. Define
\[
\mathcal E_K(r,a,\tau)=K^{hh}(r,a,\tau)-K^{firm}(r,a).
\]
After market clearing in \(r\), define the reduced automation residual
\[
\mathcal E_a(a,\tau)=\mathcal M(a,\tau)-\phi-\kappa a-\tau,
\]
where \(\mathcal M(a,\tau)\) is the private marginal automation benefit after the stationary distribution and interest rate have adjusted.

\begin{proposition}[Finite-grid stationary equilibrium]
Fix a compact asset grid \(\mathcal K_J=\{k_1,\ldots,k_J\}\), an automation interval \([0,\bar a]\), and an interest-rate interval \(\mathcal R=[\underline r,\overline r]\subset(-\delta(\bar a),\rho)\). Suppose:
\begin{enumerate}[label=(\roman*)]
\item for each \((r,a,\tau)\), the finite-grid discounted HJB has a unique solution and the induced policy is continuous in \((r,a,\tau)\);
\item the upwind scheme satisfies no-outflow boundary conditions, \(\dot k_s(k_1)\ge0\) and \(\dot k_s(k_J)\le0\);
\item \(q_{UH}(a)>0\) and \(q_{HU}(a)>0\) for all \(a\in[0,\bar a]\);
\item for each \(a\), \(\mathcal E_K(\cdot,a,\tau)\) is continuous and changes sign on \(\mathcal R\);
\item the reduced automation residual \(\mathcal E_a(\cdot,\tau)\) is continuous and changes sign on \([0,\bar a]\).
\end{enumerate}
Then a finite-grid stationary equilibrium exists. If \(\mathcal E_K(\cdot,a,\tau)\) is strictly monotone for each \(a\), and \(\mathcal E_a(\cdot,\tau)\) is strictly decreasing on \([0,\bar a]\), then the market-clearing interest rate and decentralized automation rate are unique.
\end{proposition}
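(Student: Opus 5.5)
The proof works in three layers: first the invariant distribution on the grid for fixed prices, then the interest rate for fixed automation, then automation itself. Existence comes from the intermediate value theorem at each layer; uniqueness comes from strict monotonicity.

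\emph{Invariant distribution.} Fix \((r,a,\tau)\in\mathcal R\times[0,\bar a]\times\{\tau\}\). By (i) the discretized HJB has a unique solution, so the upwind generator \(\mathcal A(r,a,\tau)=\mathcal A_k+\mathcal A_q\) is a well-defined \(2J\times 2J\) intensity matrix: off-diagonal entries are nonnegative and rows sum to zero. Assumption (ii) guarantees that the drift part has no outflow at \(k_1\) and \(k_J\), so no mass leaves the grid. The Markov chain on the finite set \(\mathcal K_J\times\{U,H\}\) therefore has at least one invariant probability vector.

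To get uniqueness of that vector, I would use (iii). Since \(q_{UH},q_{HU}>0\), the two skill layers communicate at every grid point. Along the grid, the drift moves mass monotonically toward the zeros of the savings functions. I would argue that the chain has a single closed communicating class: switching skill at any node lets the process reach the attracting region of the other skill's drift, so all closed sets intersect. A single closed class gives a unique invariant \(g^{r,a,\tau}\). If this combinatorial argument is delicate, I would fall back on a weaker claim: take the invariant distribution selected by the normalized linear system \(\mathcal A^\top g=0\), \(\mathbf 1^\top g=1\), which is what the code solves.

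Next comes continuity. The policy is continuous by (i), so \(\mathcal A\) is continuous in \((r,a,\tau)\). When the kernel of \(\mathcal A^\top\) is one-dimensional, the normalized solution is a continuous function of the matrix entries, for example by Cramer's rule on the bordered system. Hence \(g^{r,a,\tau}\), \(K^{hh}\), \(L\), \(H\) and \(C\) are continuous. Firm demand \(K^{firm}(r,a)\) is continuous on \(\mathcal R\) because \(r+\delta(a)>0\) there, which holds since \(\mathcal R\subset(-\delta(\bar a),\rho)\). This is what makes \(\mathcal E_K\) continuous, as (iv) asserts.

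\emph{Interest rate and automation.} For each \(a\), \(\mathcal E_K(\cdot,a,\tau)\) is continuous and changes sign on \(\mathcal R\) by (iv), so the intermediate value theorem yields a root \(r(a)\). This delivers capital-market clearing (iv) of Definition \ref{def:stationary_equilibrium}. Then \(w\), \(\Pi^A\), \(R\) and \(T_s\) are defined by the pricing and fiscal formulas. Fiscal balance holds by construction of the rebate kernel \eqref{eq:rebate_schedule}. Goods-market clearing \eqref{eq:goods_residual} follows from Walras' law: aggregate the household budgets against the stationary \(g\) (the KFE makes \(\sum_s\int \dot k_s g_s\,dk=0\)), and combine with zero profit and the definition of \(\Pi^A\).

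The reduced residual \(\mathcal E_a(\cdot,\tau)\) is continuous by (v), with the root selection \(r(a)\) taken as given there; it changes sign, so there is a root \(a^D\). The household, firm and automation conditions of Definition \ref{def:stationary_equilibrium} then hold simultaneously, which proves existence.

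\emph{Uniqueness.} Strict monotonicity of \(\mathcal E_K(\cdot,a,\tau)\) gives at most one zero in \(r\), so \(r(a)\) is a genuine function. This makes \(\mathcal E_a\) single-valued, and its continuity is then exactly hypothesis (v). Strict decrease of \(\mathcal E_a\) then gives at most one root \(a^D\). Since the distribution is unique for given prices, the whole equilibrium tuple is unique.

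\emph{Main obstacle.} The hardest step is the unique-invariant-distribution and continuity step. If the generator had several closed classes, \(g\) could jump and \(K^{hh}\) could fail to be continuous. I expect to resolve this with the strict positivity of the skill switching rates in (iii) together with the no-outflow structure in (ii). Failing that, I would treat continuity of \(\mathcal E_K\) as already supplied by hypothesis (iv), which suffices for existence.
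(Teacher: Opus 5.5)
Your argument is correct for what the proposition claims, and its load-bearing steps are the paper's. An invariant distribution exists on the grid for each \((r,a,\tau)\), (iv) and the intermediate value theorem give a market-clearing \(r\), (v) gives a root \(a^{D}\), and strict monotonicity gives uniqueness of \(r\) and \(a\). Your Walras'-law check of goods-market clearing is a useful addition, since the paper's proof only asserts it. On the grid it works because (ii) makes \((\mathcal A k)_{s,j}\) equal the selected drift \(s_{s,j}\) at \(k_1\) and \(k_J\) as well, so \(g^\top\mathcal A k=0\) is exactly zero aggregate saving.

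Two side claims should be dropped or repaired, although the statement needs neither.

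First, nothing in (i)--(iii) forces a single closed class. Suppose the selected drifts of both skill types vanish at two interior nodes \(k_{j_1}\ne k_{j_2}\). Then \(\{k_{j_1}\}\times\{U,H\}\) and \(\{k_{j_2}\}\times\{U,H\}\) are both closed, the invariant distribution is not unique, and the bordered system the code solves is singular. So your closing claim that the whole equilibrium tuple is unique is unsupported. The proposition asserts uniqueness of \(r\) and \(a\) only, relative to the selection \(g^{r,a,\tau}\) already built into (iv). If you want uniqueness of \(g\), add a reachability condition. For instance, require the low-skill drift to be strictly negative at every \(j>1\); then \((k_1,U)\) is reachable from every state, so there is exactly one closed class.

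Second, \(r>-\delta(\bar a)\) yields \(r+\delta(a)>0\) for all \(a\in[0,\bar a]\) only if \(\delta_A\le 0\). With \(\delta_A>0\) you need \(r>-\delta_0\). So continuity of \(K^{firm}\) on \(\mathcal R\) has to come from (iv), exactly as in the paper.
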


\begin{proof}
Fix \((r,a,\tau)\). The finite-grid HJB is a discounted finite-state control problem. The discount rate \(\rho>0\) and assumption (i) give a unique value function and policy. Assumption (ii) keeps the savings drift inside the compact grid. Assumption (iii) makes the skill process irreducible; combined with the upwind transition matrix, the generator has an invariant distribution on its communicating class. Hence \(K^{hh}(r,a,\tau)\) and \(K^{firm}(r,a)\) are well defined.

By assumption (iv), \(\mathcal E_K\) is continuous in \(r\) and changes sign on a compact interval. The intermediate value theorem gives a market-clearing interest rate. Strict monotonicity gives uniqueness of that interest rate. Given the market-clearing interest rate for each \(a\), the reduced automation residual is well defined. Assumption (v) and the intermediate value theorem give an automation root. Strict monotonicity of \(\mathcal E_a\) gives uniqueness. Combining the HJB policy, invariant distribution, market-clearing interest rate, and automation root yields a finite-grid stationary equilibrium.
\end{proof}

For the reported calibration, the finite-grid conditions are checked directly. Table \ref{tab:existence_checks} reports the residual and derivative checks on the automation grid. The reduced residual changes sign between \(a=0.510\) and \(a=0.525\), and its numerical derivative is negative throughout the grid. The goods-market residuals are below \(10^{-7}\) in absolute value.

\begin{table}[!htbp]
\centering
\caption{Numerical verification of finite-grid equilibrium conditions}
\label{tab:existence_checks}
\begin{tabular}{lc}
\toprule
Object & Value \\
\midrule
\(\mathcal E_a(0,0)\) & 0.5887 \\
\(\mathcal E_a(0.90,0)\) & -0.2483 \\
Root bracket for \(a^{D}\) & [0.510, 0.525] \\
\(\max_a \partial \mathcal M(a,0)/\partial a\) on grid & -0.0502 \\
\(\max_a \partial \mathcal E_a(a,0)/\partial a\) on grid & -0.5702 \\
Goods residual, decentralized allocation & \(-5.85\times10^{-8}\) \\
Goods residual, policy-index allocation & \(-4.44\times10^{-10}\) \\
\bottomrule
\end{tabular}
\end{table}

\begin{figure}[!htbp]
\centering
\includegraphics[width=0.82\textwidth]{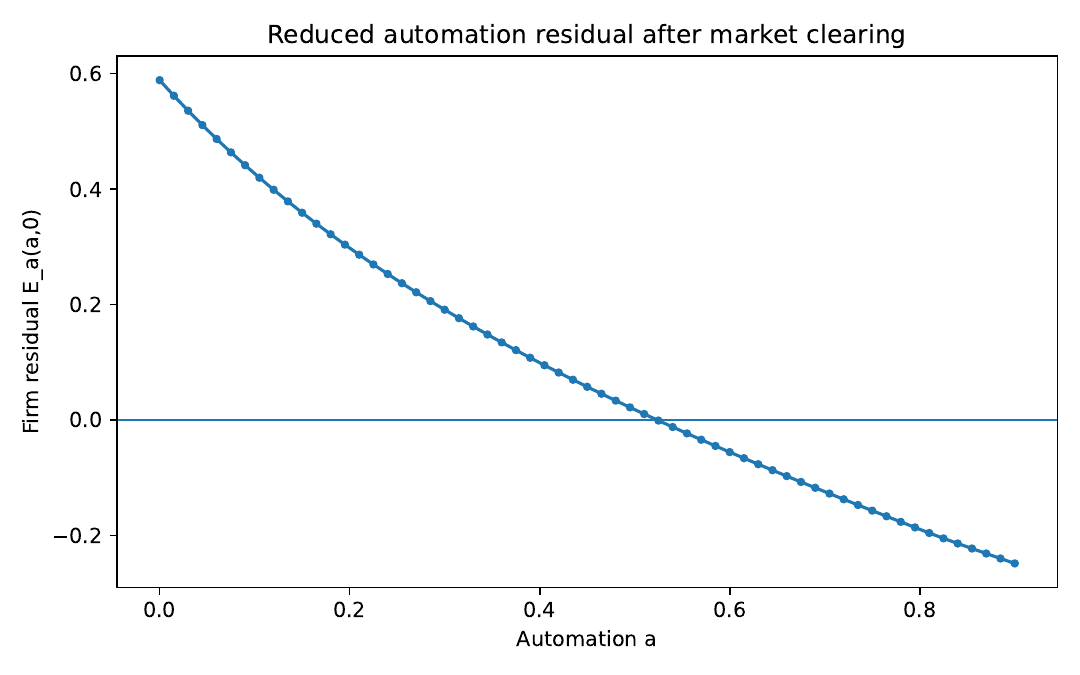}
\caption{Reduced automation residual after market clearing. The residual crosses zero once on the computed grid and is decreasing throughout the grid.}
\label{fig:residual_check}
\end{figure}

Figure \ref{fig:residual_check} visualizes the reduced automation residual used in Table \ref{tab:existence_checks}. A sufficient monotonicity condition is
\[
\kappa>\sup_{a\in[0,\bar a]}\frac{\partial \mathcal M(a,\tau)}{\partial a}.
\]
This inequality says that the convex marginal cost is large enough, relative to the slope of the equilibrium marginal benefit, to make the firm's reduced first-order-condition residual strictly decreasing. Hence the decentralized automation choice is unique whenever the residual crosses zero.

The monotonicity condition is used to obtain a unique interior decentralized automation level. If the reduced marginal benefit of automation is not decreasing sufficiently fast, the model may instead admit multiple stationary automation equilibria or a boundary solution. Economically, this corresponds to increasing returns, threshold effects, or organizational complementarities in AI adoption. In that case \(a^{D}\) should be defined as a global maximizer of firm value over the compact automation interval, and the numerical exercise should report all local roots or local maximizers when they exist. Multiple roots correspond to multiple AI-adoption regimes; an analysis of equilibrium selection across those regimes is left for future work.


\subsection{Empirical statistics and validation tests for the proxy regime diagnostic}
\label{app:regime_statistics_validation}

This appendix describes the statistics used to classify the current AI economy and the validation tests used to check the proxy regime diagnostic. The exercise is deliberately a sign test rather than a full structural estimate. The structural model determines which sign matters; the empirical exercise uses observable proxies to classify the sign of the local statistic.

\subsubsection{Statistic}

The target statistic is
\[
D_{\mu}(a^{D})
=
\left.
\frac{\partial}{\partial a}
\left[
\lambda C(a)+\mu B_U(a)
\right]
\right|_{a=a^{D}} .
\]
The economy is classified as a productivity-complementarity regime if \(D_{\mu}(a^{D})>0\), an adverse-incidence regime if \(D_{\mu}(a^{D})<0\), and a boundary regime if \(D_{\mu}(a^{D})\) is close to zero. In the empirical implementation, household cells \(b\in\mathcal B\) are indexed by occupation, education, wealth, age, and industry. The statistic is approximated by
\[
\widehat D_{\mu,t}
=
\lambda
\sum_{b\in\mathcal B}
\widehat\eta_b
\left[
\widehat{\Delta_a y^L_{b,t}}
+
\widehat\omega^K_{b,t}\widehat{\Delta_a\Pi_t}
\right]
+
\mu \widehat{\Delta_a B_{U,t}} .
\]
The first bracket captures the demand value of labor-income changes and capital-rent changes. The final term captures distributional concern for the exposed wage bill, the empirical analogue of \(B_U(a)\). This decomposition is useful because the available data are strongest for AI adoption, investment, equity ownership, and MPC heterogeneity, and weakest for the causal current-period mapping from AI adoption to labor-income changes.

\subsubsection{Proxy construction}

The statistic uses five groups of proxies.
\begin{enumerate}[label=(\roman*)]
\item \textbf{Automation intensity.} Firm-level and worker-level AI adoption measures come from the Census BTOS AI supplement and the Real-Time Population Survey. The Census supplement reports AI use by 18\% of firms during November 2025--January 2026, 32\% when employment-weighted, and worker-task AI use by firms representing 41\% of employment \citep{BonneyEtAl2026AIDiffusion}. The Real-Time Population Survey reports work-related GenAI use by about 41\% of workers in November 2025 \citep{Allen2026AIAdoption}.

\item \textbf{Private marginal benefit.} The calibration proxies \(\mathcal M(a)\) with AI-related investment. It uses 2026Q1 growth in real business equipment investment, intellectual-property-products investment, and data-center investment \citep{USTreasury2026TBAC}.

\item \textbf{Broad pass-through.} Labor productivity, real compensation, and labor share are aggregate pass-through diagnostics. In 2026Q1, nonfarm business labor productivity rose 0.8\% annualized and 2.9\% year over year, but real hourly compensation fell 0.5\% and the labor share was 54.1\% \citep{BLSProductivity2026Q1}.

\item \textbf{Ownership of automation rents.} The calibration uses Federal Reserve Distributional Financial Accounts shares of corporate equities and mutual fund shares. In 2025Q3, the top 1\% held 50.2\%, the 90th--99th wealth percentiles held 37.2\%, and the bottom 50\% held 1.1\% \citep{FREDTop1Equity2026,FRED9099Equity2026,FREDBottom50Equity2026}.

\item \textbf{MPC weights.} MPC-out-of-wealth estimates discipline the demand value of capitalized AI rents. The calibration uses Federal Reserve estimates of 0.8 cents per dollar for the top income quintile and 7.5 cents per dollar for the other quintiles \citep{Beach2025WealthMPC}.
\end{enumerate}

Combining the ownership and MPC statistics gives the equity-rent pass-through
\[
\widehat\eta^K
=
(0.502+0.372)(0.008)
+
\bigl[1-(0.502+0.372)\bigr](0.075)
=0.016442 .
\]
This means that a dollar of capitalized AI rent produces about 1.6 cents of consumption demand under the two-group mapping used in the calibration. The low value is a consequence of rent concentration among low-MPC equity owners.

\subsubsection{Baseline sign diagnostic}

The baseline sensitivity calculation normalizes the capital-rent gain to one dollar and sets \(\lambda=\mu=1\). Because the causal current-period labor-income effect of AI is not yet separately identified, the baseline is boundary-oriented: a small positive broad labor-income effect, \(0.005\), and a modest exposed wage-bill loss, \(-0.025\). These are normalized sensitivity inputs, not estimates of realized aggregate GDP, consumption, employment, or total labor-income effects. The resulting statistic is
\[
\widehat D_{\mu,2026}^{\mathrm{base}}
=
[0.016442+0.005]-0.025
=-0.003558 .
\]
The current economy is therefore classified as close to the boundary but slightly on the adverse-incidence side. Its sensitivity to the labor-income inputs motivates the calibration bands reported next.

\subsubsection{Calibration uncertainty bands}

The confidence language in the main text should be read as calibration uncertainty, not as a conventional confidence interval. The current exercise does not estimate \(\widehat D_{\mu,t}\) from a sampling design with standard errors. Instead, Table \ref{tab:empirical_uncertainty_bands} holds the equity-rent pass-through \(\widehat\eta^K=0.016442\) fixed and varies the two labor-pass-through inputs around the baseline. The table is generated from \texttt{data/current\_us\_uncertainty\_bands.csv}. It shows that the baseline negative value is close to zero: modestly stronger broad labor pass-through or modestly milder exposed-worker losses would move the statistic to the productivity-complementarity side. This is why the empirical classification is stated as near-boundary and slightly adverse-incidence rather than as a precise estimated sign.

\subsubsection{Sensitivity tests}

Table \ref{tab:appendix_regime_sensitivity} reports the sensitivity tests. The optimistic augmentation case assumes broad labor complementarity and small exposed-group losses, so the statistic is positive. The pessimistic displacement case assumes negative broad labor pass-through and larger exposed-group losses, so the statistic is negative. The broad productivity-dividend case mimics an electrification-style episode, while the neutral boundary case mechanically sets the statistic close to zero.

\begin{table}[!htbp]
\centering
\caption{Sensitivity tests for the empirical regime statistic}
\label{tab:appendix_regime_sensitivity}
\small
\begin{tabular}{l r l}
\toprule
Scenario & \(\widehat D_\mu\) & Classification \\
\midrule
Baseline near-boundary adverse incidence & -0.0036 & Adverse-incidence: \(a^P<a^{D}\) \\
Augmentation optimistic & 0.0439 & Productivity-complementarity: \(a^P>a^{D}\) \\
Displacement pessimistic & -0.0336 & Adverse-incidence: \(a^P<a^{D}\) \\
Broad productivity dividend & 0.0764 & Productivity-complementarity: \(a^P>a^{D}\) \\
Neutral boundary & 0.0000 & Boundary: \(D_\mu\approx0\) \\
\bottomrule
\end{tabular}
\end{table}

The sensitivity tests show that AI adoption alone does not determine the classification. The sign changes when broad labor-income pass-through becomes sufficiently positive. With concentrated rent ownership and incomplete labor-income pass-through, the current calibration lies slightly on the adverse-incidence side of the boundary.

\subsubsection{Historical validation tests}

The historical validation test asks whether the classifier assigns familiar automation episodes to plausible regimes. This appendix implements the exercise as a proxy backtest rather than only as a qualitative sign check. The backtest uses the same reduced statistic,
\[
\widehat D_\mu^{hist}
= \widehat{\Delta_a C}^{K}
+\widehat{\Delta_a Y}^{L,broad}
+\widehat{\Delta_a Y}^{L,exposed},
\]
where the first term is the demand value of capital-rent pass-through, the second term is broad labor-income pass-through, and the third term is the exposed-group labor-income term. This is still not a full structural historical estimation exercise: the historical episodes do not have the same household-bin microdata as the current U.S. AI episode, and the units of automation differ across robots, electrification, computerization, and steam mechanization. The object is a transparent proxy estimate used to test the classifier's sign.

The current-data calibration fixes the capital-rent demand pass-through at
\[
\widehat\eta^K
=(0.502+0.372)(0.008)+[1-(0.502+0.372)](0.075)
=0.016442 .
\]
For the current AI row, the backtest uses the same baseline inputs as the main empirical classification:
\[
0.016442+0.005-0.025=-0.003558 .
\]
For the industrial-robot row, the calculation uses the reduced-form estimates in \citet{acemoglu2020robots}: one additional robot per thousand workers reduces wages by 0.42 percent and the employment-population ratio by 0.2 percentage points. With a baseline employment-population ratio of 0.60, the employment margin is approximately \(-0.002/0.60=-0.0033\). Because the paper does not separately estimate a capital-rent pass-through term, the robot backtest uses the observable labor-income channel only:
\[
-0.0042-0.0033=-0.0075 .
\]
For older episodes without directly comparable microdata, the backtest uses conservative proxy values disciplined by the sign evidence: the industrial revolution is coded as near-boundary because productivity gains were real but early wage pass-through lagged \citep{Crafts2022IndustrialWages}; electrification is coded as a broad-productivity-dividend case because gains became broad after diffusion, complementary capital deepening, education, and new tasks \citep{GoldinKatz2008Race}; and computerization is coded as boundary/mixed because routine-task automation generated polarization rather than a single aggregate sign \citep{AutorDorn2013Polarization}.

\begin{table}[!htbp]
\centering
\caption{Historical proxy backtest for the proxy regime diagnostic}
\label{tab:appendix_historical_validation}
\scriptsize
\setlength{\tabcolsep}{3pt}
\begin{tabular}{p{0.19\textwidth}rccp{0.22\textwidth}}
\toprule
Episode & $\widehat D_\mu^{hist}$ & Pred. & Target & Source basis \\
\midrule
British industrial revolution & 0.0000 & boundary & boundary & Crafts (2022) \\
Electrification and mass production & 0.0764 & positive & productivity-complementarity & Goldin and Katz (2008) \\
Computerization and ICT & 0.0000 & boundary & mixed & Autor and Dorn (2013) \\
Industrial robots & -0.0075 & negative & adverse-incidence & Acemoglu and Restrepo (2020) \\
Current generative AI & -0.0036 & negative & adverse-incidence & Current-U.S. calibration \\
\bottomrule
\end{tabular}
\end{table}

The backtest classifies the main historical cases in the expected direction. Counting the computerization row as a mixed/boundary case, the generated backtest records four direct sign matches and one mixed/boundary match, with no sign misses. The strongest quantitative validation is the robot row, where the labor-income inputs come directly from reduced-form employment and wage estimates. The weakest rows are the older historical episodes, where the exercise is best read as a disciplined sign check rather than as a fully identified estimate of the general-equilibrium derivative.

{\small
\bibliographystyle{plainnat}
\bibliography{references}

\begin{thebibliography}{31}
\providecommand{\natexlab}[1]{#1}
\providecommand{\url}[1]{\texttt{#1}}
\expandafter\ifx\csname urlstyle\endcsname\relax
  \providecommand{\doi}[1]{doi: #1}\else
  \providecommand{\doi}{doi: \begingroup \urlstyle{rm}\Url}\fi

\bibitem[Acemoglu(2008)]{acemoglu2008}
Daron Acemoglu.
\newblock \emph{Introduction to Modern Economic Growth}.
\newblock Princeton University Press, 2008.

\bibitem[Acemoglu and Restrepo(2018)]{acemoglu2018}
Daron Acemoglu and Pascual Restrepo.
\newblock Artificial intelligence, automation and work.
\newblock NBER Working Paper 24196, National Bureau of Economic Research, 2018.
\newblock \url{https://www.nber.org/papers/w24196}.

\bibitem[Acemoglu and Restrepo(2019)]{acemoglu2019}
Daron Acemoglu and Pascual Restrepo.
\newblock Automation and new tasks: How technology displaces and reinstates
  labor.
\newblock \emph{Journal of Economic Perspectives}, 33\penalty0 (2):\penalty0
  3--30, 2019.
\newblock \doi{10.1257/jep.33.2.3}.

\bibitem[Acemoglu and Restrepo(2020{\natexlab{a}})]{acemoglu2020robots}
Daron Acemoglu and Pascual Restrepo.
\newblock Robots and jobs: Evidence from {US} labor markets.
\newblock \emph{Journal of Political Economy}, 128\penalty0 (6):\penalty0
  2188--2244, 2020{\natexlab{a}}.
\newblock \doi{10.1086/705716}.

\bibitem[Acemoglu and Restrepo(2020{\natexlab{b}})]{acemoglu2020wrong}
Daron Acemoglu and Pascual Restrepo.
\newblock The wrong kind of {AI}? artificial intelligence and the future of
  labour demand.
\newblock \emph{Cambridge Journal of Regions, Economy and Society}, 13\penalty0
  (1):\penalty0 25--35, 2020{\natexlab{b}}.
\newblock \doi{10.1093/cjres/rsz022}.

\bibitem[Achdou et~al.(2022)Achdou, Han, Lasry, Lions, and Moll]{achdou2022}
Yves Achdou, Jiequn Han, Jean-Michel Lasry, Pierre-Louis Lions, and Benjamin
  Moll.
\newblock Income and wealth distribution in macroeconomics: A continuous-time
  approach.
\newblock \emph{Review of Economic Studies}, 89\penalty0 (1):\penalty0 45--86,
  2022.
\newblock \doi{10.1093/restud/rdab002}.

\bibitem[Aiyagari(1994)]{aiyagari1994}
S.~Rao Aiyagari.
\newblock Uninsured idiosyncratic risk and aggregate saving.
\newblock \emph{Quarterly Journal of Economics}, 109\penalty0 (3):\penalty0
  659--684, 1994.
\newblock \doi{10.2307/2118417}.

\bibitem[Allen(2026)]{Allen2026AIAdoption}
Jeffrey~S. Allen.
\newblock Monitoring ai adoption in the u.s. economy.
\newblock Feds notes, Board of Governors of the Federal Reserve System, 2026.
\newblock URL
  \url{https://www.federalreserve.gov/econres/notes/feds-notes/monitoring-ai-adoption-in-the-u-s-economy-20260403.html}.

\bibitem[Arrow and Debreu(1954)]{arrowdebreu1954}
Kenneth~J. Arrow and Gerard Debreu.
\newblock Existence of an equilibrium for a competitive economy.
\newblock \emph{Econometrica}, 22\penalty0 (3):\penalty0 265--290, 1954.
\newblock \doi{10.2307/1907353}.

\bibitem[Autor and Dorn(2013)]{AutorDorn2013Polarization}
David~H. Autor and David Dorn.
\newblock The growth of low-skill service jobs and the polarization of the u.s.
  labor market.
\newblock \emph{American Economic Review}, 103\penalty0 (5):\penalty0
  1553--1597, 2013.
\newblock \doi{10.1257/aer.103.5.1553}.

\bibitem[Beach et~al.(2025)Beach, Gamber, and Moran]{Beach2025WealthMPC}
Samara Beach, William~L. Gamber, and Patrick Moran.
\newblock Wealth heterogeneity and consumer spending.
\newblock Feds notes, Board of Governors of the Federal Reserve System, 2025.
\newblock URL
  \url{https://www.federalreserve.gov/econres/notes/feds-notes/wealth-heterogeneity-and-consumer-spending-20250805.html}.

\bibitem[Bewley(1986)]{bewley1986}
Truman~F. Bewley.
\newblock Stationary monetary equilibrium with a continuum of independently
  fluctuating consumers.
\newblock In Werner Hildenbrand and Andreu Mas-Colell, editors,
  \emph{Contributions to Mathematical Economics in Honor of Gerard Debreu}.
  North-Holland, 1986.

\bibitem[{Board of Governors of the Federal Reserve
  System}(2026{\natexlab{a}})]{FRED9099Equity2026}
{Board of Governors of the Federal Reserve System}.
\newblock Share of corporate equities and mutual fund shares held by the 90th
  to 99th wealth percentiles, 2026{\natexlab{a}}.
\newblock URL \url{https://fred.stlouisfed.org/series/WFRBSN09149}.
\newblock Retrieved from FRED, Federal Reserve Bank of St. Louis; series
  WFRBSN09149.

\bibitem[{Board of Governors of the Federal Reserve
  System}(2026{\natexlab{b}})]{FREDBottom50Equity2026}
{Board of Governors of the Federal Reserve System}.
\newblock Share of corporate equities and mutual fund shares held by the bottom
  50\% (1st to 50th wealth percentiles), 2026{\natexlab{b}}.
\newblock URL \url{https://fred.stlouisfed.org/series/WFRBSB50203}.
\newblock Retrieved from FRED, Federal Reserve Bank of St. Louis; series
  WFRBSB50203.

\bibitem[{Board of Governors of the Federal Reserve
  System}(2026{\natexlab{c}})]{FREDTop1Equity2026}
{Board of Governors of the Federal Reserve System}.
\newblock Share of corporate equities and mutual fund shares held by the top
  1\% (99th to 100th wealth percentiles), 2026{\natexlab{c}}.
\newblock URL \url{https://fred.stlouisfed.org/series/WFRBST01122}.
\newblock Retrieved from FRED, Federal Reserve Bank of St. Louis; series
  WFRBST01122.

\bibitem[Bonney et~al.(2026)Bonney, Breaux, Dinlersoz, Foster, Haltiwanger, and
  Pande]{BonneyEtAl2026AIDiffusion}
Kathryn Bonney, Cory Breaux, Emin Dinlersoz, Lucia Foster, John Haltiwanger,
  and Aditya Pande.
\newblock The microstructure of ai diffusion: Evidence from firms, business
  functions, and worker tasks.
\newblock CES Working Paper CES-WP-26-25, U.S. Census Bureau, Center for
  Economic Studies, 2026.
\newblock URL
  \url{https://www2.census.gov/library/working-papers/2026/adrm/ces/CES-WP-26-25.pdf}.

\bibitem[Crafts(2022)]{Crafts2022IndustrialWages}
Nicholas Crafts.
\newblock Slow real wage growth during the industrial revolution: Productivity
  paradox or pro-rich growth?
\newblock \emph{Oxford Economic Papers}, 74\penalty0 (1):\penalty0 1--23, 2022.
\newblock \doi{10.1093/oep/gpab020}.

\bibitem[D{\'a}vila and Korinek(2018)]{davilakorinek2018}
Eduardo D{\'a}vila and Anton Korinek.
\newblock Pecuniary externalities in economies with financial frictions.
\newblock \emph{Review of Economic Studies}, 85\penalty0 (1):\penalty0
  352--395, 2018.
\newblock \doi{10.1093/restud/rdx010}.

\bibitem[Geanakoplos and Polemarchakis(1986)]{geanakoplospolemarchakis1986}
John~D. Geanakoplos and Heraklis~M. Polemarchakis.
\newblock Existence, regularity, and constrained suboptimality of competitive
  allocations when the asset market is incomplete.
\newblock In Walter~P. Heller, Ross~M. Starr, and David~A. Starrett, editors,
  \emph{Uncertainty, Information, and Communication: Essays in Honor of Kenneth
  J. Arrow}, volume~3, pages 65--96. Cambridge University Press, 1986.
\newblock \doi{10.1017/CBO9780511983566.007}.

\bibitem[Goldin and Katz(2008)]{GoldinKatz2008Race}
Claudia Goldin and Lawrence~F. Katz.
\newblock \emph{The Race between Education and Technology}.
\newblock Harvard University Press, 2008.

\bibitem[{Goldman Sachs Research}(2023)]{goldman2023gdp}
{Goldman Sachs Research}.
\newblock Generative {AI} could raise global {GDP} by 7 percent.
\newblock
  \url{https://www.goldmansachs.com/insights/articles/generative-ai-could-raise-global-gdp-by-7-percent},
  2023.
\newblock Research note.

\bibitem[{Goldman Sachs Research}(2026)]{goldman2026labor}
{Goldman Sachs Research}.
\newblock How will {AI} affect the {US} labor market?
\newblock
  \url{https://www.goldmansachs.com/insights/articles/how-will-ai-affect-the-us-labor-market},
  2026.
\newblock Research note.

\bibitem[Gu et~al.(2024)Gu, Lauri{\`e}re, Merkel, and Payne]{gu2024}
Zhouzhou Gu, Mathieu Lauri{\`e}re, Sebastian Merkel, and Jonathan Payne.
\newblock Global solutions to master equations for continuous time
  heterogeneous agent macroeconomic models, 2024.
\newblock \url{https://arxiv.org/abs/2406.13726}.

\bibitem[Hicks(1932)]{hicks1932}
John~R. Hicks.
\newblock \emph{The Theory of Wages}.
\newblock Macmillan, 1932.

\bibitem[Huggett(1993)]{huggett1993}
Mark Huggett.
\newblock The risk-free rate in heterogeneous-agent incomplete-insurance
  economies.
\newblock \emph{Journal of Economic Dynamics and Control}, 17\penalty0
  (5--6):\penalty0 953--969, 1993.
\newblock \doi{10.1016/0165-1889(93)90005-M}.

\bibitem[{Internal Revenue Service}(2026)]{IRSEITC2026}
{Internal Revenue Service}.
\newblock Earned income tax credit ({EITC}), 2026.
\newblock URL
  \url{https://www.irs.gov/credits-deductions/individuals/earned-income-tax-credit-eitc}.

\bibitem[Mas-Colell et~al.(1995)Mas-Colell, Whinston, and Green]{mascolell1995}
Andreu Mas-Colell, Michael~D. Whinston, and Jerry~R. Green.
\newblock \emph{Microeconomic Theory}.
\newblock Oxford University Press, 1995.

\bibitem[{McKinsey Global Institute}(2023)]{mckinsey2023genai}
{McKinsey Global Institute}.
\newblock The economic potential of generative {AI}: The next productivity
  frontier.
\newblock
  \url{https://www.mckinsey.com/capabilities/tech-and-ai/our-insights/the-economic-potential-of-generative-ai-the-next-productivity-frontier},
  2023.
\newblock Report.

\bibitem[Tobin(1969)]{tobin1969}
James Tobin.
\newblock A general equilibrium approach to monetary theory.
\newblock \emph{Journal of Money, Credit and Banking}, 1\penalty0 (1):\penalty0
  15--29, 1969.
\newblock \doi{10.2307/1991374}.

\bibitem[{U.S. Bureau of Labor Statistics}(2026)]{BLSProductivity2026Q1}
{U.S. Bureau of Labor Statistics}.
\newblock Productivity and costs: First quarter 2026, preliminary, 2026.
\newblock URL \url{https://www.bls.gov/news.release/pdf/prod2.pdf}.

\bibitem[{U.S. Department of the Treasury}(2026)]{USTreasury2026TBAC}
{U.S. Department of the Treasury}.
\newblock Economic policy statements to tbac: 2026, 2nd quarter, 2026.
\newblock URL \url{https://home.treasury.gov/news/press-releases/sb0486}.

\end{thebibliography}
}

\end{document}